\newtheorem{prp}{Proposition}
\newtheorem{lem}{Lemma}
\newtheorem{cor}{Corollary}
\newtheorem{dfn}{Definition}
\newtheorem{ass}{Assumption}
\newtheorem*{logistic-model}{Logistic model}
\newtheorem{linear-model}{Linear model}
\theoremstyle{remark}
\newtheorem{ex}{Example}
\def\beq{\begin{equation}} 
\def\eeq{\end{equation}}
\def\beqn{\begin{eqnarray*}}
\def\eeqn{\end{eqnarray*}}
\def\Bal{\begin{align}}
\def\Eal{\end{align}}
\def\Bitem{\begin{itemize}\setlength{\itemsep}{.2in}}
\def\bitem{\begin{itemize}\setlength{\itemsep}{.05in}}
\def\eitem{\end{itemize}}
\def\blatin{\begin{enumerate}\setlength{\itemsep}{.05in}\renewcommand{\labelenumi}{\roman{enumi}.}}
\def\elatin{\end{enumerate}}
\def\Benum{\begin{enumerate}\setlength{\itemsep}{.2in}}
\def\benum{\begin{enumerate}\setlength{\itemsep}{.05in}}
\def\eenum{\end{enumerate}}
\def\bmult{\begin{multline*}}
\def\emult{\end{multline*}}
\def\bcenter{\begin{center}}
\def\ecenter{\end{center}}
\def\bframe{\begin{frame}}
\def\eframe{\end{frame}}
\DeclareMathOperator{\tr}{tr}
\def\cA{\mathcal{A}}
\def\cN{\mathcal{N}}
\def\cX{\mathcal{X}}
\def\bbE{\mathbb{E}}
\def\bbP{\mathbb{P}}
\def\bbR{\mathbb{R}}
\def\ind{\mathds{1}}
\renewcommand{\P}{\operatorname{\mathbb{P}}}
\newcommand{\Var}{\operatorname{Var}}
\newcommand{\indep}{\perp \!\!\! \perp}
\let\lac\{
\let\rac\}
\def\1{\mathbbm{1}}
\newcommand{\vp}{\varphi}
\newcommand\wh{\widehat}
\newcommand{\mand}{\quad\text{and}\quad}
\newcommand{\where}{\quad\text{where}\quad}
\newcommand{\with}{\quad\text{with}\quad}
\newcommand{\IF}{\mathrm{IF}}
\definecolor{coolblue}{RGB}{65, 130, 205}
\definecolor{blgrey}{rgb}{0.6,0.6,0.6}
\definecolor{bblue}{rgb}{0.855,0.933,0.98}
\definecolor{dblue}{HTML}{5297D6}
\definecolor{gainred}{rgb}{0.1,0.5,0.3}
\definecolor{citecolor}{HTML}{0071BC}
\definecolor{linkcolor}{HTML}{ED1C24}
\definecolor{skorange}{RGB}{255,127,14}
\title{A Unified Framework for the Transportability of Population-Level Causal Measures}
\author{
  Ahmed Boughdiri \\
  Premedical Inria \\
  Univ. Montpellier, France\\
  \texttt{ahmed.boughdiri@inria.fr}
  \and
  Clément Berenfeld \\
  Premedical Inria \\
  Univ. Montpellier, France\\
  \texttt{clement.berenfeld@inria.fr}
  \and
  Julie Josse \\
  Premedical Inria \\
  Univ. Montpellier, France\\
  \texttt{julie.josse@inria.fr}
  \and
  Erwan Scornet \\
  Sorbonne Université and Université Paris Cité, CNRS\\
  Laboratoire de Probabilités, Statistique et Modélisation\\
  F-75005 Paris, France\\
  \texttt{erwan.scornet@sorbonne-universite.fr}
}
\date{}
\begin{document}

\maketitle

\begin{abstract}
Generalization methods offer a powerful solution to one of the key drawbacks of randomized controlled trials (RCTs): their limited representativeness. By enabling the transport of treatment effect estimates to target populations subject to distributional shifts, these methods are increasingly recognized as the future of meta-analysis, the current gold standard in evidence-based medicine.
Yet most existing approaches focus on the risk difference, overlooking the diverse range of causal measures routinely reported in clinical research. Reporting multiple effect measures—both absolute (e.g., risk difference, number needed to treat) and relative (e.g., risk ratio, odds ratio)—is essential to ensure clinical relevance, policy utility, and interpretability across contexts.
To address this gap, we propose a unified framework for transporting a broad class of first-moment population causal effect measures under covariate shift. We provide identification results under two conditional exchangeability assumptions, derive both classical and semiparametric estimators, and evaluate their performance through theoretical analysis, simulations, and real-world applications. Our analysis shows the specificity of different causal measures and thus the interest of studying them all: for instance, two common approaches (one-step, estimating equation) lead to similar estimators for the risk difference but to two distinct estimators for the odds ratio.
\end{abstract}

\section{Introduction}

Generalization methods \cite{pearl2011transportability, stuart2011use,dahabreh2020extending, colnet2024causal,degtiar2023review} have emerged as a powerful response to the restricted external validity \cite{rothwell2005external} of RCTs: due to stringent inclusion and exclusion criteria, RCT populations often exclude key segments of the real-world patient population—such as individuals with comorbidities, pregnant women, or other vulnerable groups—resulting in trial samples that are poorly representative.
Consequently, the findings of many RCTs may lack relevance for broader clinical or policy applications.
Generalization techniques address this gap by exploiting treatment effect heterogeneity—that is, the fact that treatment efficacy can vary systematically with patient characteristics.
By adjusting for differences in the distribution of these covariates between the trial population and a target population, these methods can estimate treatment effects beyond the original study population. This is especially valuable given the high costs, long timelines,
and operational complexity of conducting new trials. As such, generalization approaches are increasingly viewed 
as a pivotal step toward rethinking the role of clinical trials in modern evidence generation.
The implications are far-reaching. 
For instance, when drug reimbursement decisions are partly tied to estimated real-world efficacy \cite{SanteGouv}, the ability to predict treatment benefits across diverse populations could influence pricing, access, and healthcare policy. 
 Moreover, recent works suggest that generalization methods may also redefine the role of meta-analysis \cite{dahabreh2023efficient, rott2024causally, hong2025estimating}, traditionally viewed as the top of the pyramid of evidence-based medicine. 

Most generalization methods are dedicated to estimating the average treatment effect (ATE) via the risk difference (RD), owing to its linearity and analytical convenience. Yet this focus is incomplete. Clinical guidelines and regulatory bodies explicitly recommend reporting both absolute and relative effect measures \citep{schulz2010consort, CONSORT2010}, as they
capture complementary aspects of treatment impact.
Among absolute measures, the RD remains standard, but the number needed to treat (NNT), a direct, clinically intuitive transformation of the RD, offers additional interpretability \citep{Laupacis1988AnAssessmentOfClinically}. On the relative scale, measures such as the risk ratio (RR), widely used in public health research \citep{king2012use}, and the odds ratio (OR), very popular in epidemiology \citep{Greenland1987Interpretation}, play a critical role in framing treatment efficacy.
Presenting multiple causal estimands is not simply recommended—it is essential. A treatment effect that appears homogeneous on one scale may reveal heterogeneity on another, a phenomenon that may seem counterintuitive but carries significant implications for personalized decision-making. Furthermore, the perceived magnitude of benefit can shift dramatically depending on the baseline risk and the effect measure used. Consider a treatment that reduces mortality from 3\% to 1\%: the RD of 0.02 may suggest a minor effect, yet the RR reveals a threefold increase in risk for the untreated, reframing the impact as clinically substantial. This striking contrast illustrates how the choice of causal measure directly shapes interpretation and ultimately, policy and clinical decisions.
%

\paragraph{Contributions.} In Section~\ref{sec:Problem Setting}, we present a unified framework for transporting a broad class of \textit{first-moment population causal effect measures}, defined as functionals of the expectations of potential outcomes. This class, composed of more than a dozen widely-used estimands, includes collapsible (RD, RR, etc.) and non-collapsible measures (odds ratio-OR, NNTetc.) the latter posing unique generalization challenges. 
%
Building on this formalism, we develop generic identification strategies under two key assumptions: (i) exchangeability in mean (\Cref{sec:gen_exchangeability_mean}), and (ii) exchangeability in effect measure (\Cref{sec:gen_exchangeability_cate}). The latter assumption, though weaker, requires access to control outcomes in the target population---a condition met, for instance, when treatment has not yet been deployed. Crucially, our identification results extend to non-collapsible measures even under this weaker condition, which to our knowledge has not been previously derived.
Within each setting, we derive two broad families of estimators. The first approach adapts classical methods—such as weighting (Horvitz-Thompson)
and regression-based strategies (G-formula)—for which we establish asymptotic properties and derive closed-form variance expressions. To the best of our knowledge, 
no prior work has formally studied these properties using density ratio estimation. The second leverages semiparametric efficiency theory to construct new doubly-robust estimators  using either one-step or estimating equation approaches. While these often coincide in the linear case (e.g., RD), we highlight that for nonlinear measures such as the RR or OR, they diverge—leading to fundamentally different estimators.
%
Finally, in \Cref{sec:simulation} we conduct an extensive empirical evaluation of all estimators on synthetic and a real-world dataset.

\paragraph{Related work.}
Most generalization work focused on RD. While the idea of weighting randomized controlled trials  (RCTs) is not new, using external data can be traced back to the foundational work of \cite{cole2010generalizing}. \cite{degtiar2023review, colnet2024causal} provided a comprehensive survey of generalization techniques, and  \cite{Colnet_2022} contributed additional consistency results for the main classes of estimators: regression-based, weighting-based, and hybrid approaches. 
\cite{colnet2024reweighting} further investigated the inverse probability of sampling weighting (IPSW) estimator, deriving finite-sample bias and variance, as well as an upper bound on its risk under the assumption that the covariates 
$X$ are categorical. Other studies have addressed the generalization problem by modeling the outcome directly \citep{kern2016assessing, dahabreh2019generalizing}.
\cite{dahabreh2023efficient} extended this line of research to settings involving multiple source datasets (i.e., several RCTs) and a set of covariates drawn from a distinct target distribution. They position this framework as a natural evolution of traditional meta-analysis, highlighting its potential to unify evidence across studies. Although the present work focuses on a single RCT and a single target population, all of our  results seamlessly extend to the multi-RCT case. This generality allows us to offer a compelling alternative to classical meta-analyses—one that supports the generalization of both absolute and relative treatment effect measures in a unified framework. 
We also highlight recent works on the analysis of externally controlled single-arm trials and hybrid trials \citep{campbell2025doublyrobustaugmentedweighting,liu2025robustestimationinferencehybrid}, which aim to enhance the precision of RCT-based estimands by incorporating external information. Similarly, \citep{demirel2024prediction} incorporate observational data to RCTs using the prediction-powered inference framework.
While these approaches address distinct inferential questions, they share structural similarities with ours in using auxiliary data to improve estimation. 
Another line of work has focused on estimating alternative causal effect measures beyond the risk difference. For instance, \cite{boughdiri2024quantifyingtreatmenteffectsestimating} proposes several strategies for estimating the risk ratio (RR) in observational settings, without addressing the generalization to a target population. 
\cite{colnet2023risk} 
examined several key properties (e.g. collapsibility) of different causal effect measures (e.g., RD, RR and OR) aiming at identifying which measures is less sensitive to  distributional shifts.
However, unlike our work, they did not propose any estimation procedures.
\cite{richardson2017modeling,yadlowsky2021estimation,shirvaikar2023targeting} introduce methods for estimating conditional risk ratios, including approaches based on causal forests. Yet, it is crucial to emphasize that due to the non-linearity of relative measures, estimating the CATE does not directly yield the ATE. 

\section{Problem setting: notation and identification assumptions}\label{sec:Problem Setting}
Following the potential outcomes framework \citep{RubinDonaldB1974Eceo,Neyman}, we consider random variables \((X, S, A, Y^{(0)}, Y^{(1)})\), where \(X \in \mathbb{R}^p\) denotes patient covariates, \(S \in \{0,1\}\) indicates sample membership (\(S=1\) for the source population and \(S=0\) for the target population), and \(A \in \{0,1\}\) denotes treatment assignment (\(A=1\) if treatment is administered and \(A=0\) otherwise). Potential outcomes \(Y^{(0)}\) and \(Y^{(1)}\) represent outcomes under control and treatment respectively, of which only one is observed per individual depending on the assigned treatment, yielding the observed outcome
\[
Y = A Y^{(1)} + (1 - A) Y^{(0)}.
\]
We observe data from two distinct populations: a Randomized Controlled Trial (RCT) dataset \((X_i, A_i, Y_i)_{i \in [n]}\) from the source population, where treatment assignments and outcomes are recorded, and a target dataset \((X_i)_{i \in [m]}\), where only covariates are observed. This reflects the practical constraint that treatment and outcome data are collected only for individuals enrolled in the trial, while covariate information is available for both populations. We model this two datasets as stemming from one probability distribution $P_{\mathrm{obs}}$ and we observe a $N$-sample:
\[
(Z_i)_{i\in[N]} := \left(S_i, X_i, S_i A_i, S_i Y_i\right)_{i \in [N]} \sim P_{\mathrm{obs}}^{\otimes N},
\]
Let $\alpha$ be the Bernoulli parameter of the random variable $S$. Thus,  $n$ and $m$ are two binomial random variables of parameters $(N,\alpha)$ and $(N,1-\alpha)$. Let \(P_\mathrm{S}\) and \(P_\mathrm{T}\) denote the conditional distributions of \(Z\) given \(S=1\) and \(S=0\), respectively, with corresponding expectations \(\mathbb{E}_\mathrm{S}[\cdot]\) and \(\mathbb{E}_\mathrm{T}[\cdot]\). 

\subsection{Causal estimands of interest}
Causal effect measures, as formalized by \citet{Pearl_2009}, are expressed in terms of the joint distribution of potential outcomes. Individual-level causal effects rely on this joint distribution, which is generally not identifiable from observed data. Therefore, in this paper, we focus on a specific subclass: the first-moment population causal measures \cite{fay2024causal}. This class includes commonly employed estimands—such as RD, RR, and OR—and less frequently used quantities, including the Switch Relative Risk \citep{huitfeldt2022shallcountlivingdead}, Excess Risk Ratio (ERR) \citep{cole1971attributable}, Survival Ratio (SR), and Relative Susceptibility (RS), and Log Odds Ratio (log-OR), see \Cref{appendix:First Moment Causal Measures} for a detailed enumeration of measures falling into this class.
\begin{dfn}[First moment population causal measures] \label{dfn:First Moment Causal Measures}
Let $P$ denote the joint distribution of potential outcomes $(Y^{(0)}, Y^{(1)})$. The quantity $\tau^P$ is a \textit{first moment population causal measure} if there exists an \emph{effect measure} $\Phi: D_\Phi \to \mathbb{R}$, with $D_\Phi \subset \mathbb{R}^2$, such that for all distributions $P$ with $(\bbE_P [Y^{(0)}], \bbE_P[Y^{(1)}]) \in D_\Phi$,
\begin{align}
\tau^P := \Phi \left( \mathbb{E}_P[Y^{(1)}], \mathbb{E}_P[Y^{(0)}] \right).
\end{align}
We require that for all $\psi_0 \in \bbR$, the map $\psi_1 \mapsto \Phi(\psi_1,\psi_0)$ is injective on its definition domain. It's inverse, when it exists, is denoted by $\Gamma(\cdot,\psi_0)$ and is called the \emph{effect function}. This definition extends to subpopulations by conditioning on covariates $X$, yielding, when it exists,  the conditional effect
\begin{align}
\tau^P(X) := \Phi \left( \mathbb{E}_P[Y^{(1)}|X], \mathbb{E}_P[Y^{(0)}|X] \right). \label{eq cond causal measure}
\end{align}
\end{dfn}
\begin{ex} For these well-known causal measures, the effect measures and functions are given by:

\begin{center}
\renewcommand{\arraystretch}{1.5} 
\setlength{\tabcolsep}{10pt} 
\begin{tabular}{|p{5.7cm}|p{4.cm}|p{4.cm}|} 
\hline
\textbf{Measure} & \textbf{Effect Measure} $\Phi$ & \textbf{Effect Function} $\Gamma$ \\
\hline
{\small \textbf{Risk Difference (RD)}} & $\Phi(\psi_1,\psi_0) = \psi_1 - \psi_0$ & $\Gamma(\tau,\psi_0) = \psi_0 + \tau$ \\
\hline
{\small\textbf{Risk Ratio (RR)}} & $\Phi(\psi_1,\psi_0) = \frac{\psi_1}{\psi_0}$ & $\Gamma(\tau,\psi_0) = \tau \cdot \psi_0$ \\
\hline
{\small\textbf{Odds Ratio (OR)}} & $\Phi(\psi_1,\psi_0) = \frac{\psi_1}{1 - \psi_1} \cdot \frac{1 - \psi_0}{\psi_0}$ & $\Gamma(\tau,\psi_0) = \frac{\tau \cdot \psi_0}{1 + \tau \cdot \psi_0 - \psi_0}$ \\
\hline
{\small\textbf{Number Needed to Treat (NNT)}} & $\Phi(\psi_1,\psi_0) = \frac{1}{\psi_1 - \psi_0}$ & $\Gamma(\tau,\psi_0) = \frac{1}{\tau} + \psi_0$ \\
\hline
{\small\textbf{Excess Risk Ratio (ERR)}} & $\Phi(\psi_1,\psi_0) = \frac{\psi_1 - \psi_0}{\psi_0}$ & $\Gamma(\tau,\psi_0) = \tau(1 + \psi_0)$ \\
\hline
{\small\textbf{Survival Ratio (SR)}} & $\Phi(\psi_1,\psi_0) = \frac{1 - \psi_1}{1 - \psi_0}$ & $\Gamma(\tau,\psi_0) = 1 - \tau(1 - \psi_0)$ \\
\hline
\end{tabular}
\end{center}
\end{ex}

The existence of \(\Gamma\) ensures that for a fixed baseline, distinct treatment responses yield different causal measures. In the following, we use $\tau_{\Phi}^P$ to denote any first moment population causal measure evaluated under distribution $P$. Our objective is to estimate $\tau_\Phi^\mathrm{T} := \Phi\left(\mathbb{E}_\mathrm{T}[Y^{(1)}], \mathbb{E}_\mathrm{T}[Y^{(0)}]\right)$, for all effect measures $\Phi$, where the expectations are taken with respect to the target population distribution.

\subsection{Identification assumptions}

Because the effect \(\tau_\Phi^\mathrm{T}\) is defined in terms of potential outcomes in the target population, it is not directly identifiable from trial data alone. The core difficulty lies in the fact that while treatment assignment in the trial is randomized, the trial sample may not be representative of the broader target population. To ensure identification of the estimand from observed data, we introduce standard causal inference assumptions (\Cref{ass:internal validity}) 
and assume covariate overlap (\Cref{ass:Overlap}).
\begin{ass}[Trial’s Internal Validity]\label{ass:internal validity} 
The RCT is assumed to be internally valid, that is:
\begin{enumerate}[topsep=-3pt, itemsep=2pt, parsep=0pt, leftmargin = 23pt]
    \item \textbf{Ignorability:} \( (Y^{(1)}, Y^{(0)}) \indep A \mid S=1 \);
    \item \textbf{Stable Unit Treatment Value Assumption (SUTVA):} \( Y = A Y^{(1)} + (1 - A) Y^{(0)} \);
    \item \textbf{Positivity and Randomized Assignment:} \( A \sim \mathcal{B}(\pi) \), where \( 0 < \pi < 1 \) (typically \( \pi = 0.5 \)).
\end{enumerate}
\end{ass} 
\begin{ass}[Overlap]\label{ass:Overlap}  For all \(x \in \mathrm{supp}(P_\mathrm{T})\), we have \( \mathbb{P}(S=1 \mid X=x) > 0\).
\end{ass}

\section{Generalization under exchangeability in mean}\label{sec:gen_exchangeability_mean}

In addition to internal validity and covariate overlap, identification under this setting requires a transportability condition that links the distribution of potential outcomes between the trial and target populations. This condition can be expressed as conditional exchangeability in mean, which states that, conditional on covariates, the average potential outcomes are the same across populations.

\begin{ass}[Exchangeability in mean]\label{ass:strong transportability}
\phantomsection
For all \(x \in \mathrm{supp}(P_\mathrm{S}) \cap \mathrm{supp}(P_\mathrm{T})\) and for all \(a \in \{0,1\}\), we have
\(\mu_{(a)}^\mathrm{S}(x) = \mu_{(a)}^\mathrm{T}(x)\) where \(\mu_{(a)}^P(x) = \mathbb{E}_P[Y^{(a)} \mid X = x]\).
\end{ass}

Leveraging observations from the randomized controlled trial and under \Cref{ass:internal validity} to \ref{ass:strong transportability}, three identification formulas that express \(\mathbb{E}_\mathrm{T}[Y^{(a)}]\) in terms of source population quantities can be derived:
\begin{align}
\bbE_\mathrm{T}[Y^{(a)}] &= \bbE_\mathrm{T}\left[\bbE_\mathrm{S}[Y^{(a)}|X]\right] \quad &&\textit{(transporting conditional outcomes)} \label{id:transport pot outcomes}\\
&= \bbE_\mathrm{S}\left[\frac{P_\mathrm{T}(X)}{P_\mathrm{S}(X)}Y^{(a)}\right]  \quad &&\textit{(weighting outcomes)}\label{id:reweight outcomes}\\
&= \bbE_\mathrm{S}\left[\frac{P_\mathrm{T}(X)}{P_\mathrm{S}(X)}\bbE_\mathrm{S}[Y^{(a)}|X]\right] \quad &&\textit{(weighting conditional outcomes)},\label{id:reweight pot outcomes}
\end{align}
see \Cref{appendix:strong transportability Identification} for details. In the next section, we present several estimators of any first moment population causal measure, based on the three  identification formulas above. 

\subsection{Weighting and regression strategies  under exchangeability in mean}
The Horvitz–Thompson estimator \citep[][]{horvitz1952generalization} is probably one of the most simple estimators to use in a RCT. 
Based on equation
\eqref{id:reweight outcomes}, we construct the weighted Horvitz-Thompson estimator. The following assumption is also required for technical reasons.
\begin{ass}\label{ass:integrability}
For \(a \in \{0,1\}\), we assume that \(Y^{(a)}\) and \(P_\mathrm{T}(X) Y^{(a)} / {P_\mathrm{S}(X)}\) are square-integrable.
\end{ass}

\begin{dfn}[Weighted Horvitz-Thompson]\label{dfn:R HT N} 
For any \textit{first moment population causal measure} \(\Phi\), we define the weighted Horvitz-Thompson estimator \( \wh \tau_{\Phi,\text{wHT}}\) as follows:
\begin{align}
   \wh \tau_{\Phi, \text{wHT}}  &= \Phi\left(\frac{1}{n} \sum_{S_i=1}  \wh r(X_i)\frac{A_iY_i}{\pi} ,\frac{1}{n} \sum_{S_i=1} \wh r(X_i)\frac{(1-A_i)Y_i}{1-\pi} \right),
   \label{dfn:R_HT}
\end{align}
where \(\wh r(X)\) is any estimator of the density ratio between the target and source covariate distributions.

\end{dfn}
When $\wh r$ is replaced by the oracle quantity $r$, we  show that $\wh \tau_{\Phi, \text{wHT}}$ is asymptotically normal and an unbiased estimator of \(\tau_{\Phi}^\mathrm{T}\) (see \Cref{prp:Oracle weighted Horvitz-Thompson/Neyman} in \Cref{appendix:Oracle weighted Horvitz-Thompson/Neyman}). 
Since
\[
r(X) = \frac{P_\mathrm{T}(X)}{P_\mathrm{S}(X)} = \frac{\P(S=1)\P(S=0|X=x)}{\P(S=0)\P(S=1|X=x)},
\]
one can estimate \(r(X)\) by estimating \(\P(S=1 \mid X=x)\) via a logistic regression model. By doing so, we avoid imposing a parametric model, such as a Gaussian distribution, on the distribution of $X$  \citep[see also][]{kanamori2010theoretical}. 
In this context, the \( M \)-estimation framework \cite{Stefanski2002Mestimation} can then be applied to derive asymptotic variances for this estimator.

\begin{logistic-model}\label{ass:logistic}
We assume \( \mathbb{E}[X X^\top] \) is positive definite, \( X \) is Sub-Gaussian, and there exists $\exists \beta_\infty = (\beta_\infty^0, \beta_\infty^1) \in \mathbb{R}^{p+1}$ s.t. 
$
\mathbb{P}(S = 1 | X) = \sigma(X, \beta_\infty) =\{1 + \exp(-X^\top \beta_\infty^1 - \beta_\infty^0)\}^{-1}$, a.s.
\end{logistic-model}

\begin{prp}[asymptotic normality of weighted Horvitz-Thompson estimator]\label{prp:Logistic weighted Horvitz-Thompson/Neyman}
Grant \Cref{ass:internal validity} to \ref{ass:integrability}. Let \( \wh \beta_N \) denote the maximum likelihood estimate (MLE) obtained from logistic regression of the selection indicator \(S\) on covariates \(X\). Define the estimated density ratio as
\begin{align}\label{eq:density_ratio_estimation}
     r(x, \wh \beta_N) = \frac{n}{N - n} \cdot \frac{1 - \sigma(x, \wh \beta_N)}{\sigma(x, \wh \beta_N)}, \qquad \text{with} \quad n = \sum_{i=1}^N S_i.
\end{align}
Under the logistic model, the Horvitz-Thompson estimator $\wh \tau_{\Phi,\text{HT}}$ weighted by the estimated ratio $ r(x, \wh \beta_N)$ is asymptotically unbiased and satisfies  $\sqrt{N} \left(\wh \tau_{\Phi,\text{wHT}} - \tau_{\Phi}^\mathrm{T} \right) \stackrel{d}{\rightarrow} \mathcal{N}\left(0, V_{\Phi,\text{HT}} \right)$.
\end{prp}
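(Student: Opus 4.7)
The plan is to cast the problem as a joint Z-estimation for the stacked parameter $\theta = (\beta, \psi_1, \psi_0)$, derive asymptotic normality of $(\wh\beta_N,\wh\psi_1,\wh\psi_0)$ via the standard sandwich formula, and then apply the delta method through the map $\Phi$ at $(\bbE_\mathrm{T}[Y^{(1)}],\bbE_\mathrm{T}[Y^{(0)}])$. Writing $\pi_1=\pi$, $\pi_0=1-\pi$, and unfolding $r(x,\wh\beta_N)$ in \eqref{eq:density_ratio_estimation}, the identity $n\wh\psi_a = \frac{n}{N-n}\sum_i S_i \frac{1-\sigma(X_i,\wh\beta_N)}{\sigma(X_i,\wh\beta_N)}\frac{\ind\{A_i=a\}Y_i}{\pi_a}$ shows that $\wh\psi_a$ is exactly the root of the empirical moment equation
$$\frac{1}{N}\sum_{i=1}^N \left\{ S_i \frac{1-\sigma(X_i,\wh\beta_N)}{\sigma(X_i,\wh\beta_N)}\frac{\ind\{A_i=a\}Y_i}{\pi_a} - (1-S_i)\,\wh\psi_a\right\} = 0,$$
which, stacked with the logistic score $\frac{1}{N}\sum_i(S_i-\sigma(X_i,\wh\beta_N))(X_i^\top,1)^\top=0$, yields a clean M-estimation system $\frac{1}{N}\sum_i\psi(Z_i;\wh\theta_N)=0$. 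This reformulation replaces the random rescaling by $n$ or $N-n$ with ordinary $N$-averages, which is the natural setting for the Z-estimation machinery.

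Under the logistic model and \Cref{ass:internal validity} to \ref{ass:Overlap}, I verify that $\bbE[\psi(Z;\theta_\infty)]=0$ at $\theta_\infty=(\beta_\infty,\bbE_\mathrm{T}[Y^{(1)}],\bbE_\mathrm{T}[Y^{(0)}])$. The key computation is the Bayes identity $(1-\sigma(X,\beta_\infty))/\sigma(X,\beta_\infty) = \frac{1-\alpha}{\alpha}P_\mathrm{T}(X)/P_\mathrm{S}(X)$, which combined with \Cref{ass:strong transportability} and randomization gives $\bbE[S\cdot(1-\sigma(X,\beta_\infty))/\sigma(X,\beta_\infty)\cdot\ind\{A=a\}Y/\pi_a]=(1-\alpha)\bbE_\mathrm{T}[Y^{(a)}]$, matching $\bbE[(1-S)\psi_a]$ precisely when $\psi_a=\bbE_\mathrm{T}[Y^{(a)}]$. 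Standard Z-estimation results then yield $\sqrt{N}(\wh\theta_N-\theta_\infty)\stackrel{d}{\rightarrow}\mathcal{N}(0,A^{-1}BA^{-\top})$ with $A=\bbE[-\partial_\theta\psi(Z;\theta_\infty)]$ and $B=\bbE[\psi(Z;\theta_\infty)\psi(Z;\theta_\infty)^\top]$; the matrix $A$ is block lower-triangular since the $\psi_a$-equations do not feed back into the $\beta$-block, so the $(\psi_1,\psi_0)$-block of the sandwich can be written explicitly. Consistency of $\wh\beta_N$ follows from standard results for well-specified logistic MLE under sub-Gaussian covariates, and consistency of $\wh\psi_a$ then follows from a uniform LLN on the $\psi_a$-block. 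A final application of the delta method with $(\psi_1,\psi_0)\mapsto\Phi(\psi_1,\psi_0)$, which is differentiable at $(\bbE_\mathrm{T}[Y^{(1)}],\bbE_\mathrm{T}[Y^{(0)}])$ for all measures in the running example, gives the announced convergence with $V_{\Phi,\text{HT}}=(\nabla\Phi)^\top\Sigma_\psi(\nabla\Phi)$, where $\Sigma_\psi$ is the lower-right $2\times 2$ block of $A^{-1}BA^{-\top}$. Asymptotic unbiasedness follows from consistency and the continuity of $\Phi$.

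The main obstacle is verifying the moment and smoothness conditions at $\theta_\infty$ needed to invoke the standard M-estimation theorems, because $1/\sigma(X,\beta)$ is unbounded as $\sigma(X,\beta)\to 0$. Square-integrability of $\psi(Z;\theta_\infty)$, and hence finiteness of $B$, reduces via the Bayes identity above to the square-integrability of $(P_\mathrm{T}(X)/P_\mathrm{S}(X))Y^{(a)}$, which is exactly \Cref{ass:integrability}. Differentiability of $\theta\mapsto\bbE[\psi(Z;\theta)]$ and the domination conditions for the uniform LLN require controlling $\partial_\beta[(1-\sigma)/\sigma]=-X(1-\sigma)/\sigma^2$, which introduces an extra $1/\sigma$ factor; this is handled by combining overlap (\Cref{ass:Overlap}), sub-Gaussianity of $X$ in the logistic model, and Cauchy-Schwarz against the envelope provided by \Cref{ass:integrability}. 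Once these integrability bounds are secured, the remainder of the argument is the textbook M-estimation pipeline.
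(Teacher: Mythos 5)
Your proposal is correct and follows essentially the same route as the paper: a stacked M-estimation system combining the logistic score with weighted moment equations for $\psi_1,\psi_0$, verification of the population moment condition via the Bayes identity $r(x)=\frac{1-\alpha}{\alpha}\cdot\frac{1-\sigma(x,\beta_\infty)}{\sigma(x,\beta_\infty)}$, the sandwich formula exploiting the block-triangular Jacobian, integrability secured by \Cref{ass:integrability} together with sub-Gaussianity and Cauchy--Schwarz, and a final delta method through $\Phi$. The only (immaterial) difference is that you absorb the normalization $n/(N-n)$ into the moment equation via the $(1-S_i)\wh\psi_a$ term, whereas the paper carries $\hat\alpha$ as an explicit extra component $\theta_2$ of the parameter vector; both parametrizations solve the identical empirical equation and yield the same asymptotic variance.
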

Appendix~\ref{appendix:Logistic weighted Horvitz-Thompson/Neyman} provides full proofs and asymptotic variances for all first moment population causal measures, including results for the Neyman estimator with estimated treatment probabilities—results that, to our knowledge, are novel even for the RD.

Alternatively, using the identification results in equations \eqref{id:transport pot outcomes} and \eqref{id:reweight pot outcomes}, one can adapt the regression-based approach of \citet{Robins1986} to derive weighted or transported G-formula estimators. While the transported version appears in \citet{dahabreh2020extending} for the RD, the weighted version, to the best of our knowledge, has not been previously presented.
\begin{dfn}[Weighted/Transported G-formula]\label{def:R_T_GFormula} 
For any \textit{first moment population causal measure} \(\Phi\), we define the weighted G-formula estimator \( \wh \tau_{\Phi,\mathrm{wG}}\) and the transported G-formula estimator \(\wh \tau_{\Phi,\mathrm{tG}}\) as: \looseness=-1
\begin{align}
  \wh  \tau_{\Phi,\mathrm{wG}}  &= \Phi\left(\frac{1}{n} \sum_{S_i=1} \wh r(X_i)\, \wh \mu_{(1)}^\mathrm{S}(X_i), \frac{1}{n} \sum_{S_i=1} \wh r(X_i)\, \wh \mu_{(0)}^\mathrm{S}(X_i) \right),
   \label{dfn:R_G}\\
    \wh \tau_{\Phi,\mathrm{tG}}  &= \Phi\left(\frac{1}{N - n} \sum_{S_i=0} \wh \mu_{(1)}^\mathrm{S}(X_i), \frac{1}{N - n} \sum_{S_i=0} \wh \mu_{(0)}^\mathrm{S}(X_i) \right), 
   \label{dfn:T_G}
\end{align}
where $\wh r, \wh \mu_{(1)}^\mathrm{S}, \wh \mu_{(0)}^\mathrm{S} $ are any estimators of $ r, \mu_{(1)}^\mathrm{S}, \mu_{(0)}^\mathrm{S}$.
\end{dfn}
Using the \(M\)-estimation framework  \cite{Stefanski2002Mestimation}, one can derive asymptotic variance estimates for both G-formula estimators. Furthermore, assuming a linear model for the outcomes, we prove that regression adjustment leads to a lower asymptotic variance compared to the weighted Horvitz-Thompson.
\begin{linear-model} \label{ass:strong_linear_model}
For all \( a, s \in \{0,1\} \), let $
Y_s^{(a)} = V^{\top} \beta_s^{(a)} + \epsilon_s^{(a)}$, where $V^\top = [1, X^\top]$ with $\mathbb{E}[\epsilon_s^{(a)} \mid X] = 0$ and $\text{Var}(\epsilon_s^{(a)} \mid X) = \sigma^2$.
\end{linear-model}
\begin{prp}[Asymptotic Normality of weighted and Transported G-formula Estimators]\label{prp:Oracle_Asymptotics}
Let \( \wh \tau_{\Phi,\mathrm{tG}}  \) and \(\wh \tau_{\Phi,\mathrm{wG}}\) denote respectively the weighted G-formula where $\wh r$ is a logistic regression estimator \eqref{eq:density_ratio_estimation} and $\wh \mu_{(1)}^\mathrm{S}, \wh \mu_{(0)}^\mathrm{S}$ are ordinary least squares estimator. 

Then, under Assumptions \ref{ass:internal validity} to \ref{ass:integrability}, 
\[
\sqrt{N} \left(\wh \tau_{\Phi,\mathrm{wG}} - \tau_{\Phi}^\mathrm{T} \right) \stackrel{d}{\rightarrow} \mathcal{N}\left(0, V_{\Phi,\mathrm{wG}}^{OLS} \right), \qquad
\sqrt{N} \left(\wh \tau_{\Phi,\mathrm{tG}} - \tau_{\Phi}^\mathrm{T} \right) \stackrel{d}{\rightarrow} \mathcal{N}\left(0, V_{\Phi,\mathrm{tG}}^{OLS} \right).
\]
Besides, $V_{\Phi,\mathrm{tG}}^{\text{OLS}} \leq V_{\Phi,\mathrm{wG}}^\text{OLS} \leq V_{\Phi,\text{HT}},$ where all variances are explicit \Cref{appendix:weighted and transported G-formula}.

\end{prp}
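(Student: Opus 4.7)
The plan is to treat both estimators as smooth functionals of stacked M-estimators and conclude via the sandwich formula combined with the delta method. For $\wh\tau_{\Phi,\mathrm{wG}}$ I would stack three blocks of unbiased estimating equations: the logistic score $(S-\sigma(X,\beta))\,X=0$ defining $\wh\beta_N$, the OLS normal equations for $\wh\beta_s^{(a)}$ on $\{S=1,A=a\}$, and the two plug-in identities $\psi_a = n^{-1}\sum_{S_i=1} r(X_i,\beta_N)\,V_i^\top\beta_s^{(a)}$ that define the arguments of $\Phi$. Standard M-estimation theory \citep{Stefanski2002Mestimation} together with the logistic and linear model assumptions then yields joint $\sqrt{N}$-asymptotic normality of $(\wh\psi_1,\wh\psi_0)$ around $(\bbE_\mathrm{T}[Y^{(1)}],\bbE_\mathrm{T}[Y^{(0)}])$. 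The smoothness of $\Phi$ at this interior point (cf.\ \Cref{dfn:First Moment Causal Measures}) then converts this into the claimed Gaussian limit with variance $V_{\Phi,\mathrm{wG}}^{\mathrm{OLS}} = \nabla\Phi^\top\, \Sigma_{\mathrm{wG}}\,\nabla\Phi$, where $\Sigma_{\mathrm{wG}}$ is the $2\times 2$ sub-block of the sandwich matrix indexed by $(\psi_1,\psi_0)$ and $\nabla\Phi$ is evaluated at the target mean. The estimator $\wh\tau_{\Phi,\mathrm{tG}}$ is handled by an analogous but simpler stack involving only the OLS equations and the target averaging identities $\psi_a = (N-n)^{-1}\sum_{S_i=0} V_i^\top\beta_s^{(a)}$; disjointness of $\{S=1\}$ and $\{S=0\}$ yields a block-structured sandwich matrix and the expression of $V_{\Phi,\mathrm{tG}}^{\mathrm{OLS}}$.

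For the ordering, by the delta method it suffices to establish $\Sigma_{\mathrm{tG}}\preceq\Sigma_{\mathrm{wG}}\preceq\Sigma_{\mathrm{HT}}$ in the Loewner (positive semi-definite) sense, since sandwiching with $\nabla\Phi$ then yields the scalar ordering. I would derive the three influence functions explicitly and compare their second-moment matrices via a law-of-total-variance decomposition conditional on $X$ and $S$. The first inequality $\Sigma_{\mathrm{wG}}\preceq\Sigma_{\mathrm{HT}}$ reflects a classical regression-adjustment gain: under the linear model, $V^\top\beta_s^{(a)}$ is the $L^2(P_\mathrm{S})$-projection of $Y^{(a)}$ onto $\mathrm{span}(V)$, so replacing the (re)weighted raw outcome by its OLS prediction removes the residual conditional noise $\mathrm{Var}(Y^{(a)}\mid X)=\sigma^2$ from the influence function. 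The second inequality $\Sigma_{\mathrm{tG}}\preceq\Sigma_{\mathrm{wG}}$ comes from observing that both estimators share the same OLS contribution $\bar V_\mathrm{T}^\top(\wh\beta_s^{(a)}-\beta_s^{(a)})$ (since $\bbE_\mathrm{S}[r(X)V] = \bar V_\mathrm{T}$), but differ in the variance of the covariate-averaging piece, which is smaller for direct target averaging than for reweighted source averaging once one accounts for the variance-reducing projection induced by the logistic MLE on the density-ratio estimator.

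The step I expect to be the main obstacle is the rigorous bookkeeping of the logistic MLE and OLS nuisance corrections in each influence function. Under correct specification these corrections project the naive influence functions onto smaller-variance subspaces (Fisher-information orthogonality for the logistic MLE, $L^2$-projection for OLS), but tracking how they interact at the level of the covariance blocks $\Sigma_\mathrm{HT},\Sigma_\mathrm{wG},\Sigma_\mathrm{tG}$ requires careful inspection of the cross-terms, especially for the comparison between the reweighted source variance and the target variance of $V^\top\beta_s^{(a)}$. Once this orthogonality bookkeeping is complete, the explicit variance formulas collected in \Cref{appendix:weighted and transported G-formula} and the final scalar ordering $V_{\Phi,\mathrm{tG}}^{\mathrm{OLS}}\leq V_{\Phi,\mathrm{wG}}^{\mathrm{OLS}}\leq V_{\Phi,\mathrm{HT}}$ follow by routine algebra.
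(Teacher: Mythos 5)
Your proposal follows essentially the same route as the paper: stacked M-estimation (logistic score, OLS normal equations, plug-in identities) to get joint $\sqrt{N}$-normality of $(\wh\psi_1,\wh\psi_0)$ via the sandwich formula, the delta method through $\Phi$, and a Loewner ordering $\Sigma_{\mathrm{tG}}\preceq\Sigma_{\mathrm{wG}}\preceq\Sigma_{\mathrm{wHT}}$ of the $2\times 2$ blocks to conclude the scalar variance ordering. The only difference is that you justify the matrix orderings by projection/orthogonality heuristics, whereas the paper verifies them by explicit algebra (a trace-and-determinant check using Cauchy--Schwarz on $\Delta=\bbE_{\mathrm T}[r(X)VV^\top]$ for $\Sigma_{\mathrm{wHT}}-\Sigma_{\mathrm{wG}}$, which also captures the $\nicefrac{(1-\pi_a)}{\pi_a}$ single-arm inflation term your heuristic omits, and a Gram-matrix argument with a positive semidefinite $H$ for $\Sigma_{\mathrm{wG}}-\Sigma_{\mathrm{tG}}$), but this is a matter of execution rather than of approach.
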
 
Given these results, one might wonder whether it’s better to avoid estimating the density ratio altogether, as the transported G-formula is more efficient under correct specification. However, when models are misspecified, weighting the outcomes may perform better. This motivates doubly robust estimators, which retain consistency if either model is correctly specified.

\subsection{Semiparametric efficient estimators under exchangeability in mean} \label{sec:semiparametric_mean}

A common approach to constructing doubly robust estimators relies on semiparametric efficiency theory. By deriving the efficient influence function (EIF) of the target parameter  \citep[see, e.g.][]{kennedy2022semiparametric}, one can build estimators that are not only robust to mispecification, but also achieve the lowest possible asymptotic variance among unbiased estimators. We denote by $\vp_\Phi(Z,\eta,\psi^{\mathrm T}_1,\psi^{\mathrm T}_0)$ the EIF of $\tau_\Phi$, which depends on $(i)$ the nuisance parameters $\eta = (\mu_{(0)}, \mu_{(1)}, r)$ and $(ii)$  the values of  $\psi^{\mathrm T}_1$ and $\psi^{\mathrm T}_0$.

Based on estimators $\wh \eta, \wh \psi^{\mathrm T}_1,\wh \psi^{\mathrm T}_0$, one can use the EIF framework via one of the following two techniques to build new estimators of $\tau_\Phi^{\mathrm T}$ whose properties are described below:
\begin{enumerate}[label=(\roman*), topsep=0pt, itemsep=6pt, leftmargin=29pt]

\item \textbf{One-step estimators} $\wh \tau^{\mathrm{OS}}_\Phi$  consist in applying a first-order bias correction to an  initial plug-in estimator $\wh\tau^{\mathrm T}_\Phi =  \Phi(\wh \psi^{\mathrm T}_1,\wh \psi^{\mathrm T}_0)$, resulting in
$
\wh \tau^{\mathrm{OS}}_\Phi = \wh \tau_\Phi^{\mathrm{T}} + \frac1N \sum_{i=1}^N \vp_\Phi(Z_i, \wh \eta, \psi^{\mathrm T}_1,\wh \psi^{\mathrm T}_0).
$
\item \textbf{Estimating equation estimators} are obtained by setting the empirical mean of the EIF to zero. This amounts to finding an estimators $\wh \tau_\Phi^{\mathrm{EE}} = \Phi(\wh \psi^{\mathrm T}_1,\wh \psi^{\mathrm T}_0)$ such that $\wh \psi^{\mathrm T}_1,\wh \psi^{\mathrm T}_0$ are solutions to $\sum_{i=1}^N \vp_\Phi(Z_i,  \wh \eta, \wh \psi^{\mathrm T}_1,\wh \psi^{\mathrm T}_0) = 0$ (Estimating Equation).

\end{enumerate}
In practice and in both case, it is usual to resort to crossfitted techniques \cite{Chernozhukov} by estimating $\wh \tau_\Phi$ and $\wh \eta$ and evaluating the EIF $\vp_\Phi$ on two different datasets to enforce independence. We drop the $\mathrm{T}$ superscript in the rest of this section for the sake of clarity. Both above approaches require knowing the EIF $\vp_\Phi$. As 
$\tau_\Phi = \Phi(\psi_1, \psi_0)$, letting  $\vp_1$ (resp.  $\vp_0$) be the EIF of $\psi_1$ (resp. $\psi_0$), standard calculations on EIF functions (see e.g. \cite[Sec 3.4.3]{kennedy2024semiparametric}) show that 
$$
\vp_\Phi(Z,\eta, \psi_1,\psi_0) = \partial_1 \Phi(\psi_1,\psi_0) \vp_1(Z, \eta, \psi_1)+\partial_0 \Phi(\psi_1,\psi_0) \vp_0(Z, \eta, \psi_0).
$$

With this equality, and the following proposition, one can compute the influence function $\vp_\Phi$. 
\begin{prp}\label{prp:IF_psi_a_corps_du_texte} Grant \Cref{ass:internal validity} to \ref{ass:strong transportability}. For all $a \in \{0,1\}$, we have 
$$
\vp_a(Z, \eta,\psi_a) :=  \frac{1-S}{1-\alpha} (\mu_{(a)}(X) - \psi_a)+\frac{S \ind\{A=a\}}{\alpha P(A=a)} {r}(X)(Y- \mu_{(a)}(X)).
$$   
\end{prp}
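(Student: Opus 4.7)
The plan is to derive $\vp_a$ via the standard parametric-submodel characterization of the efficient influence function of $\psi_a := \bbE_\mathrm{T}[Y^{(a)}]$. Under Assumptions~\ref{ass:internal validity}--\ref{ass:strong transportability}, the identification formula~\eqref{id:transport pot outcomes} gives $\psi_a = \bbE\bigl[(1-S)(1-\alpha)^{-1}\mu_{(a)}(X)\bigr]$ with $\mu_{(a)}(X) = \bbE[Y \mid S{=}1, A{=}a, X]$. I would consider a regular one-dimensional submodel $t \mapsto P_t$ through $P_{\mathrm{obs}}$ with score $s(Z) = \partial_t \log p_t(Z)|_{t=0}$, and aim to show that $\partial_t \psi_a(P_t)|_{t=0} = \bbE[\vp_a(Z,\eta,\psi_a)\, s(Z)]$ for the candidate $\vp_a$ in the statement. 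Since the observed-data model is essentially nonparametric (apart from the known randomization), this inner-product identity together with $\bbE[\vp_a] = 0$ suffices to identify $\vp_a$ as the EIF.

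Using the factorization $dP = dP(s)\, dP(x \mid s)\, dP(a \mid x, s)\, dP(y \mid x, s, a)$ and the fact that $(A, Y)$ are observed only when $S = 1$, the observed-data score decomposes orthogonally as
\[
s(Z) = s_1(S) + s_2(X \mid S) + S\, s_3(A \mid X) + S\, s_4(Y \mid X, A),
\]
each summand having mean zero given its conditioning variables. Differentiating $\psi_a(P_t) = \int \mu_{(a)}^{P_t}(x)\, dP_t(x \mid S{=}0)$ at $t = 0$ splits into (I) a covariate-shift term $\bbE_\mathrm{T}[\mu_{(a)}(X)\, s_2(X \mid 0)]$, which equals $\bbE_\mathrm{T}[(\mu_{(a)}(X) - \psi_a)\, s_2(X \mid 0)]$ by mean-zero centering, and (II) an outcome-regression term $\bbE_\mathrm{T}[\partial_t \mu_{(a)}^{P_t}(X)|_0]$ with $\partial_t \mu_{(a)}^{P_t}(x)|_0 = \bbE[(Y - \mu_{(a)}(X))\, s_4(Y \mid X, a) \mid X = x, A = a, S = 1]$.

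Each contribution must then be rewritten as an expectation of $s(Z)$ against an observed function of $Z$. For (I), the factor $(1-S)/(1-\alpha)$ converts the conditional expectation over $S = 0$ into a full expectation, and orthogonality of the score components ensures that only the $s_2$ piece survives. For (II), I would use $dP_\mathrm{T}(x) = r(x)\, dP_\mathrm{S}(x)$ to transport the outer integration to $P_\mathrm{S}$, then insert $S\, \ind\{A = a\}/(\alpha\, P(A = a))$ to encode conditioning on $\{S = 1, A = a\}$; orthogonality again leaves only the $s_4$ piece. Summing (I) and (II) produces $\partial_t \psi_a(P_t)|_0 = \bbE[\vp_a(Z, \eta, \psi_a)\, s(Z)]$ with $\vp_a$ of the stated form. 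A direct check confirms $\bbE[\vp_a] = 0$: the first summand equals $\bbE_\mathrm{T}[\mu_{(a)}(X) - \psi_a] = 0$ by the identification formula, and the second vanishes by the tower rule since $\bbE[Y - \mu_{(a)}(X) \mid X, A = a, S = 1] = 0$.

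The main technical delicacy is the score-decomposition bookkeeping: one must verify that the components $s_1(S)$ and $S\, s_3(A \mid X)$ drop out when paired with the proposed $\vp_a$ (by the tower rule and the orthogonality of conditional-score components), and that the density-ratio substitution inside (II) is legitimate under the overlap Assumption~\ref{ass:Overlap}. The randomization constraint in Assumption~\ref{ass:internal validity}.3 further restricts the tangent space (forcing $s_3$ to be trivial when $\pi$ is treated as known), but this does not alter $\vp_a$, which already integrates $A$ out correctly via the weighting $\ind\{A = a\}/P(A = a)$.
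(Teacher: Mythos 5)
Your derivation is correct, but it takes a genuinely different route from the paper's. The paper's proof treats $X$ as a categorical variable ``for the sake of the computation,'' rewrites $\psi_a = \sum_{x} \bbE[\ind\{X=x\}\mid S=0]\cdot \bbE[Y\mid X=x, S=1, A=a]$, and then mechanically applies the building-block influence function of a conditional expectation $\bbE[h(Z)\mid\cA]$ together with the product rule for influence functions, before resumming over $x$ and using $P(A=a, X=x, S=1) = \alpha P(A=a) r(x) P(X=x\mid S=0)$ to simplify. You instead derive the EIF from first principles via pathwise differentiability: factorize the observed-data likelihood, decompose the score orthogonally into $s_1(S) + s_2(X\mid S) + S s_3(A\mid X) + S s_4(Y\mid X,A)$, differentiate $\psi_a(P_t) = \int \mu_{(a)}^{P_t}(x)\,dP_t(x\mid S=0)$ by the product rule into a covariate-shift term and an outcome-regression term, and represent each as an inner product against the full score. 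Both are standard EIF machinery and land on the same function. What your route buys is that it dispenses with the discretization of $X$, so it is the more rigorous argument for continuous covariates; the cost is the score-decomposition bookkeeping, which you handle correctly (the $s_1$ pairing vanishes because $\bbE_\mathrm{T}[\mu_{(a)}(X)-\psi_a]=0$, the $s_3$ pairing vanishes by the tower rule on $Y-\mu_{(a)}(X)$, and the density-ratio substitution is licensed by \Cref{ass:Overlap}). Your closing observation---that the known-randomization constraint only shrinks the tangent space in the $s_3$ direction, to which the candidate $\vp_a$ is already orthogonal, so the EIF is unaffected---is accurate and is a point the paper's proof glosses over.
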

The exact expressions of one-step and estimating equation estimators depend on $\vp_\Phi$, which in turns depends on the effect measure $\Phi$. While the first approach leads to explicit expressions, estimating equation estimators are defined implicitly through the estimating equation in $(ii)$, which writes
\begin{align}
\partial_1 \Phi(\wh \psi_1,\wh \psi_0) \frac1N \sum_{i=1}^N \vp_1(Z, \wh \eta, \wh \psi_1)+\partial_0 \Phi(\wh \psi_1,\wh \psi_0) \frac1N \sum_{i=1}^N \vp_0(Z, \wh \eta, \wh \psi_0) = 0.
\label{estimating_equation_phi0_phi1}
\end{align}
One way ---  though not the only --- to satisfy \eqref{estimating_equation_phi0_phi1} is to find two estimators $\wh \psi_1, \wh \psi_0$ that cancels the empirical mean of their EIF (second and fourth term in \eqref{estimating_equation_phi0_phi1}), that is the corresponding EE estimator for $\psi_1$ and $\psi_0$. We end up with a plug-in estimator of the form $\Phi(\wh \psi_1^{\mathrm{EE}},\wh \psi_0^{\mathrm{EE}})$, whose precise expression is given below.

\begin{prp}[Estimating equation estimators.]
Given estimators $\wh \mu_{(a)}$ (resp. $\wh r$) of $\mu_{(a)}$ (resp. $r$), an estimating equation estimator $\wh\tau^{\mathrm{EE}}_\Phi$ of $\tau_\Phi$ is given by $\wh\tau^{\mathrm{EE}}_\Phi = \Phi(\wh \psi_1^{\mathrm{EE}},\wh \psi_0^{\mathrm{EE}})$ where for all $a \in \{0,1\}$
\begin{equation}
    \label{eq:psiee}
\wh \psi_a^{\mathrm{EE}} := \frac{1}{m} \sum_{S_i = 0} \wh \mu_{(a)}(X_i) + \frac{1-\alpha}{m \alpha} \sum_{S_i=1} \frac{\ind\{A=a\}}{\bbP(A=a)} \wh r(X_i)(Y-\wh \mu_{(a)}(X_i)).
\end{equation}
\end{prp}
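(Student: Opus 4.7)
The plan is to verify that, for each $a \in \{0,1\}$, the closed-form expression \eqref{eq:psiee} is the unique solution in $\wh\psi_a$ of the scalar estimating equation $\sum_{i=1}^N \vp_a(Z_i,\wh\eta,\wh\psi_a) = 0$. Once this is established, the bivariate estimating equation \eqref{estimating_equation_phi0_phi1} for $\tau_\Phi$ follows immediately by linearity, since its left-hand side is just the linear combination of the two scalar sums with coefficients $\partial_1\Phi(\wh\psi_1^{\mathrm{EE}},\wh\psi_0^{\mathrm{EE}})$ and $\partial_0\Phi(\wh\psi_1^{\mathrm{EE}},\wh\psi_0^{\mathrm{EE}})$.

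The first observation I would exploit is that the EIF $\vp_a$ from Proposition \ref{prp:IF_psi_a_corps_du_texte} is affine in $\psi_a$: the only term in which $\psi_a$ appears is $-(1-S)\psi_a/(1-\alpha)$. Hence $\psi \mapsto \sum_i \vp_a(Z_i,\wh\eta,\psi)$ is affine with slope $-m/(1-\alpha)$, which is almost surely nonzero as soon as $m\geq 1$. The estimating equation therefore admits a unique root, which can be written in closed form.

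The second step is a short algebraic manipulation. Splitting the sum along the two values of $S_i$ and substituting the explicit formula for $\vp_a$, the equation $\sum_i \vp_a(Z_i,\wh\eta,\wh\psi_a) = 0$ becomes
\begin{equation*}
\frac{1}{1-\alpha}\Bigl(\sum_{S_i = 0}\wh\mu_{(a)}(X_i) - m\,\wh\psi_a\Bigr) + \frac{1}{\alpha\,\bbP(A=a)} \sum_{S_i = 1} \ind\{A_i = a\}\,\wh r(X_i)\bigl(Y_i - \wh\mu_{(a)}(X_i)\bigr) = 0.
\end{equation*}
Isolating $\wh\psi_a$ and dividing by $m$ reproduces \eqref{eq:psiee} verbatim. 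Reassembling across $a \in \{0,1\}$ and applying $\Phi$ then gives the claimed estimator $\wh\tau_\Phi^{\mathrm{EE}} = \Phi(\wh\psi_1^{\mathrm{EE}},\wh\psi_0^{\mathrm{EE}})$, and the product-rule decomposition $\vp_\Phi = \partial_1\Phi\,\vp_1 + \partial_0\Phi\,\vp_0$ immediately yields that \eqref{estimating_equation_phi0_phi1} holds.

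There is no conceptual obstacle---the proof is purely algebraic---but a point worth flagging is the status of the scalar $\alpha = \bbP(S=1)$ appearing in \eqref{eq:psiee}. Treating it as a known constant yields the displayed formula, whereas plugging in the empirical frequency $n/N$ collapses the ratio $(1-\alpha)/(m\alpha)$ to $1/n$. The two variants coincide up to an $o_{\bbP}(N^{-1/2})$ term under standard regularity, so they are interchangeable for the asymptotic theory developed in the sequel, but only the first choice matches \eqref{eq:psiee} exactly.
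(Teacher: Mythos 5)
Your proposal is correct and follows essentially the same route as the paper: the appendix records exactly the same affine-in-$\psi_a$ observation, namely $\sum_{i=1}^N \vp_a(Z_i,\wh\eta,\wh \psi_a) = -\tfrac{m}{1-\alpha}(\wh \psi_a-\wh \psi_a^{\mathrm{EE}})$, so the closed form is the unique root of the per-arm estimating equation and the bivariate equation \eqref{estimating_equation_phi0_phi1} follows from the chain-rule decomposition of $\vp_\Phi$. Your remark about replacing $\alpha$ by $n/N$ is a sensible aside but not needed for the statement as written.
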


A similar estimator already appears in \cite{dahabreh2020extending}. We prove below that $\wh\tau^{\mathrm{EE}}_\Phi$ is \textit{(weakly) doubly-robust}.

\begin{prp}\label{prp:weak_double_robust} Let $\wh \mu_{(a)}$ and $\wh r$ be two estimators independent from $Z_1,\dots,Z_n$. Then, under boundedness of $\wh \mu^{(a)}$, $\wh r$ and $Y$, and assuming that $\Phi$ is continuous, the estimator 
$\wh \tau_\Phi^{\mathrm{EE}} = \Phi(\wh \psi_1^{\mathrm {EE}},\wh\psi_0^{\mathrm{EE}})$ is consistent as soon as either $\wh \mu_{(a)} = \mu$ for all $a \in \{0,1\}$ or $\wh r = r$.
\end{prp}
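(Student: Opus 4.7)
The plan is to reduce everything to consistency of $\wh\psi_a^{\mathrm{EE}}$ for $\psi_a := \bbE_\mathrm{T}[Y^{(a)}]$ for $a\in\{0,1\}$, and then invoke continuity of $\Phi$ via the continuous mapping theorem to conclude that $\wh\tau_\Phi^{\mathrm{EE}} = \Phi(\wh\psi_1^{\mathrm{EE}},\wh\psi_0^{\mathrm{EE}}) \xrightarrow{\bbP} \Phi(\psi_1,\psi_0) = \tau_\Phi^{\mathrm{T}}$. So the whole argument collapses to analyzing a single scalar average.

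First, I would handle the two sums in \eqref{eq:psiee} by conditioning on the independent nuisances $(\wh\mu_{(a)},\wh r)$ and applying the law of large numbers, using the fact that $n/N \to \alpha$ and $m/N \to 1-\alpha$ a.s. (by the binomial structure of $S$), together with boundedness of $\wh\mu_{(a)}$, $\wh r$ and $Y$ to control the tails. The first term converges to $\bbE_\mathrm{T}[\wh\mu_{(a)}(X)]$. For the second, rewriting $\frac{1-\alpha}{m\alpha}\sum_{S_i=1}$ as $\frac{n(1-\alpha)}{m\alpha}\cdot\frac{1}{n}\sum_{S_i=1}$ with prefactor tending to $1$, the LLN on the RCT sample (conditional on the nuisances) gives the limit
\[
\bbE_\mathrm{S}\!\left[\frac{\ind\{A=a\}}{\bbP(A=a)}\,\wh r(X)\,(Y-\wh\mu_{(a)}(X))\right].
\]
Using the internal validity of the RCT (\Cref{ass:internal validity}), the random treatment assignment $A\sim\mathcal{B}(\pi)$ independent of $X$ and of $(Y^{(0)},Y^{(1)})$ within the trial, this simplifies to $\bbE_\mathrm{S}\!\left[\wh r(X)\,(\mu_{(a)}^\mathrm{S}(X)-\wh\mu_{(a)}(X))\right]$, and by exchangeability in mean (\Cref{ass:strong transportability}) we may replace $\mu_{(a)}^\mathrm{S}$ by $\mu_{(a)}^\mathrm{T}=:\mu_{(a)}$. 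Hence the (conditional) limit is
\[
\psi_a^\infty = \bbE_\mathrm{T}[\wh\mu_{(a)}(X)] + \bbE_\mathrm{S}\!\left[\wh r(X)\,(\mu_{(a)}(X)-\wh\mu_{(a)}(X))\right].
\]

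The double-robustness check is now the easy part. If $\wh\mu_{(a)}=\mu_{(a)}$, the bracketed residual vanishes and $\psi_a^\infty = \bbE_\mathrm{T}[\mu_{(a)}(X)] = \bbE_\mathrm{T}[Y^{(a)}] = \psi_a$. If instead $\wh r = r = \mathrm{d}P_\mathrm{T}/\mathrm{d}P_\mathrm{S}$, then for any bounded measurable $f$ one has $\bbE_\mathrm{S}[r(X)f(X)] = \bbE_\mathrm{T}[f(X)]$, so
\[
\psi_a^\infty = \bbE_\mathrm{T}[\wh\mu_{(a)}(X)] + \bbE_\mathrm{T}[\mu_{(a)}(X)-\wh\mu_{(a)}(X)] = \bbE_\mathrm{T}[\mu_{(a)}(X)] = \psi_a.
\]
In both cases $\wh\psi_a^{\mathrm{EE}} \xrightarrow{\bbP} \psi_a$, and the continuous mapping theorem closes the argument.

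The main nuisance — not really a deep obstacle, but the one requiring care — is formalizing the LLN with the random sample sizes $n$ and $m$ and under the independence from $(Z_i)_{i\in[n]}$ stated in the hypothesis. Boundedness makes this routine: one either conditions on $(\wh\mu_{(a)},\wh r)$ and applies the strong LLN to i.i.d.\ bounded summands (handling the random index via Anscombe's theorem or a simple subsequence/Slutsky argument using $m/N\to 1-\alpha$ and $n/N\to\alpha$), or one directly bounds the conditional second moment to get convergence in probability. No further structural assumption on $\wh\mu_{(a)}$ or $\wh r$ is needed beyond those in the statement.
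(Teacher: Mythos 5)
Your proposal is correct and follows essentially the same route as the paper's proof: condition on the nuisances to treat $\wh\mu_{(a)}$ and $\wh r$ as deterministic, apply the law of large numbers (using $n/N\to\alpha$, $m/N\to1-\alpha$) to identify the limit of $\wh\psi_a^{\mathrm{EE}}$ as $\bbE_\mathrm{T}[\wh\mu_{(a)}(X)]+\bbE_\mathrm{S}[\wh r(X)(\mu_{(a)}(X)-\wh\mu_{(a)}(X))]$, check that either hypothesis kills the residual term, and conclude by continuity of $\Phi$. You are somewhat more explicit than the paper about the random sample sizes, the averaging over the treatment indicator via internal validity, and the use of exchangeability in mean to identify $\mu_{(a)}^{\mathrm S}$ with $\mu_{(a)}^{\mathrm T}$, but the argument is the same.
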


It would be easy to extend this result to a \emph{strong} robustness property, meaning that $\sqrt{N} (\wh \tau_\Phi^{\mathrm {EE}} - \tau_\Phi^{\mathrm T}) \stackrel{d}{\rightarrow} \cN(0, \Var \vp_\Phi(Z))$ under mild convergence requirements of $\wh \mu_{(a)}$ and $\wh r$ towards $\mu_{(a)}$ and $r$. Concerning the OS estimators, we cannot derive any formal robustness results in full generality, and this property needs to be assessed on a case-by-case basis. For instance, \cite{boughdiri2024quantifyingtreatmenteffectsestimating} shows that the OS estimator for the RR is doubly robust based on usual requirements on the estimation of the nuisance parameters and some extra assumptions on $\wh\psi_0^{\mathrm T}$. In light of these observations, we recommend using EE estimators rather than OS estimators, when possible, as they also usually lead to better results empirically, see Figure \ref{fig:non_linear_strong}.

When $\Phi$ is a linear functional (e.g., RD), the two approaches, one-step and estimating equations yields the same estimators, up to a scaling term depending on $\alpha$ and $N$. For nonlinear functionals however (such as the RR or OR), the estimators generally differ (see below and \Cref{sec:appendix_os-ee_computations1}).
\begin{enumerate}
\item[(RD)]
For the risk difference, the estimating equation approach yields $\wh\tau_{\mathrm{RD}}^{\mathrm{EE}} = \wh\psi_1^{\mathrm{EE}}-\wh\psi_0^{\mathrm{EE}}$. 
The one step approach, based on initial estimators $\wh \psi_1$ and $\wh \psi_0$ yields an estimate of the form:
\begin{align*}
\wh\tau_{\mathrm{RD}}^{\mathrm{OS}} 
&= \frac{m}{N(1-\alpha)} \wh\tau^{\mathrm{EE}}_{\mathrm{RD}} + \left(1-\frac{m}{N(1-\alpha)}\right) \wh \tau_{\mathrm{RD}}.
\end{align*}
In particular, estimators $\wh \psi_a$ of the form G-formula or weighted Horwitz-Thomson yield a final one-step estimator that has the same structure as the estimating equation estimator. 
\item[(RR)]
For the risk ratio, the estimating equation approach yield $\wh\tau_{\mathrm{RR}}^{\mathrm{EE}} = \wh\psi_1^{\mathrm{EE}}/\wh\psi_0^{\mathrm{EE}}$. 
In contrast, the one-step approach, based on initial estimators $\wh \psi_1$ and $\wh \psi_0$ yields
\begin{align*}
\wh\tau_{\mathrm{RR}}^{\mathrm{OS}} = &\frac{\wh\psi_1}{\wh\psi_0} + \frac{1}{\wh\psi_0}  \frac{m}{N(1-\alpha)} (\wh\psi_1^{\mathrm{EE}}-\wh\psi_1) - \frac{\wh\psi_1}{\wh\psi_0^2} \frac{m}{N(1-\alpha)} (\wh\psi_0^{\mathrm{EE}}-\wh\psi_0).
\end{align*}
In general, $\wh\tau_{\mathrm{RR}}^{\mathrm{OS}} \neq \wh\tau_{\mathrm{RR}}^{\mathrm{EE}}$, unless of course we initially picked $\wh \psi_a = \wh \psi_a^{\mathrm{EE}}$. 
\end{enumerate}

\section{Generalization under exchangeability in effect measure}\label{sec:gen_exchangeability_cate}

Estimators in  \Cref{sec:gen_exchangeability_mean}  were derived under the exchangeability in mean assumption (\Cref{ass:strong transportability}). 
Under this assumption, generalizing necessitates full access to all prognostic covariates whose distributions is shifted between the source and target populations \cite{colnet2024reweighting}. 
A weaker assumption consists in  transportability of conditional treatment effects.

\begin{ass}[Exchangeability in effect measure] \label{ass:weak transportability}
\phantomsection
$\forall x \in \mathrm{supp}(P_\mathrm{T}) \cap \mathrm{supp}(P_\mathrm{S})$, $ \tau_{\Phi}^\mathrm{S}(x)= \tau_{\Phi}^\mathrm{T}(x)$.
\end{ass}

Under \Cref{ass:weak transportability}, the effect of the treatment depends on the patient's features in the same way in the source and target population. While exchangeability in mean implies exchangeability of the effect measure for any $\Phi$, the reverse does not generally hold. 
\subsection{Transporting causal measures under exchangeability in effect measure}

To ensure identification, we typically require a relationship between the conditional average treatment effect (CATE) and the average treatment effect (ATE). A classical concept that supports this relationship is collapsibility: a measure is said to be collapsible if it 
can be expressed as a weighted average of conditional effects \citep[see, e.g.,][]{pearl1999collapsibility,Huitfeldt2019collapsible}. However, some measure (like OR) are not collapsible, thus questioning their transportability under this \Cref{ass:weak transportability}. 
However, it turns out that under \Cref{ass:weak transportability}, any first-moment causal measure is identifiable  assuming access to the control potential outcomes $Y^{(0)}$ for all individuals in the target population: 
\begin{align}
\tau_{\Phi}^\mathrm{T} &= \Phi\left( \mathbb{E}_\mathrm{T} \left[\Gamma\left(\tau^\mathrm{S}_{\Phi} (X), \mu_{(0)}^\mathrm{T}(X)\right)\right],  \mathbb{E}_\mathrm{T} \left[ Y^{(0)} \right]\right) \quad &\textit{(transporting)} \label{id:gen_transport_CATE}\\
&= \Phi\left( \mathbb{E}_\mathrm{S} \left[\frac{p_\mathrm{T}(X)}{p_\mathrm{S}(X)}\Gamma\left(\tau^\mathrm{S}_{\Phi} (X), \mu_{(0)}^\mathrm{T}(X)\right)\right],  \mathbb{E}_\mathrm{T} \left[ Y^{(0)} \right]\right) \quad &\textit{(weighting)} \label{id:gen_reweight_CATE}
\end{align}
where  $\Gamma$ is the effect function (see \Cref{dfn:First Moment Causal Measures}). Note that for collapsible measures such as RD or RR, expressions \eqref{id:gen_transport_CATE} and \eqref{id:gen_reweight_CATE} reduce to the identification results derived by \cite{colnet2023risk}. Thus, our framework can be viewed as a natural extension of their approach to any first moment population causal measures even non-collapsible measures such as the OR. The detailed identification results for RD, OR, and RR are provided in \Cref{appendix:weak transportability Identification}.
Using the  identification formula \eqref{id:gen_transport_CATE}, we derive  $\Gamma$-formula estimators, which, to the best of our knowledge, are novel contributions. 
\begin{dfn}[Transported $\Gamma$-formula]\label{def:R_T_EFP} 
For any \textit{first moment population causal measure} \(\Phi\), we define the transported $\Gamma$-formula estimator \(\hat \tau_{\Phi, \mathrm{t \Gamma}}\) as follows:
\begin{align}
   \hat \tau_{\Phi, \mathrm{t \Gamma}}  &= \Phi\left(\frac{1}{N - n} \sum_{S_i=0} \Gamma(\tau^\mathrm{S}_{\Phi}(X_i), \mu_{(0)}^\mathrm{T}(X_i)),\frac{1}{N - n} \sum_{S_i=0} \mu_{(0)}^\mathrm{T}(X_i) \right). 
   \label{dfn:T_EFP}
\end{align}
\end{dfn}
We can also define, as before, a reweighted version of the $\Gamma$-formula (see \Cref{def:R_T_GFormula}).
\subsection{Semiparametric efficient estimators under exchangeability of treatment effect} \label{sec:semiparametric_effect}

Under \Cref{ass:weak transportability}, the identifiability formula \eqref{id:gen_transport_CATE} serves as the basis for constructing the EIF. Given access to the target baseline distribution, $\mu_{(0)}^{\mathrm{T}}$ and $\psi_0^{\mathrm{T}}$ are known. To construct one-step and estimating equation estimators (see \Cref{sec:semiparametric_mean}), we require EIF of $\tau_\Phi^{\mathrm{T}}$, denoted $\vp_\Phi(Z,\eta,\tau_{\Phi}^{\mathrm{T}})$ which is is related to $\vp_1$, the EIF of $\psi_1^{\mathrm{T}}$, via the chain rule:
$$
\vp_\Phi(Z,\eta,\tau_{\Phi}^{\mathrm T}) = \vp_1(Z,\eta, \psi_1^{\mathrm T}) \partial_1 \Phi(\psi_1^{\mathrm T},\psi_0^{\mathrm{T}}),
$$
where $\eta = (\mu_{(0)}^{\mathrm S}, \tau_\Phi, r)$ is the nuisance parameters. 

\begin{prp}\label{prp:IF_tau_phi} The influence function of $\psi_1^{\mathrm T}$ at $P_{\mathrm{obs}}$ is given by
\begin{align*}
& \vp_1(Z,\eta, \psi_1^{\mathrm T}) := \frac{Sr(X)}{\alpha}\left(\frac{A}{\pi}(Y-\Gamma(\tau_\Phi(X),\mu_{(0)}^{\mathrm{S}}(X)))\right)\\
& - \frac{Sr(X)}{\alpha}\left(\frac{1-A}{1-\pi}(Y-\mu_{(0)}^{\mathrm{S}}(X)) \partial_0\Gamma(\tau_\Phi(X),\mu_{(0)}^{\mathrm{T}}(X))\right) 
+ \frac{1-S}{1-\alpha}\left(\Gamma(\tau_\Phi(X),\mu_{(0)}^{\mathrm{T}}(X)) - \psi_1^{\mathrm T}\right).
\end{align*}    
\end{prp}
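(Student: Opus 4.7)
My plan is to derive the influence function by applying the standard semiparametric machinery directly to the identification formula $\psi_1^{\mathrm T} = \mathbb{E}_{\mathrm T}[\Gamma(\tau_\Phi(X), \mu_{(0)}^{\mathrm T}(X))]$ established in Section~\ref{sec:gen_exchangeability_cate}. I would view this as a functional of the observed data law $P_{\mathrm{obs}}$, with the convention---consistent with the nuisance list $\eta = (\mu_{(0)}^{\mathrm S}, \tau_\Phi, r)$---that $\mu_{(0)}^{\mathrm T}$ enters as a known function on the target. Along a regular one-dimensional parametric submodel $\{P_\epsilon\}$ with score $s(Z) = \partial_\epsilon \log(\mathrm{d}P_\epsilon/\mathrm{d}P_{\mathrm{obs}})|_{\epsilon=0}$, the goal is to produce an expression $\mathbb{E}[\vp_1(Z) s(Z)] = \partial_\epsilon \psi_1^{\mathrm T}(P_\epsilon)|_{\epsilon=0}$ and read off $\vp_1$.

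The pathwise derivative $\dot\psi_1^{\mathrm T}$ decomposes into three pieces. First, variation in the target covariate law $f_{\mathrm T}$ contributes, by the standard argument for mean functionals conditional on $S=0$, the term $\frac{1-S}{1-\alpha}(\Gamma(\tau_\Phi(X), \mu_{(0)}^{\mathrm T}(X)) - \psi_1^{\mathrm T})$. Second, variation in $\tau_\Phi(x) = \Phi(\mu_{(1)}^{\mathrm S}(x), \mu_{(0)}^{\mathrm S}(x))$ enters via the chain rule $\dot\tau_\Phi = \partial_1\Phi \cdot \dot\mu_{(1)}^{\mathrm S} + \partial_0\Phi \cdot \dot\mu_{(0)}^{\mathrm S}$, which produces a contribution $\mathbb{E}_{\mathrm T}[\partial_1\Gamma(\tau_\Phi, \mu_{(0)}^{\mathrm T}) \cdot \dot\tau_\Phi]$. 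To convert this into $\mathbb{E}[\phi(Z) s(Z)]$, I would use the pointwise identity $\dot\mu_{(a)}^{\mathrm S}(x) = \mathbb{E}[(Y-\mu_{(a)}^{\mathrm S}(X))s(Z) \mid S=1, A=a, X=x]$ combined with the change-of-measure $\mathbb{E}_{\mathrm T}[h(X)] = \mathbb{E}[S r(X) h(X)/\alpha]$ to pull each score piece onto the observed-data scale, producing Horvitz--Thompson-type residuals $\frac{S \ind\{A=a\}}{\alpha \mathbb{P}(A=a)} r(X)(Y - \mu_{(a)}^{\mathrm S}(X))$ weighted by the appropriate partial derivatives of $\Gamma$ and $\Phi$.

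The final and most delicate step is purely algebraic: assembling the two residual pieces arising from $\dot\mu_{(1)}^{\mathrm S}$ and $\dot\mu_{(0)}^{\mathrm S}$ into the stated Terms~1 and~2. I would exploit the identities obtained by differentiating the inverse relation $\Gamma(\Phi(a,b),b) = a$ in each argument---namely $\partial_1\Gamma \cdot \partial_1\Phi = 1$ and $\partial_1\Gamma \cdot \partial_0\Phi + \partial_0\Gamma = 0$, both at $(a,b) = (\mu_{(1)}^{\mathrm S}, \mu_{(0)}^{\mathrm S})$---together with $\Gamma(\tau_\Phi(X), \mu_{(0)}^{\mathrm S}(X)) = \mu_{(1)}^{\mathrm S}(X)$ to rewrite the treated residual as $Y - \Gamma(\tau_\Phi, \mu_{(0)}^{\mathrm S})$. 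The main obstacle I anticipate is the bookkeeping of reference points: in the stated formula $\partial_0 \Gamma$ is evaluated at $\mu_{(0)}^{\mathrm T}$, whereas the chain rule applied to the source parametrization naturally produces the partial derivatives at $\mu_{(0)}^{\mathrm S}$. Reconciling these evaluation points---either by reparametrizing the tangent space in terms of $(\tau_\Phi, \mu_{(0)}^{\mathrm S})$ so that varying $\tau_\Phi$ is the primary source of variation in the target-side quantity $\Gamma(\tau_\Phi, \mu_{(0)}^{\mathrm T})$, or by verifying the assembled formula directly against the unconditional pathwise derivative---will be the crux of the argument, and I would cross-check the outcome against the closed-form EIFs for RD and RR as a sanity test before claiming the general expression.
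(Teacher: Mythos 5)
Your plan follows essentially the same route as the paper's proof: write $\psi_1^{\mathrm T}=\mathbb{E}_{\mathrm T}[\Gamma(\tau_\Phi(X),\mu_{(0)}^{\mathrm T}(X))]$ with $\mu_{(0)}^{\mathrm T}$ treated as known, split the influence function into the target-marginal contribution (giving the $\frac{1-S}{1-\alpha}(\Gamma-\psi_1^{\mathrm T})$ term) and the contribution of $\IF(\tau_\Phi(x))$ computed by the chain rule through $\Phi(\mu_{(1)}^{\mathrm S},\mu_{(0)}^{\mathrm S})$, convert to observed-data residuals via the density ratio $r(X)$, and simplify with the inverse-function identities $\partial_1\Phi\,\partial_1\Gamma=1$ and $\partial_0\Phi\,\partial_1\Gamma+\partial_0\Gamma=0$. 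The evaluation-point mismatch you flag ($\partial_0\Gamma$ appearing at $\mu_{(0)}^{\mathrm T}$ while the chain-rule identities hold at $\mu_{(0)}^{\mathrm S}$) is a genuine subtlety that the paper's own proof passes over with ``patching all of this together yields the result,'' so your intention to verify it explicitly (and to sanity-check against the RD and RR cases) is well placed.
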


As in \Cref{sec:semiparametric_mean}, we can either find the estimating equation estimator of $\psi_1$ and plug it into $\Phi$ to obtain $\wh\tau_\Phi^{\mathrm{EE}} = \Phi(\wh\psi_1^{\mathrm{EE}}, \psi_0^{\mathrm T})$, or apply a one step correction to a initial estimator $\wh \psi_1$ of the form $\wh\tau_\Phi^{\mathrm{OS}} = \Phi(\wh \psi_1,\psi_0^{\mathrm T}) + (1/N) \sum_{i=1}^N \vp_\Phi(Z_i,\wh \eta,\wh \tau_{\Phi}^{\mathrm T})$.
Like in Section \ref{sec:semiparametric_mean}, the estimating equation estimators and one-step estimators have no reason to coincide in general. Exact computations of the RD, RR and OR are provided in \Cref{sec:appendix_os-ee_computations2}.

\section{Simulations}\label{sec:simulation}
\subsection{Synthetic data}
We generate data $(S, X, A, Y^{(0)}, Y^{(1)})$ using the following binary outcome model: for all $a,s \in \{0,1\}$, $\mathbb{P}(Y^{(a)} = 1 \mid X, S=s) = p_s^{(a)}(V)$ where $V^\top = [1, X^\top]$ and $X|S=s \sim \mathcal{N}(\nu_s, I_d)$. 
We set $S \sim \mathrm{B}(0.3)$ to reflect limited RCT data relative to the target population, and $A \mid S = 1 \sim \mathrm{B}(0.5)$. We evaluate the estimators from \Cref{sec:gen_exchangeability_mean} and \ref{sec:gen_exchangeability_cate}, estimating nuisance components—regression surfaces and density ratios—using parametric methods (linear/logistic regression). The red (resp. gray, when displayed) dotted line  represents the treatment effect in the target (resp. source) population. 
A basic linear setting, in which all estimators perform well, is presented in Appendix~\ref{appendix:simulation}.

\textbf{Experiment 1} (Exchangeability in mean and non-linear/non-logistic response): We consider a setting under which \Cref{ass:strong transportability} holds and  $\forall a,s \in \{0,1\}, \quad p_s^{(a)}(V) = \sigma(\beta_0^{\top}V\cdot(V^{\top}\beta_1)^{a})$.  Both G-formula-based estimators exhibit substantial bias across all evaluation metrics, which is expected given that the non-linear response surfaces are misspecified by using linear regression. In contrast, the estimating equation–based estimators remain unbiased across all measures, benefiting from their double robustness property. Among the one-step estimators, only the RD variant is accurate in this setting. This is also anticipated, as one-step estimators generally do not retain double robustness—except in cases where they coincide with the corresponding estimating equation estimator, which holds true for the RD variant.
\begin{figure}[h!]
  \centering
  \includegraphics[width=\textwidth,height=4.5cm]{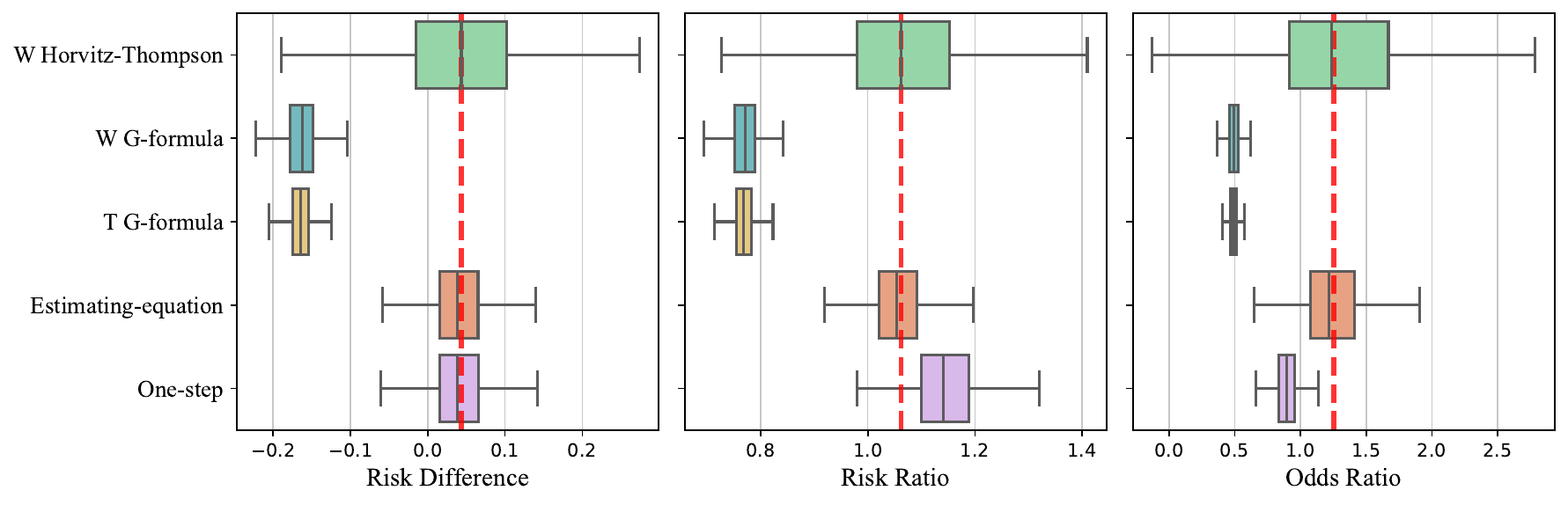}
  \caption{Comparison of estimators across different causal measures under a non-linear outcome model with a sample size of $N = 50{,}000$ and 3,000 repetitions. Source values are 0.45 / 3.2 / 7.5.}
  \label{fig:non_linear_strong}
\end{figure}

\textbf{Experiment 2} (Exchangeability in effect and linear/logistic response): We now consider a setting for which \Cref{ass:weak transportability} holds and such that, depending on the causal measure: (RD) Model \ref{ass:strong_linear_model} and $ \beta_\mathrm{T}^{(a)} = \beta_\mathrm{S}^{(a)} + \theta $, (RR) $p_s^{(a)}(X) = \sigma(X^\top\beta_s) \cdot \sigma(X^\top \gamma)^a$, (OR) $p_s^{(a)}(X) = \sigma(X^\top (\beta_s + a \cdot \gamma))$.
In this setting, \Cref{ass:strong transportability} is no longer satisfied, as the nuisance functions depend on $S$. However, one can verify that \Cref{ass:weak transportability} holds for each model. Estimators introduced in \Cref{sec:gen_exchangeability_mean} still converge for the RD, despite the violation of strong transportability, thanks to the linearity of the RD. In contrast, the RR and OR estimators fail to converge, since these measures are nonlinear. On the other hand, all estimators introduced in \Cref{sec:gen_exchangeability_cate} remain unbiased, as expected. 
\begin{figure}[h!]
  \centering
  \includegraphics[width=\textwidth,height=4.5cm]{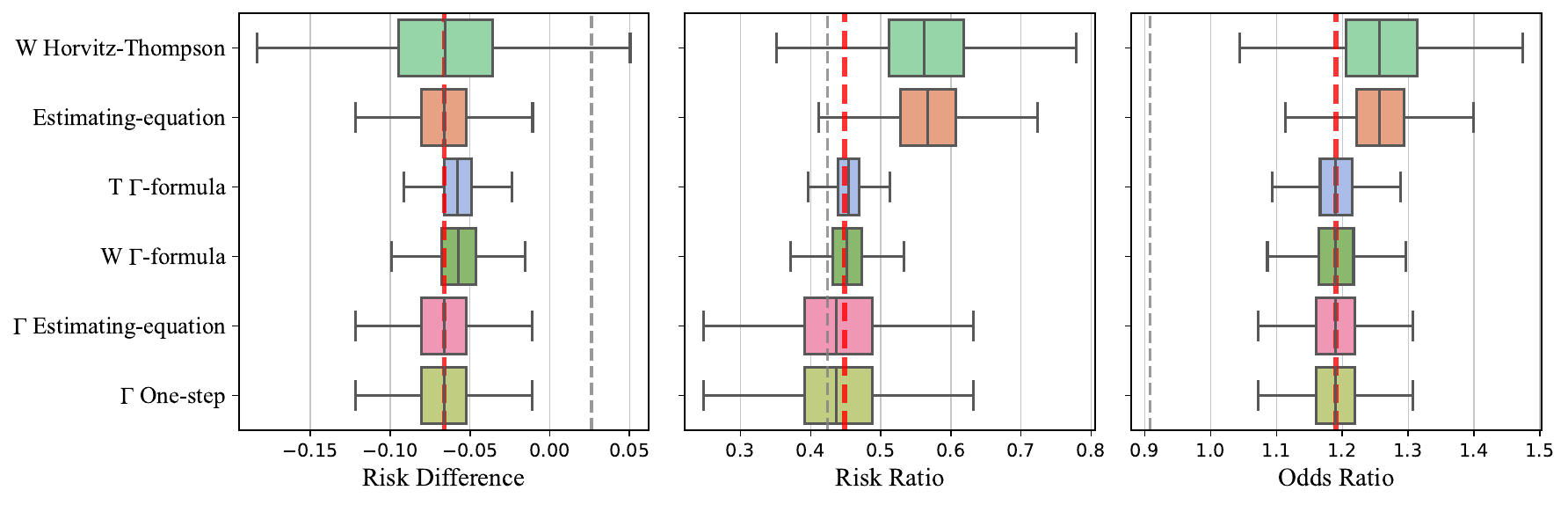}
  \caption{Comparison of estimators across different causal measures under a linear outcome model with a sample size of $N = 50{,}000$ and 3,000 repetitions.}
  \label{fig:linear_weak}
\end{figure}
\subsection{Real-World Experiment}
We evaluate estimators using a  case study on the effectiveness of tranexamic acid on mortality for  brain injury patients, combining data from the \href{https://crash3.lshtm.ac.uk}{CRASH-3 trial} and the \href{http://www.traumabase.eu/fr_FR}{Traumabase registry}. CRASH-3, is a RCT with over 9,000 TBI patients from 29 countries, while Traumabase provides detailed clinical data on 8,000+ patients from 23 French trauma centers. Following \cite{colnet2023causalinferencemethodscombining} we consider six covariates (age, sex, injury time, systolic BP, GCS score, pupil reactivity). Since this is a real dataset, the true treatment effect is unknown. Results are displayed in \Cref{fig:Real_data}. We focus on the estimators from \Cref{sec:gen_exchangeability_mean}, as CATE estimation is challenging on real data. While all considered estimators suggest a slight beneficial effect, our analysis indicates that, contrary to the RCT findings, the treatment appears to have little to no impact on the target population.

\begin{figure}[h!]
  \centering
  \includegraphics[width=\textwidth,height=4.5cm]{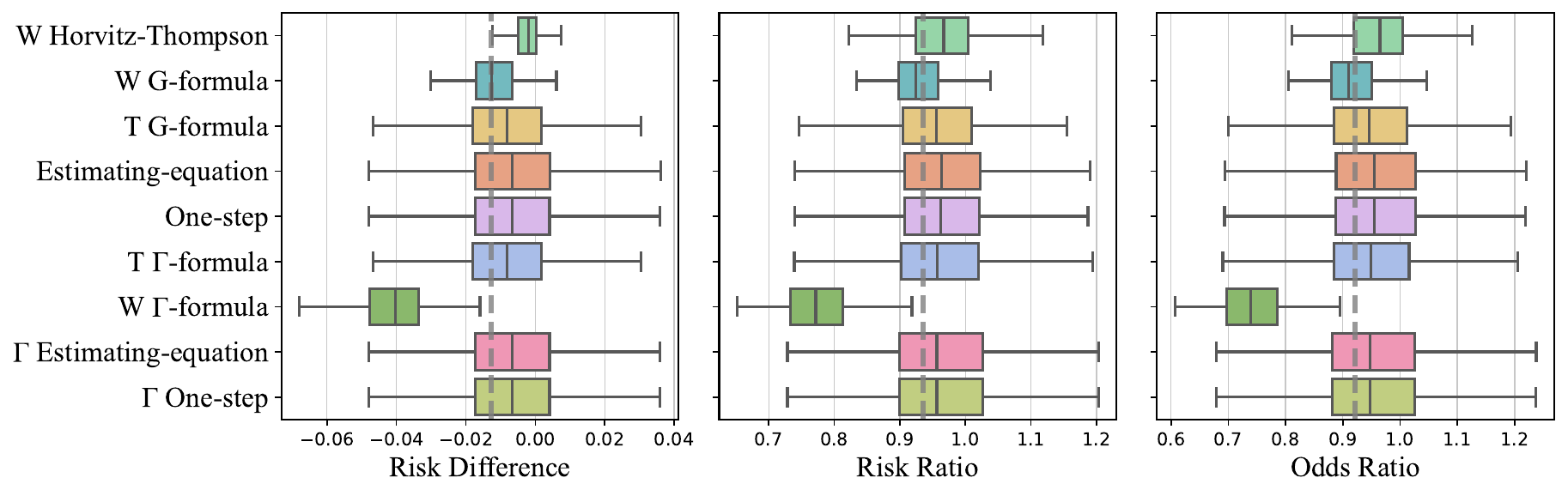}
  \caption{Comparison of estimators across different causal measures on the combined CRASH-3 and Traumabase dataset. Confidence intervals were estimated using stratified bootstrap resampling.
  }
  \label{fig:Real_data}
\end{figure}
\section{Conclusion}

This article introduces a general framework for the generalization of first-moment, population-level causal estimands from RCTs  to broader target populations. We propose different estimators (weighting based on density ratios, outcome regression methods, and approaches based on EIF) and analyze their statistical properties. A central source of complexity lies in the inherent nonlinearity of many causal estimands, which disrupts the alignment between two classical EIF  techniques: one-step and estimating equations. This divergence gives rise to a diverse landscape of possible estimators, which differ depending on the underlying identification assumptions. While assuming exchangeability in effect measure is less restrictive than exchangeability in mean, few estimating methods are available in the former case, as most of them are based on CATE estimation, which is difficult for non-linear measure. In practice, we recommend to resort to EE estimators, as they are provably doubly robust and behave well in our controlled experimental setting. Further layers of complexity arise from the selection of appropriate nuisance estimators, whether parametric or nonparametric, each with its own trade-offs.
Nevertheless, this work represents a step forward in enabling the computation of both absolute and relative causal measures in new target populations, thereby contributing to more informed clinical decision-making and external validity in causal research.

\bibliography{ref.bib}
\appendix
\newpage
\begin{center}
    \LARGE\textbf{Technical Appendices and Supplementary Material}
\end{center}
\tableofcontents
\section{First Moment Causal Measures}\label{appendix:First Moment Causal Measures}

\begin{center}
\renewcommand{\arraystretch}{1.5} 
\setlength{\tabcolsep}{10pt} 
\begin{tabular}{|p{3.7cm}|p{5.2cm}|p{5.8cm}|} 
\hline
\textbf{Measure} & \textbf{Effect Measure} $\Phi(\psi_1,\psi_0)$ & \textbf{Effect Function} $\Gamma(\tau,\psi_0)$ \\
\hline
\textbf{Risk Difference (RD)} & $\Phi(\psi_1,\psi_0) = \psi_1 - \psi_0$ & $\Gamma(\tau,\psi_0) = \psi_0 + \tau$ \\
\hline
\textbf{Risk Ratio (RR)} & $\Phi(\psi_1,\psi_0) = \frac{\psi_1}{\psi_0}$ & $\Gamma(\tau,\psi_0) = \tau \cdot \psi_0$ \\
\hline
\textbf{Odds Ratio (OR)} & $\Phi(\psi_1,\psi_0) = \frac{\psi_1}{1 - \psi_1} \cdot \frac{1 - \psi_0}{\psi_0}$ & $\Gamma(\tau,\psi_0) = \frac{\tau \cdot \psi_0}{1 + \tau \cdot \psi_0 - \psi_0}$ \\
\hline
\textbf{Number Needed to Treat (NNT)} & $\Phi(\psi_1,\psi_0) = \frac{1}{\psi_1 - \psi_0}$ & $\Gamma(\tau,\psi_0) = \frac{1}{\tau} + \psi_0$ \\
\hline
\textbf{Switch Relative Risk (GRRR)} &
\small
$\Phi(\psi_1,\psi_0) = \begin{cases}
1 - \frac{1 - \psi_1}{1 - \psi_0} & \text{if } \psi_1 > \psi_0 \\
0 & \text{if } \psi_1 = \psi_0 \\
-1 + \frac{\psi_1}{\psi_0} & \text{if } \psi_1 < \psi_0
\end{cases}$ 
& 
\scriptsize
$\Gamma(\tau,\psi_0) = \begin{cases}
1 - (1 - \tau)(1 - \psi_0) & \text{if } \psi_0 > 0 \\
\tau & \text{if } \psi_0 = 0 \\
\tau(1 + \psi_0) & \text{if } \psi_0 < 0
\end{cases}$ 
\\
\hline
\textbf{Excess Risk Ratio (ERR)} & $\Phi(\psi_1,\psi_0) = \frac{\psi_1 - \psi_0}{\psi_0}$ & $\Gamma(\tau,\psi_0) = \tau(1 + \psi_0)$ \\
\hline
\textbf{Survival Ratio (SR)} & $\Phi(\psi_1,\psi_0) = \frac{1 - \psi_1}{1 - \psi_0}$ & $\Gamma(\tau,\psi_0) = 1 - \tau(1 - \psi_0)$ \\
\hline
\textbf{Relative Susceptibility (RS)} & $\Phi(\psi_1,\psi_0) = \frac{1 - \psi_0}{1 - \psi_1}$ & $\Gamma(\tau,\psi_0) = 1 - \frac{1 - \tau}{\psi_0}$ \\
\hline
\textbf{Log Odds Ratio (log-OR)} & $\Phi(\psi_1,\psi_0) = \log\left( \frac{\psi_1(1 - \psi_0)}{\psi_0(1 - \psi_1)} \right)$ & $\Gamma(\tau,\psi_0) = \exp(\psi_0) \cdot \frac{\tau}{1 - \tau + \exp(\psi_0) \cdot \tau}$ \\
\hline
\textbf{Odds Product} & $\Phi(\psi_1,\psi_0) = \frac{\psi_1}{1 - \psi_1} \cdot \frac{\psi_0}{1 - \psi_0}$ & $\Gamma(\tau,\psi_0) = \frac{\sqrt{\tau \cdot \frac{\psi_0}{1 - \psi_0}}}{1 + \sqrt{\tau \cdot \frac{\psi_0}{1 - \psi_0}}}$ \\
\hline
\textbf{Arcsine Difference} & \small$\Phi(\psi_1,\psi_0) = \arcsin(\sqrt{\psi_1}) - \arcsin(\sqrt{\psi_0})$ & \small$\Gamma(\tau,\psi_0) = \sin(\psi_0 + \arcsin(\sqrt{\tau}))^2$ \\
\hline
\textbf{Relative Risk Reduction (RRR)} & $\Phi(\psi_1,\psi_0) = 1 - \frac{\psi_1}{\psi_0}$ & $\Gamma(\tau,\psi_0) = \tau(1 - \psi_0)$ \\
\hline
\end{tabular}
\end{center}

\section{Transporting/Reweighting a causal effect under exchangeability of conditional outcome}
\subsection{Identification under exchangeability of conditional outcome}\label{appendix:strong transportability Identification}

\begin{align*}
\bbE_T\left[Y^{(a)}\right] &:= \mathbb{E}_T\left[ \mathbb{E}_T\left[ Y^{(a)} \mid X \right] \right]  \\
&= \mathbb{E}_T\left[ \mathbb{E}_S\left[ Y^{(a)} \mid X \right] \right] \quad \text{(Transportability)} \\
&= \mathbb{E}_S\left[ \mathbb{E}_S\left[ Y^{(a)} \mid X \right] \cdot \frac{P_T(X)}{P_S(X)} \right] \quad \text{(Overlap)} \\ 
&= \mathbb{E}_S\left[ Y^{(a)} \cdot \frac{P_T(X)}{P_S(X)} \right]
\end{align*}

\subsection{Oracle weighted Horvitz-Thompson}\label{appendix:Oracle weighted Horvitz-Thompson/Neyman}
\begin{prp}\label{prp:oracle weighted Horvitz-Thompson}
For ease of notation, we let $\pi_a := \bbP_\mathrm{S}(A =a) = \pi^a(1-\pi)^{1-a}$. Let $\psi_a = \mathbb{E}_T[Y^{(a)}]$ denote the target population mean potential outcome under treatment \( a \in \{0,1\} \). Define the oracle estimator
\[
\psi_{a,\text{wHT}}^* = \frac{1}{n} \sum_{i=1}^N  S_i \cdot r(X_i) \cdot \frac{\ind\{A_i = a\} Y_i}{\pi_a},
\]
where \(r(X)\) denotes the density ratio between the target and source covariate distributions and \( n = \sum_{i=1}^N S_i \) is the number of units in the source sample. Then, under \Cref{ass:internal validity} to \ref{ass:integrability} we have,
\[
\sqrt{N} \left( \psi_{a,\text{wHT}}^* - \psi_a \right) \xrightarrow{d} \mathcal{N}(0, V_{a,\text{wHT}}^*),
\]
where the asymptotic variance \( V_{a,\text{wHT}}^* \) is given by
\begin{align}
V_{a,\text{wHT}}^* = \frac{1}{\alpha} \left( \frac{1}{\pi_a}\mathbb{E}_T\left[r(X) (Y^{(a)})^2\right] - (\psi_a)^2 \right), \label{eq_variance_HT}
\end{align}
\end{prp}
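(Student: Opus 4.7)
The plan is to rewrite the estimator in a form amenable to a one-dimensional CLT combined with Slutsky's lemma, thereby circumventing the randomness of $n$. Set $W_i := S_i\, r(X_i)\, \ind\{A_i = a\} Y_i / \pi_a$ and observe that
\[
\psi^*_{a,\text{wHT}} = \frac{N}{n}\, \bar W_N, \qquad \bar W_N := \frac1N \sum_{i=1}^N W_i.
\]
First I would compute $\mathbb{E}[W]$ by conditioning on $S$: on $\{S=1\}$, randomization (\Cref{ass:internal validity}) gives $\mathbb{E}_\mathrm{S}[\ind\{A=a\} Y \mid X] = \pi_a\, \mu_{(a)}^\mathrm{S}(X)$; then the change of measure $r(X)\, p_\mathrm{S}(X) = p_\mathrm{T}(X)$ and exchangeability in mean (\Cref{ass:strong transportability}) yield $\mathbb{E}[W] = \alpha\, \mathbb{E}_\mathrm{S}[r(X)\, \mu_{(a)}^\mathrm{S}(X)] = \alpha\, \mathbb{E}_\mathrm{T}[\mu_{(a)}^\mathrm{T}(X)] = \alpha\, \psi_a$.

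Next, introduce the centered variables $U_i := W_i - \psi_a\, S_i$. A short algebraic manipulation gives
\[
\sqrt{N}\bigl(\psi^*_{a,\text{wHT}} - \psi_a\bigr) = \frac{N}{n}\, \sqrt{N}\, \bar U_N,
\]
with $\mathbb{E}[U] = \alpha\psi_a - \psi_a \alpha = 0$ and, by \Cref{ass:integrability}, $\Var(U) < \infty$. I would apply the classical i.i.d.\ CLT to obtain $\sqrt N\, \bar U_N \xrightarrow{d} \mathcal N(0, \Var(U))$, and the strong law to obtain $n/N \to \alpha$ almost surely, hence $N/n \to 1/\alpha$. Slutsky's lemma then delivers asymptotic normality with variance $\Var(U)/\alpha^2$.

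The remaining task is the variance identification. I would compute $\Var(W)$ by using $S^2 = S$ and $\ind\{A=a\}^2 = \ind\{A=a\}$, then randomization to integrate out $A$ (producing a factor $\pi_a$), giving $\mathbb{E}[W^2] = (\alpha/\pi_a)\, \mathbb{E}_\mathrm{S}[r(X)^2 (Y^{(a)})^2]$; conditioning on $X$ inside and invoking the change of measure once more rewrites this as $(\alpha/\pi_a)\, \mathbb{E}_\mathrm{T}[r(X) (Y^{(a)})^2]$, in the notation of the proposition. Combining with $\Cov(W,S) = \alpha(1-\alpha)\psi_a$ (again using $S^2 = S$) gives $\Var(U) = \Var(W) - \psi_a^2\, \alpha(1-\alpha) = \alpha\bigl(\mathbb{E}_\mathrm{T}[r(X)(Y^{(a)})^2]/\pi_a - \psi_a^2\bigr)$, so that $\Var(U)/\alpha^2$ matches the announced $V^*_{a,\text{wHT}}$.

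The main obstacle is essentially bookkeeping rather than a deep technical difficulty: one has to carefully chain the three ingredients (RCT randomization to remove the $\ind\{A=a\}$ factor, exchangeability to pass from $\mu_{(a)}^\mathrm{S}$ to $\mu_{(a)}^\mathrm{T}$, and the identity $r\,p_\mathrm{S} = p_\mathrm{T}$ to translate $\mathbb{E}_\mathrm{S}$ into $\mathbb{E}_\mathrm{T}$) in the right order at both the first-moment and second-moment computations. A minor subtlety is that the second-moment manipulation implicitly requires the conditional second moment $\mathbb{E}[(Y^{(a)})^2 \mid X]$ to be transport-invariant (beyond just the mean); this should either be absorbed into the meaning of $\mathbb{E}_\mathrm{T}[r(X)(Y^{(a)})^2]$ as defined via the source-conditional law, or stated as an implicit strengthening of \Cref{ass:strong transportability} needed to make $V^*_{a,\text{wHT}}$ interpretable as a target-population quantity.
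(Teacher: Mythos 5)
Your proof is correct and follows essentially the same route as the paper's: the paper writes $\psi^*_{a,\text{wHT}} = \bar Z_N/\bar S_N$ and applies a bivariate CLT plus the Delta method to the ratio map $(x,y)\mapsto x/y$, whereas you carry out the equivalent linearization exactly via the centering $U_i = W_i - \psi_a S_i$ together with Slutsky's lemma, and the first- and second-moment computations (and the resulting variance) are identical. Your closing caveat is well taken: the paper's step identifying $\mathbb{E}_\mathrm{S}[r(X)^2 (Y^{(a)})^2]$ with $\mathbb{E}_\mathrm{T}[r(X)(Y^{(a)})^2]$ is asserted without comment, and as you note it requires either reading the latter through the source-conditional law of $Y^{(a)}$ given $X$ or a second-moment analogue of \Cref{ass:strong transportability}.
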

\begin{proof}
Define
\[
Z_i = S_i \cdot r(X_i) \cdot \frac{\ind\{A_i = a\} Y_i}{\pi_a},
\]
so that
\[
\psi_{a,\text{wHT}}^* = \frac{1}{N} \sum_{i=1}^N Z_i.
\]

We first compute the expectations of \(Z_i\) and \(S_i\). By the definition of \(Z_i\) and using that  \(S_i \sim \text{Bernoulli}(\alpha)\),
\begin{align*}
\mathbb{E}[Z_i] &= \mathbb{E}\left[ S \cdot r(X) \cdot \frac{\ind\{A = a\} Y}{\pi_a} \right] \\
&= \alpha \, \mathbb{E}_S\left[ r(X) \cdot \frac{\ind\{A = a\} Y}{\pi_a} \right] \\
&= \alpha \, \mathbb{E}_S\left[ r(X) \cdot Y^{(a)} \right] \\
&= \alpha \, \mathbb{E}_T[Y^{(a)}] \\
&= \alpha \psi_a,
\end{align*}
where we used the consistency assumption \(Y = Y^{(A)}\), the randomization of \(A\), and the density ratio property. 
Since $(Z_i, S_i)$ for $i = 1, \hdots, N$ are i.i.d., one can apply the multivariate central limit theorem to
\[
\left( \frac{1}{N} \sum_{i=1}^N Z_i, \quad \frac{1}{N} \sum_{i=1}^N S_i \right),
\]
which leads to 
\[
\sqrt{N} \left( 
\begin{pmatrix}
\frac{1}{N} \sum_{i=1}^N Z_i \\
\frac{1}{N} \sum_{i=1}^N S_i
\end{pmatrix}
-
\begin{pmatrix}
\mathbb{E}[Z_i] \\
\mathbb{E}[S_i]
\end{pmatrix}
\right)
\xrightarrow{d} \mathcal{N}(0, \Sigma),
\]
where
\[
\Sigma = 
\begin{pmatrix}
\mathbb{V}[Z_i] & \mathrm{Cov}(Z_i, S_i) \\
\mathrm{Cov}(S_i, Z_i) & \mathbb{V}[S_i]
\end{pmatrix}.
\]

Noting that \(\psi_{a,\text{wHT}}^*\) can be written as
\[
\psi_{a,\text{wHT}}^* = \frac{\frac{1}{N} \sum_{i=1}^N Z_i}{\frac{1}{N} \sum_{i=1}^N S_i},
\]
one can apply the Delta method to the map \(h: (x,y) \mapsto x/y\), whose gradient evaluated at \((\alpha \psi_a, \alpha)\) is
\[
\nabla h (\alpha \psi_a, \alpha) = \left( \frac{1}{\alpha},\ -\frac{\psi_a}{\alpha} \right).
\]
Thus,
\[
\sqrt{N}\left( \psi_{a,\text{wHT}}^*  - \psi_a \right) \xrightarrow{d} \mathcal{N}(0, V_{a,\text{wHT}}^*),
\]
where
\begin{align}
V_{a,\text{wHT}}^* & = \nabla h (\alpha \psi_a, \alpha)^\top \Sigma \nabla h (\alpha \psi_a, \alpha) \\
& =  \frac{1}{\alpha^2} \mathbb{V}[Z_i] + \frac{(\psi_a)^2}{\alpha^2} \mathbb{V}[S_i] - \frac{2 \psi_a}{\alpha^2} \mathrm{Cov}(Z_i, S_i).
\end{align}

It remains to compute each term. Since \(S_i \sim \text{Bernoulli}(\alpha)\),
\[
\mathbb{V}[S_i] = \alpha(1-\alpha).
\]
By direct computation,
\[
\mathbb{V}[Z_i] = \alpha \, \mathbb{E}_T\left[\frac{r(X) (Y^{(a)})^2}{\pi_a}\right] - (\alpha \psi_a)^2.
\]
Moreover, since \(S_i \times Z_i = Z_i\), we have
\[
\mathrm{Cov}(Z_i, S_i) = (1-\alpha)\alpha \psi_a.
\]

Substituting into the expression of \(V_{a,\text{wHT}}^*\), we find
\begin{align}
V_{a,\text{wHT}}^* = \frac{1}{\alpha} \left( \frac{1}{\pi_a}\mathbb{E}_T\left[r(X) (Y^{(a)})^2\right] - (\psi_a)^2 \right),
\end{align}
as claimed.
\end{proof}

\begin{prp}[Asymptotic Normality Oracle weighted Horvitz-Thompson]\label{prp:Oracle weighted Horvitz-Thompson/Neyman}
Let \(\tau_{\Phi,\text{wHT}}^*\) denote the oracle weighted Horvitz-Thompson estimator. Then under \Cref{ass:internal validity} to \ref{ass:integrability}, we have:
\[
\sqrt{N} \left(\tau_{\Phi,\text{wHT}}^* - \tau_{\Phi}^{T}\right) \stackrel{d}{\rightarrow} \mathcal{N}\left(0, V_{\Phi, \text{wHT}}^* \right).
\]
\end{prp}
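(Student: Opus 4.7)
The strategy is to upgrade the marginal CLT of Proposition~\ref{prp:oracle weighted Horvitz-Thompson} to a \emph{joint} CLT for the pair $(\psi_{1,\text{wHT}}^*, \psi_{0,\text{wHT}}^*)$, and then push it through the delta method with the effect measure $\Phi$. Since by definition $\tau_{\Phi,\text{wHT}}^* = \Phi(\psi_{1,\text{wHT}}^*, \psi_{0,\text{wHT}}^*)$ and $\tau_\Phi^T = \Phi(\psi_1,\psi_0)$, a bivariate Gaussian limit for the two oracle plug-ins will immediately deliver the univariate Gaussian limit for their image under any $\Phi$ that is $C^1$ at $(\psi_1,\psi_0)$.

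First I would replay the argument of Proposition~\ref{prp:oracle weighted Horvitz-Thompson} on the triple
\[
\bigl( \tfrac{1}{N}\sum_{i=1}^N Z_i^{(1)},\ \tfrac{1}{N}\sum_{i=1}^N Z_i^{(0)},\ \tfrac{1}{N}\sum_{i=1}^N S_i \bigr), \qquad Z_i^{(a)} := S_i\, r(X_i)\, \tfrac{\ind\{A_i=a\} Y_i}{\pi_a}.
\]
The i.i.d.\ multivariate CLT, justified by Assumption~\ref{ass:integrability}, yields joint asymptotic normality around $(\alpha \psi_1,\, \alpha \psi_0,\, \alpha)$ with a $3\times 3$ covariance matrix $\Sigma$. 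The diagonal and $S$-related blocks of $\Sigma$ were already computed in the proof of Proposition~\ref{prp:oracle weighted Horvitz-Thompson}; the only genuinely new entry is $\Cov(Z_i^{(1)}, Z_i^{(0)})$, which is immediate from $\ind\{A_i=0\}\ind\{A_i=1\}=0$, forcing $\bbE[Z_i^{(0)} Z_i^{(1)}] = 0$ and hence $\Cov(Z_i^{(0)}, Z_i^{(1)}) = -\alpha^2 \psi_0 \psi_1$. A first application of the delta method to the smooth map $h(x_1,x_0,y) = (x_1/y,\, x_0/y)$ at $(\alpha\psi_1,\alpha\psi_0,\alpha)$ then produces
\[
\sqrt{N}\bigl( (\psi_{1,\text{wHT}}^*, \psi_{0,\text{wHT}}^*) - (\psi_1, \psi_0) \bigr) \stackrel{d}{\rightarrow} \cN(0,\, \Sigma_\psi),
\]
for an explicit $2\times 2$ matrix $\Sigma_\psi$ whose diagonal entries coincide with $V_{a,\text{wHT}}^*$ of \eqref{eq_variance_HT}. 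A second, final application of the delta method with $\Phi$ at $(\psi_1,\psi_0)$ yields the claimed
\[
\sqrt{N}\bigl(\tau_{\Phi,\text{wHT}}^* - \tau_\Phi^{T}\bigr) \stackrel{d}{\rightarrow} \cN\bigl(0,\, V_{\Phi,\text{wHT}}^*\bigr), \qquad V_{\Phi,\text{wHT}}^* = \nabla\Phi(\psi_1,\psi_0)^\top \Sigma_\psi\, \nabla\Phi(\psi_1,\psi_0).
\]

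There is no deep probabilistic obstacle here --- the pipeline is CLT followed by two delta-method steps --- so the hard part will be the bookkeeping: assembling the six entries of $\Sigma$, pushing them through the Jacobian of $h$ to obtain $\Sigma_\psi$, and then contracting with $\nabla\Phi$ to get closed-form variances measure-by-measure (a natural sanity check is that taking $\Phi$ to be a coordinate projection recovers the marginal $V_{a,\text{wHT}}^*$). The only conceptual subtlety is requiring $\Phi$ to be differentiable at $(\psi_1,\psi_0)$ with nonzero gradient, which is baked into $(\psi_1,\psi_0) \in D_\Phi$ together with the injectivity hypothesis of \Cref{dfn:First Moment Causal Measures}; the non-smooth cases such as the Switch Relative Risk require a local-linearization caveat away from their kink.
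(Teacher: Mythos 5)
Your proposal is correct and follows essentially the same route as the paper: establish joint asymptotic normality of $(\psi_{1,\text{wHT}}^*,\psi_{0,\text{wHT}}^*)$, observe that the cross-covariance reduces to $-\psi_1\psi_0$ because $\ind\{A=1\}\ind\{A=0\}=0$, and conclude by the delta method applied to $\Phi$. If anything, your two-step construction via the triple $(\bar Z^{(1)},\bar Z^{(0)},\bar S)$ is slightly more explicit than the paper, which simply asserts the joint CLT as a consequence of its marginal Proposition~\ref{prp:oracle weighted Horvitz-Thompson}.
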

\begin{proof}
Noting that \(\tau_{\Phi,\text{wHT}}^* = \Phi(\psi_{1, \text{wHT}}^*, \psi_{0, \text{wHT}}^*)\) and using \Cref{prp:oracle weighted Horvitz-Thompson}, we know that \((\psi_{1, \text{wHT}}^*, \psi_{0, \text{wHT}}^*)\) are jointly asymptotically normal. Specifically,
\[
\sqrt{N} \begin{pmatrix}
\psi_{1, \text{wHT}}^* - \psi_1 \\
\psi_{0, \text{wHT}}^* - \psi_0
\end{pmatrix}
\stackrel{d}{\rightarrow}
\mathcal{N}\left(0,
\Sigma_{\pi}^*
\right),
\]
where \(\Sigma_{\pi}^*\) is the asymptotic covariance matrice. Applying the delta method to the smooth function \(\Phi: \mathbb{R}^2 \to \mathbb{R}\), we have
\[
\sqrt{N}(\tau_{\Phi,\text{wHT}}^* - \tau_\Phi^T) \stackrel{d}{\to} \mathcal{N}\left(0, \nabla \Phi^\top \Sigma_{\pi}^* \nabla \Phi\right),
\]
where \(\nabla \Phi\) denotes the gradient of \(\Phi\) evaluated at \((\psi_1, \psi_0)\). Moreover, because the treatment assignment \(A\) is binary (\(A \in \{0,1\}\)), we have that for each unit, \(A_i (1 - A_i) = 0\), so the covariance between \(\psi_{1}^*\) and \(\psi_{0}^*\) is
\[
\mathrm{Cov}(\psi_{1, \text{wHT}}^*, \psi_{0, \text{wHT}}^*) = -\psi_1 \psi_0.
\]
Expanding the delta method variances and fully factorizing we get:
\[
\begin{aligned}
V_{\Phi, \text{wHT}}^* &= \frac{1}{\alpha} \Bigg( 
\left( \frac{\partial \Phi}{\partial \psi_1} \right)^2  \mathbb{E}_T\left[\frac{r(X) (Y^{(1)})^2}{\mathbb{P}_S(A=1)}\right] 
+ \left( \frac{\partial \Phi}{\partial \psi_0} \right)^2 \mathbb{E}_T\left[\frac{r(X) (Y^{(0)})^2}{\mathbb{P}_S(A=0)}\right] \\
&\quad - \left( \frac{\partial \Phi}{\partial \psi_1} \mathbb{E}_T\left[Y^{(1)}\right] + \frac{\partial \Phi}{\partial \psi_0} \mathbb{E}_T\left[Y^{(0)}\right] \right)^2
\Bigg).
\end{aligned}
\]
\end{proof}

\begin{ex}\label{varsity:oracle_HT}
    \[
\renewcommand{\arraystretch}{1.5} 
\setlength{\tabcolsep}{11pt} 
\begin{array}{|c|c|}
\hline
\textbf{Measure} & \textbf{Variance} \\
\hline
\textbf{Risk Difference (RD)} & 
\displaystyle \frac{1}{\alpha}\left(\frac{\mathbb{E}_T\left[r(X)(Y^{(1)})^2 \right]}{\pi} + \frac{\mathbb{E}_T\left[r(X)(Y^{(0)})^2 \right]}{1-\pi}-(\tau^T_{RD})^2\right) \\
\hline
\textbf{Risk Ratio (RR)} & 
\displaystyle \frac{(\tau^T_{RR})^2}{\alpha}\left(\frac{\mathbb{E}_T\left[r(X)(Y^{(1)})^2 \right]}{\pi\mathbb{E}_T\left[Y^{(1)} \right]^2} + \frac{\mathbb{E}_T\left[r(X)(Y^{(0)})^2 \right]}{(1-\pi)\mathbb{E}_T\left[Y^{(0)} \right]^2}\right) \\
\hline
\textbf{Odds Ratio (OR)} & 
\displaystyle \frac{(\tau^T_{OR})^2}{\alpha} \left(
\frac{\mathbb{E}_T\left[r(X)(Y^{(1)})^2 \right]}{\pi (\mathbb{E}_T[Y^{(1)}])^2}
+ \frac{\mathbb{E}_T\left[r(X)(Y^{(0)})^2 \right]}{(1-\pi) (\mathbb{E}_T[Y^{(0)}])^2}
- 1\right)\\
\hline
\end{array}
\]
\end{ex}

\subsection{Logistic weighted Horvitz-Thomson Estimator}\label{appendix:Logistic weighted Horvitz-Thompson/Neyman}
\begin{prp}\label{prp:weighted Horvitz-Thompson}
Let $\psi_a = \mathbb{E}_T[Y^{(a)}]$ denote the target population mean potential outcome under treatment \( a \in \{0,1\} \). Define the estimator
\[
\hat{\psi}_{a,\text{wHT}} = \frac{1}{n} \sum_{i=1}^N  S_i \cdot r(X_i, \hat{\beta}_N) \cdot \frac{\ind\{A_i = a\} Y_i}{\pi_a}, \quad \text{with} \quad r(x, \hat{\beta}_N) = \frac{n}{N - n} \cdot \frac{1 - \sigma(x, \hat{\beta}_N)}{\sigma(x, \hat{\beta}_N)}
\]
where \( n = \sum_{i=1}^N S_i \) and
\( \hat{\beta}_N \) the maximum likelihood estimate (MLE) from logistic regression of the selection indicator \( S \) on covariates \( X \), and \( \sigma(x, \beta) = (1 + e^{-x^\top \beta})^{-1} \) the logistic function. Then, under \Cref{ass:internal validity} to \ref{ass:logistic},
\[
\sqrt{N} \begin{pmatrix}
\hat{\psi}_{1,\text{wHT}} - \psi_1 \\
\hat{\psi}_{0,\text{wHT}} - \psi_0
\end{pmatrix}
\stackrel{d}{\rightarrow}
\mathcal{N}\left(0,
\Sigma_{\text{wHT}}
\right), \quad \text{with}\quad \Sigma_{\text{wHT}} = 
\begin{pmatrix}
V_{1,\text{wHT}} & C_{\text{wHT}} \\
C_{\text{wHT}} & V_{0,\text{wHT}}
\end{pmatrix},
\]
where the asymptotic variance \( V_{a,\text{wHT}} \) is given by
\begin{align}\label{eq:V_a_pi}
    V_{a,\text{wHT}}&  = \frac{\mathbb{E}_T[r(X)(Y^{(a)})^2]}{\alpha\pi_a}-\frac{\mathbb{E}_T[Y^{(a)}]^2}{1-\alpha}\\
    & \quad -\,\mathbb{E}_T[Y^{(a)}V^\top]Q^{-1}\mathbb{E}_T[Y^{(a)}V]+2\,\mathbb{E}_T[Y^{(a)}V^\top]Q^{-1}\mathbb{E}_T[\sigma(X)Y^{(a)}V],\\
\textrm{and} \quad C_\text{wHT} &= -\frac{\mathbb{E}_T[Y^{(1)}]\mathbb{E}_T[Y^{(0)}]}{1-\alpha} \\
&\quad+ \mathbb{E}_T[Y^{(1)}V^\top]Q^{-1}\mathbb{E}_T[\sigma(X)Y^{(0)}V]\\
&\quad+ \mathbb{E}_T[Y^{(0)}V^\top]Q^{-1}\mathbb{E}_T[\sigma(X)Y^{(1)}V]\\
&\quad - \mathbb{E}_T[Y^{(1)}V^\top]Q^{-1}\mathbb{E}_T[Y^{(0)}V]
\end{align}
with \( V = (1,X) \) and \( Q = (1-\alpha) \mathbb{E}_T\left[\sigma(X, \beta)VV^\top\right].
\)
\end{prp}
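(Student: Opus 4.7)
The plan is to cast $(\hat\beta_N, \bar\alpha_N, \hat\psi_{0,\mathrm{wHT}}, \hat\psi_{1,\mathrm{wHT}})$ as the joint solution of an M-estimating system and then apply the standard sandwich-variance theorem. Using the explicit form $r(x,\hat\beta_N)=\frac{n}{N-n}\cdot\frac{1-\sigma(x,\hat\beta_N)}{\sigma(x,\hat\beta_N)}$, the estimator rewrites as
\[
\hat\psi_{a,\mathrm{wHT}} = \frac{1}{N-n}\sum_{i=1}^N S_i\, W_a(Z_i,\hat\beta_N), \quad W_a(Z,\beta) := \frac{1-\sigma(X,\beta)}{\sigma(X,\beta)}\cdot\frac{\ind\{A=a\}Y}{\pi_a}.
\]
Under the logistic model, the identity $r(X)=\frac{\alpha}{1-\alpha}\cdot\frac{1-\sigma(X,\beta_\infty)}{\sigma(X,\beta_\infty)}$ holds almost surely; combined with \Cref{ass:internal validity} and identification \eqref{id:reweight outcomes}, it yields $\mathbb{E}[S\,W_a(Z,\beta_\infty)]=(1-\alpha)\psi_a$, which drives the unbiasedness.

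Next, I introduce joint estimating functions for $\theta=(\beta,\alpha,\psi_0,\psi_1)$:
\[
m_\beta(Z,\theta)=V(S-\sigma(X,\beta)),\quad m_\alpha(Z,\theta)=S-\alpha,\quad m_{\psi_a}(Z,\theta)=SW_a(Z,\beta)-(1-\alpha)\psi_a.
\]
The MLE $\hat\beta_N$, the empirical proportion $\bar\alpha_N=n/N$, and the pair $(\hat\psi_{0,\mathrm{wHT}},\hat\psi_{1,\mathrm{wHT}})$ jointly solve the corresponding sample equations. Under the sub-Gaussianity of $X$, positive definiteness of $\mathbb{E}[VV^\top]$, correctness of the logistic model, and the integrability of \Cref{ass:integrability}, the usual M-estimation regularity conditions hold, yielding $\sqrt{N}(\hat\theta-\theta^\star)\xrightarrow{d}\mathcal{N}(0,A^{-1}B A^{-\top})$ with $A=\mathbb{E}[-\partial_\theta m]$ and $B=\mathbb{E}[mm^\top]$ evaluated at the truth. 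The asymptotic normality of $(\hat\psi_{0,\mathrm{wHT}},\hat\psi_{1,\mathrm{wHT}})$ with covariance $\Sigma_{\mathrm{wHT}}$ follows by extracting the lower-right $2\times 2$ block of this sandwich.

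The key computational step is to evaluate $A$ and $B$. Because the parameters appear in a nested fashion, $A$ is block-lower-triangular with diagonal blocks $Q$, $1$, $(1-\alpha)$, $(1-\alpha)$: the top-left block equals $\mathbb{E}[VV^\top\sigma(1-\sigma)]=(1-\alpha)\mathbb{E}_\mathrm{T}[\sigma(X)VV^\top]$ via $\mathbb{E}[g(X)(1-\sigma(X))]=(1-\alpha)\mathbb{E}_\mathrm{T}[g(X)]$; the $\beta$–$\psi_a$ off-diagonal block equals $(1-\alpha)\mathbb{E}_\mathrm{T}[Y^{(a)}V]$, obtained from $\partial_\beta W_a(Z,\beta)=-V\cdot\frac{1-\sigma}{\sigma}\cdot\frac{\ind\{A=a\}Y}{\pi_a}$ together with the same change-of-measure identity; the $\alpha$–$\psi_a$ entry is $-\psi_a$. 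Inverting the triangular $A$ explicitly gives $(A^{-1})_{\psi_a,\beta}=-\mathbb{E}_\mathrm{T}[Y^{(a)}V^\top]Q^{-1}$, $(A^{-1})_{\psi_a,\alpha}=\psi_a/(1-\alpha)$ and $(A^{-1})_{\psi_a,\psi_a}=1/(1-\alpha)$. The matrix $B$ is assembled from direct second-moment computations using the identities $\mathbb{E}[Sg(X)]=\alpha\mathbb{E}_\mathrm{S}[g(X)]$ and $\mathbb{E}_\mathrm{S}[r(X)f(X)]=\mathbb{E}_\mathrm{T}[f(X)]$ to convert source quantities into target expectations involving $r\cdot(Y^{(a)})^j$.

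The main obstacle is the algebraic bookkeeping needed to reduce the resulting sandwich to the announced form. The contributions can be tracked as follows: the oracle-like $\mathbb{E}_\mathrm{T}[r(X)(Y^{(a)})^2]/(\alpha\pi_a)$ comes from $\mathbb{E}[(SW_a)^2]$ after rewriting $[(1-\sigma)/\sigma]^2$ via the density-ratio identity; the $-\psi_a^2/(1-\alpha)$ correction encodes the joint effect of the $\bar\alpha_N$ variability and the $(1-\alpha)\psi_a$ offset in $m_{\psi_a}$; the $Q^{-1}$ terms stem from the propensity-estimation correction, namely combining $(A^{-1})_{\psi_a,\beta}$ with $B_{\beta\beta}=Q$ (producing $-\mathbb{E}_\mathrm{T}[Y^{(a)}V^\top]Q^{-1}\mathbb{E}_\mathrm{T}[Y^{(a)}V]$) and with the cross block $\mathbb{E}[m_\beta m_{\psi_a}^\top]=(1-\alpha)\mathbb{E}_\mathrm{T}[(1-\sigma(X))Y^{(a)}V]$ (producing $+2\mathbb{E}_\mathrm{T}[Y^{(a)}V^\top]Q^{-1}\mathbb{E}_\mathrm{T}[\sigma(X)Y^{(a)}V]$ after expanding $1-\sigma$). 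The covariance $C_{\mathrm{wHT}}$ follows analogously: the direct cross-variance $\mathbb{E}[(SW_0)(SW_1)]$ vanishes since $\ind\{A=0\}\ind\{A=1\}=0$, so only the $\hat\beta_N$ and $\bar\alpha_N$ contributions survive.
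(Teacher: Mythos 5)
Your overall strategy is exactly the paper's: stack the logistic score $V(S-\sigma(X,\beta))$, the sample-fraction equation, and the two weighted-mean equations into one M-estimating system, apply the sandwich theorem, and exploit the block-triangular structure of $A$. Your expressions for the blocks of $A$, for $(A^{-1})_{\psi_a,\beta}$, $(A^{-1})_{\psi_a,\alpha}$, $(A^{-1})_{\psi_a,\psi_a}$, and for $B_{\psi_a\psi_a}$, $B_{\psi_a\beta}$ all check out (your system is the paper's multiplied by $1-\theta_2$ on the $\psi_a$ rows, which leaves the sandwich invariant).

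There is, however, a concrete error in the final bookkeeping, which is where the content of the variance formula actually lies. You attribute the term $-\psi_a^2/(1-\alpha)$ to ``the joint effect of the $\bar\alpha_N$ variability and the $(1-\alpha)\psi_a$ offset in $m_{\psi_a}$.'' Those contributions alone do not give this: the $(\psi_a,\psi_a)$, $(\alpha,\alpha)$ and $(\psi_a,\alpha)$ pieces of the quadratic form contribute $-\psi_a^2+\tfrac{\alpha}{1-\alpha}\psi_a^2+2\psi_a^2=+\psi_a^2/(1-\alpha)$, with the \emph{wrong} sign. What you have dropped is the nonzero cross-covariance between the sample-fraction equation and the logistic score, $B_{\alpha\beta}=\mathbb{E}[(S-\alpha)(S-\sigma(X))V^\top]=(1-\alpha)\mathbb{E}_T[\sigma(X)V^\top]$, whose contribution $2\,(A^{-1})_{\psi_a,\alpha}\,B_{\alpha\beta}\,(A^{-1})_{\psi_a,\beta}^\top=-2\psi_a\,\mathbb{E}_T[\sigma(X)V^\top]Q^{-1}\mathbb{E}_T[Y^{(a)}V]$ simplifies to $-2\psi_a^2/(1-\alpha)$ via the identity $\mathbb{E}_T[\sigma(X)V]^\top Q^{-1}=u_1^\top/(1-\alpha)$ (the paper's Lemma~\ref{lem:inv_Q}, which you never invoke). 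Only after adding this does the coefficient become $-1/(1-\alpha)$ as stated; as written, your tracking of contributions sums to a variance that is off by $2\psi_a^2/(1-\alpha)$, and the analogous block is likewise needed to produce the $-\psi_1\psi_0/(1-\alpha)$ term in $C_{\text{wHT}}$. A secondary, more minor point: you assert that ``the usual M-estimation regularity conditions hold,'' whereas the paper devotes real work to verifying them (integrability of the second derivatives of the estimating function using sub-Gaussianity of $X$ and square-integrability of $Y$, and uniqueness of $\theta_\infty$ via strict convexity of the population logistic loss); a complete proof would need at least a sketch of these verifications.
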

\begin{proof}
Let $Z = (S, X, S \times A, S \times Y)$ and define the parameter vector $\theta = (\theta_0, \theta_1, \theta_2, \beta)$. We define the estimating function $\lambda(Z, \theta)$ and the estimator $\hat{\theta}_N$ as follows:

$$
\lambda(Z, \theta) =
\begin{pmatrix}
    \frac{1}{1-\theta_2}\frac{1-\sigma(X,\beta)}{\sigma(X,\beta)}\frac{S \ind\{A = 0\} Y}{\mathbb{P}_S(A = 0)} - \theta_0 \\
    \frac{1}{1-\theta_2}\frac{1-\sigma(X,\beta)}{\sigma(X,\beta)}\frac{S \ind\{A = 1\} Y}{\mathbb{P}_S(A = 1)} - \theta_1 \\
    S - \theta_2 \\
    V (S - \sigma(X, \beta))
\end{pmatrix}, \quad 
\hat{\theta}_N =
\begin{pmatrix}
    \hat{\psi}_{0,\text{wHT}} \\
    \hat{\psi}_{1,\text{wHT}} \\
    \hat{\alpha} := \frac{1}{N} \sum_{i=1}^N S_i \\
    \hat{\beta}_N
\end{pmatrix}.
$$

Note that the reweighting function can be expressed as:

$$
r(X_i, \hat{\beta}_N) = \frac{1 - \sigma(X_i, \hat{\beta}_N)}{\sigma(X_i, \hat{\beta}_N)} \cdot \frac{\hat{\alpha}}{1 - \hat{\alpha}}.
$$

Thus, we can rewrite the estimator as:

\begin{align}
  \hat{\psi}_{a,\text{wHT}} &   = \frac{1}{N \hat{\alpha}} \sum_{j=1}^N S_j r(X_j, \hat{\beta}_N) \frac{\ind\{A_j = a\} Y_j}{\pi_a} \\
   & = \frac{1}{N} \sum_{j=1}^N S_j \frac{1 - \sigma(X_j, \hat{\beta}_N)}{\sigma(X_j, \hat{\beta}_N)(1 - \hat{\alpha})} \frac{\ind\{A_j = a\} Y_j}{\pi_a}. \label{eq_proof_1}
\end{align}
Furthermore, the log-likelihood function of $\beta$ is:
\[
- \ln L_N(\beta) = - \sum_{i=1}^{N} s_i \log \sigma(X_i ; \beta) + (1 - s_i) \log (1 - \sigma(X_i ; \beta))
\]
where \( \sigma(X; \beta) = (1 + \exp(-X^\top \beta_1 - \beta_0))^{-1} \). Simple calculations show that 
\[
\frac{\partial \ln L_N}{\partial \beta_0} (\beta) = - \sum_{i=1}^{N} (S_i - \sigma(X_i; \beta)) \quad \text{and} \quad \frac{\partial\ln L_N }{\partial \beta_1}  (\beta) = - \sum_{i=1}^{N} X_i (S_i - \sigma(X_i; \beta)).
\]
Recalling that $V = (1,X)$, by definition of the MLE  \(\boldsymbol{\hat{\beta}}_N\), we get
\[
\sum_{i=1}^{N} \lambda_3(Z_i, \hat{\theta}_N)= \sum_{i=1}^{N} V_i (S_i - \sigma(X_i; \boldsymbol{\hat{\beta}}_N)) = 0.
\]
Gathering the previous equality and Equation~\eqref{eq_proof_1}, we obtain
\begin{align}
\sum_{i=1}^{N} \lambda(Z_i, \boldsymbol{ \hat \theta}_n) = 0,
\end{align}
which proves that \( \hat{\theta}_N\) is an M-estimator of type \(\lambda\).  Furthermore, letting \(\theta_{\infty} = (\bbE_T \left[Y^{(0)}\right], \bbE_T \left[Y^{(1)}\right], \alpha, \beta_\infty)\), we can compute the following quantities:  
\begin{align*}
\bbE \left[\frac{1-\sigma(X)}{\sigma(X)(1-\alpha)}\frac{S\ind\{A = a\} Y}{\pi_a}\right] &= \bbE \left[ \frac{r(X)}{\alpha}\frac{S \ind\{A = a\}  Y^{(a)}}{\pi_a}\right]\\
&= \frac{1}{\pi_a\alpha} \P(S=1)\bbE \left[r(X)\ind\{A = a\}Y^{(a)}|S=1\right] \\
&= \frac{1}{\pi_a} \bbE_S \left[r(X)\ind\{A = a\}Y^{(a)}\right]\\
&= \bbE_S \left[r(X)Y^{(a)}\right] \text{since} \quad r(X) = P_T(X)/P_S(X)\\
&= \bbE_T \left[Y^{(a)}\right].
\end{align*}
Thus, we have 
\(\mathbb{E}\left[\lambda_0(Z, \theta_\infty) \right] = \mathbb{E}\left[\lambda_1(Z, \theta_\infty) \right]= 0\). Besides,
\[
\begin{aligned}
    \mathbb{E}\left[\lambda_3(Z, \theta_\infty) \right] &= \mathbb{E}\left[V(S - \sigma(X))\right] \\
    &= \mathbb{E}\left[V \cdot \mathbb{E}\left[S - \sigma(X) \mid X\right]\right] && \text{(Law of Total Probability)} \\
    &= \mathbb{E}\left[V \cdot \left(\mathbb{E}\left[S \mid X\right] - \sigma(X)\right)\right] && \text{($\sigma(X)$ is a function of $X$)} \\
    &= 0 && \text{(Definition of $\sigma(X)$)} \\
\end{aligned}
\]
Therefore, we have
\begin{align}\label{M_esti}
    \mathbb{E}\left[\lambda(Z, \theta_\infty) \right] = 0. 
\end{align}

Now, we show that $\theta_\infty$ defined above is the unique value that satisfies \eqref{M_esti}. We directly have that \( \theta_2 = \alpha\). Let 
\[
L(\beta)=-\,\mathbb{E}\Bigl[S \,\ln\bigl(\sigma(X,\beta)\bigr) \;+\;\bigl(1 - S\bigr)\,\ln\bigl(1 - \sigma(X,\beta)\bigr)\Bigr].
\]
A direct calculation shows that 
\[
\nabla_{\beta} L(\beta)=\mathbb{E}\Bigl[V \bigl(\sigma(X,\beta) - S \bigr)\Bigr] \quad \text{and} \quad \nabla^2_{\beta} L(\beta)=\mathbb{E}\Bigl[VV^\top  \sigma(X,\beta)\bigl(1 - \sigma(X,\beta)\bigr)\Bigr].
\]
Since $\mathbb{E}[S \mid X] = \sigma(X,\beta_\infty)$, and $V=(1,X)$, 
\[
\nabla_{\beta} L(\beta_\infty)=\mathbb{E}\Bigl[V \,\bigl(e(X,\beta_\infty) - S\bigr)\Bigr] = 0 
\]
making $\beta_\infty$ a stationary point. Furthermore, using overlap we have $\sigma(X,\beta)\bigl(1 - \sigma(X,\beta)\bigr) \geq  \eta^2$ therefore $\forall v \in \mathbb{R}^{p+1}$:
\begin{align*}
    v^\top \nabla^2_{\beta} L(\beta)v&=\mathbb{E}\Bigl[||V^\top v||_2^2 \, \sigma(X,\beta)\,\bigl(1 - \sigma(X,\beta)\bigr)\Bigr]\\
    &\geq \eta^2 \mathbb{E}\Bigl[||V^\top v||_2^2\Bigr]\\
     &\geq \eta^2 v^\top\mathbb{E}\Bigl[ V\,V^\top \,\Bigr]v.
\end{align*}
Since we assumed that $\mathbb{E}[
X\,X^\top]$ is positive-definite, the Hessian $\nabla^2_{\beta} L(\beta)$ is positive-definite, so $L(\beta)$ is strictly convex. Hence there is a unique global minimizer of $L(\beta)$; since $\beta_\infty$ is a critical point, it must be that unique minimizer. Consequently, any solution to 
\[
\mathbb{E}\Bigl[V\,\bigl(\sigma(X,\beta) - S\bigr)\Bigr] = 0
\]
must equal $\beta_\infty$. Since \(\beta, \theta_2\) are now fixed and since the first two components of \(\psi\) are linear with respect to \(\theta_0\) and  \(\theta_1\), \(\theta_{\infty}\) is the only value satisfying \eqref{M_esti}. We want to show that for every $\theta$ in a neighborhood of $\theta_\infty$, all the components of the second derivatives are integrable for all $k \in \{0, \hdots, 3\}$:
\[
\left| \frac{\partial^2}{\partial^2 \theta} \lambda_k(z, \theta) \right|
\]
While this holds trivially for most components, the integrability of the following specific terms requires closer attention: 

\[ \left| \frac{\partial^2}{\partial \theta_2 \partial \beta} \lambda_a(z, \theta) \right|,  \qquad \left| \frac{\partial^2}{\partial \beta \partial \beta} \lambda_a(z, \theta) \right|,  \qquad \left| \frac{\partial^2}{\partial \beta \partial \beta} \lambda_3(z, \theta) \right| \quad \textrm{for all }a \in \{0,1\}.
\]
First, consider the mixed partial derivative with respect to $\theta_2$ and $\beta$:
\[
\left| \frac{\partial^2}{\partial \theta_2 \partial \beta} \lambda_a(z, \theta) \right| = \frac{1}{(1 - \theta_2)^2} \cdot \frac{1 - \sigma(X, \beta)}{\sigma(X, \beta)} \cdot \frac{S \ind\{A = a\} Y}{\pi_a} V^\top.
\]
For each coordinate \( i \in \{1, \dots, d\} \), the expectation is bounded as follows:
\begin{align*}
\mathbb{E}\left[\left| \frac{\partial^2}{\partial \theta_2 \partial \beta} \lambda_a(z, \theta) \right|_i\right] 
&= \frac{1}{(1 - \theta_2)^2} \, \mathbb{E}\left[ \frac{1 - \sigma(X, \beta)}{\sigma(X, \beta)} \cdot \frac{S \ind\{A = a\} Y}{\pi_a} V_i \right] \\
&\le \frac{1}{(1 - \theta_2)^2} \left( \mathbb{E}\left[ \left( \frac{1 - \sigma(X, \beta)}{\sigma(X, \beta)} \right)^2 \cdot \frac{S \ind\{A = a\} Y^2}{\pi_a^2} \right] \cdot \mathbb{E}[V_i^2] \right)^{1/2},
\end{align*}
using the Cauchy–Schwarz inequality. The first expectation is finite under the assumption that $Y$ is square-integrable, and due to the exponential tail behavior of the logistic function. The sub-Gaussianity of $X$ implies that all moments of $V$ are finite, ensuring integrability of $\mathbb{E}[V_i^2]$. 

Second, consider the pure second derivative with respect to $\beta$:
\[
\left| \frac{\partial^2}{\partial \beta \partial \beta} \lambda_a(z, \theta) \right| = \frac{1}{1 - \theta_2} \cdot \frac{1 - \sigma(X, \beta)}{\sigma(X, \beta)} \cdot \frac{S \ind\{A = a\} Y}{\pi_a} VV^\top.
\]
Each entry of this matrix takes the form $C \cdot V_i V_j$ for some random coefficient $C$, and the integrability of these entries follows from the same reasoning as above.

Third, the second derivative of $\lambda_3$ is given by
\[
\left| \frac{\partial^2}{\partial \beta \partial \beta} \lambda_3(z, \theta) \right| 
= \left| V_k V_l V_m \cdot \sigma(X, \beta) (1 - \sigma(X, \beta)) (1 - 2\sigma(X, \beta)) \right| 
\le |V_k V_l V_m|.
\]
By applying Hölder’s inequality, the following bound is obtained:
\[
\mathbb{E}\left[ |V_k V_l V_m| \right] 
\le \mathbb{E}[V_k^2]^{1/2} \cdot \mathbb{E}[V_l^4]^{1/4} \cdot \mathbb{E}[V_m^4]^{1/4},
\]
which is finite due to the sub-Gaussianity of $X$. Consequently,  each second derivative component
\[
\left| \frac{\partial^2}{\partial^2 \theta} \lambda_k(z, \theta) \right|
\]
is integrable for all \(k \in \{0, \hdots, 3\}\), in the neighborhood of $\theta_\infty$. Define
\[
\begin{aligned}
    A\left(\theta_\infty \right) &= \mathbb{E}\left[\frac{\partial \lambda}{\partial \theta}\bigg|_{\theta=\theta_\infty}\right] \quad \text{and} \quad
    B(\theta_\infty) = \mathbb{E}\left[\lambda(Z, \theta_\infty) \lambda(Z, \theta_\infty)^T\right].
\end{aligned}
\]
Next, we verify the conditions of Theorem~7.2 in \cite{Stefanski2002Mestimation}. To do so, we compute \(A\left(\theta_\infty \right)\) and \(B\left(\theta_\infty \right)\). Since
\begin{align}
     \frac{\partial \lambda}{\partial \theta} (Z, \theta)=\left(\begin{array}{cccc}
    -1 & 0 & \frac{1}{(1-\theta_2)^2}\frac{1-\sigma(X,\beta)}{\sigma(X,\beta)}\frac{S\ind\{A = 0\} Y}{\mathbb{P}_S(A = 0)} & \frac{-1}{1-\theta_2}\frac{1-\sigma(X,\beta)}{\sigma(X,\beta)}\frac{S\ind\{A = 0\} Y}{\mathbb{P}_S(A = 0)}V^\top\\
    0 & -1 & \frac{1}{(1-\theta_2)^2}\frac{1-\sigma(X,\beta)}{\sigma(X,\beta)}\frac{S\ind\{A = 1\} Y}{\mathbb{P}_S(A = 1)} & \frac{-1}{1-\theta_2}\frac{1-\sigma(X,\beta)}{\sigma(X,\beta)}\frac{S\ind\{A = 1\} Y}{\mathbb{P}_S(A = 1)}V^\top\\
    0  &0  &-1 & 0\\
    0  &0  & 0 & -\sigma(X,\beta)(1-\sigma(X,\beta))VV^\top
    \end{array}\right),  
\end{align}
We obtain with \(\theta_\infty = (\mathbb{E}_T\left[Y^{(0)}\right], \mathbb{E}_T\left[Y^{(1)}\right], \alpha, \beta_\infty)\)
\[
A\left(\theta_\infty \right) = 
\begin{pmatrix}
     -1 & 0  & \frac{\mathbb{E}_T\left[Y^{(0)}\right]}{1-\alpha} & - \mathbb{E}_T\left[Y^{(0)}V^\top\right]\\
    0 & -1  & \frac{\mathbb{E}_T\left[Y^{(1)}\right]}{1-\alpha} & - \mathbb{E}_T\left[Y^{(1)}V^\top\right]\\
    0  &0 &-1 & 0\\
    0 &0 & 0 & -Q\\
\end{pmatrix},
\]

where \( Q = \mathbb{E}\left[\sigma(X, \beta_\infty)\left(1-\sigma(X, \beta_\infty)\right)VV^\top\right]\), which using Schur complement leads to:

\[
A(\theta_\infty)^{-1} 
=
\begin{pmatrix}
-1 & 0 & -\frac{\mathbb{E}_T[Y^{(0)}]}{1-\alpha} & \mathbb{E}_T[Y^{(0)} V^\top] Q^{-1}\\[6pt]
0 & -1 & -\frac{\mathbb{E}_T[Y^{(1)}]}{1-\alpha} & \mathbb{E}_T[Y^{(1)} V^\top] Q^{-1}\\[6pt]
0 & 0 &-1 & 0 \\[6pt]
0 & 0 & 0 & -Q^{-1}
\end{pmatrix}
\]
Regarding \(B\left(\theta_\infty \right) \), elementary calculations show that 
\begin{align*}
\scriptsize
B(\theta_\infty ) = 
\left(\begin{array}{cccc}
    \frac{\mathbb{E}_T\left[r(X)(Y^{(0)})^2 \right]}{\alpha\mathbb{P}_S(A = 0)}-\mathbb{E}_T\left[Y^{(0)} \right]^2  & -\mathbb{E}_T\left[Y^{(0)} \right]\mathbb{E}_T\left[Y^{(1)} \right] & (1-\alpha)\mathbb{E}_T\left[Y^{(0)} \right] & \mathbb{E}_T\left[(1-\sigma(X))V^\top Y^{(a)}\right] \\
    -\mathbb{E}_T\left[Y^{(0)} \right]\mathbb{E}_T\left[Y^{(1)} \right] & \frac{\mathbb{E}_T\left[r(X)(Y^{(1)})^2 \right]}{\alpha\mathbb{P}_S(A = 1)}-\mathbb{E}_T\left[Y^{(1)} \right]^2  &  (1-\alpha)\mathbb{E}_T\left[Y^{(1)} \right] & \mathbb{E}_T\left[(1-\sigma(X))V^\top Y^{(1)}\right] \\
    (1-\alpha)\mathbb{E}_T\left[Y^{(0)} \right] & (1-\alpha)\mathbb{E}_T\left[Y^{(1)} \right] & \alpha(1-\alpha) & (1-\alpha)\mathbb{E}_T\left[\sigma(X)V^\top \right] \\
    \mathbb{E}_T\left[(1-\sigma(X))VY^{(0)}\right] & \mathbb{E}_T\left[(1-\sigma(X))VY^{(1)}\right]  & (1-\alpha)\mathbb{E}_T\left[\sigma(X)V\right] & Q
\end{array}\right).
\end{align*}
 Based on the previous calculations, we have:
    \begin{itemize}
        \item  \(\lambda(z,\theta)\)  and its first two partial derivatives with respect to \(\theta\) exist for all \(z\) and for all \(\theta\) in the neighborhood of \(\theta_\infty\).
        \item For each \(\theta\) in the neighborhood of \(\theta_\infty\), we have for all \(k \in \{0, 3\}\)\( \left| \frac{\partial^2}{\partial^2 \theta} \lambda_k(z, \theta) \right| \) is integrable.
        \item \(A(\theta_\infty)\) exists and is nonsingular.
        \item \(B(\theta_\infty)\) exists and is finite.
    \end{itemize}
    
    We also have  
    \[\sum_{i=1}^n \lambda(Z_i, \hat{\theta}_N) = 0 \quad \text{and} \quad \hat{\theta}_N \stackrel{p}{\rightarrow} \theta_\infty .\]
    Then the conditions of Theorem~7.2 in \citet{Stefanski2002Mestimation} are satisfied, and we can conclude that 
        \[\sqrt{n}\left(  \hat{\theta}_N - \theta_\infty \right) \stackrel{d}{\rightarrow} \mathcal{N}\left(0, A(\theta_\infty)^{-1}B(\theta_\infty)(A(\theta_\infty)^{-1})^{\top} \right),\]
        
Letting \(\nu_0^\top\) and \(\nu_1^\top\) be respectively the first and second row of \(A(\theta_\infty)^{-1}\): 
\[
\nu_0^\top=\left(-1,\,0,\,-\frac{\mathbb{E}_T[Y^{(0)}]}{1-\alpha},\,\mathbb{E}_T[Y^{(0)}V^\top]Q^{-1}\right),
\]
and
\[
\nu_1^\top=\left(0,\,-1,\,-\frac{\mathbb{E}_T[Y^{(1)}]}{1-\alpha},\,\mathbb{E}_T[Y^{(1)}V^\top]Q^{-1}\right).
\]

Expanding the quadratic form explicitly, and using Lemma \ref{lem:inv_Q}, we get:
\begin{align*}
    V_{a,\text{wHT}} = \nu_a^\top B(\theta_\infty)\nu_a = &\frac{\mathbb{E}_T[ r(X)(Y^{(a)})^2 ]}{\alpha\pi_a}-\frac{\mathbb{E}_T[Y^{(a)}]^2}{1-\alpha}\\
&- \mathbb{E}_T[Y^{(a)} V]^\top Q^{-1} \mathbb{E}_T[Y^{(a)} V] + 2 \mathbb{E}_T[Y^{(a)} V]^\top Q^{-1} \mathbb{E}_T[\sigma(X) V Y^{(a)}].
\end{align*}
and:
\[
\begin{aligned}
C_\text{wHT} =\nu_a^\top B(\theta_\infty)\nu_{1-a} &= -\frac{\mathbb{E}_T[Y^{(1)}]\mathbb{E}_T[Y^{(0)}]}{1-\alpha}  + \mathbb{E}_T[Y^{(1)}V^\top]Q^{-1}\mathbb{E}_T[Y^{(0)}V]\\
&\quad- \mathbb{E}_T[Y^{(1)}V^\top]Q^{-1}\mathbb{E}_T[(1-\sigma(X))Y^{(0)}V]\\
&\quad- \mathbb{E}_T[Y^{(0)}V^\top]Q^{-1}\mathbb{E}_T[(1-\sigma(X))Y^{(1)}V]
\end{aligned}
\]
\end{proof}

\begin{lem}\label{lem:inv_Q}
    We have \(\mathbb{E}_T[\sigma(X) V]^\top  Q^{-1} = \frac{u_{1}(d+1)^\top}{1-\alpha}\) and \(Q =(1-\alpha) \mathbb{E}_T\left[\sigma(X, \beta)VV^\top\right]\).
\end{lem}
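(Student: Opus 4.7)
} The plan is to exploit the block structure of $V = (1, X^\top)^\top \in \bbR^{d+1}$. The key observation is that, because the first component of $V$ equals the constant $1$, the first column of $Q = (1-\alpha)\, \bbE_\mathrm{T}[\sigma(X) V V^\top]$ is \emph{exactly} $(1-\alpha)\, \bbE_\mathrm{T}[\sigma(X) V]$. Interpreting the right-hand side of the lemma, the notation $u_1(d{+}1)$ denotes the first canonical basis vector of $\bbR^{d+1}$, which I will write $e_1$ below. The claim will then follow from a one-line matrix manipulation and does not require any analytic machinery beyond linearity of expectation and the invertibility of $Q$ (which is guaranteed by the assumption $\bbE[XX^\top]$ positive definite together with the overlap assumption, as already used in the proof of \Cref{prp:weighted Horvitz-Thompson}).

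\paragraph{Steps.} First, I would expand
\[
V V^\top = \begin{pmatrix} 1 & X^\top \\ X & X X^\top \end{pmatrix},
\]
so that
\[
Q = (1-\alpha)\begin{pmatrix} \bbE_\mathrm{T}[\sigma(X)] & \bbE_\mathrm{T}[\sigma(X) X^\top] \\ \bbE_\mathrm{T}[\sigma(X) X] & \bbE_\mathrm{T}[\sigma(X) X X^\top] \end{pmatrix}.
\]
Second, I would read off the first column of $Q$: it is
\[
Q e_1 = (1-\alpha) \begin{pmatrix} \bbE_\mathrm{T}[\sigma(X)] \\ \bbE_\mathrm{T}[\sigma(X) X] \end{pmatrix} = (1-\alpha)\, \bbE_\mathrm{T}[\sigma(X) V].
\]
Third, transposing gives $\bbE_\mathrm{T}[\sigma(X) V]^\top = \frac{1}{1-\alpha}\, e_1^\top Q$. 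Finally, right-multiplying by $Q^{-1}$ yields
\[
\bbE_\mathrm{T}[\sigma(X) V]^\top Q^{-1} = \frac{1}{1-\alpha}\, e_1^\top,
\]
which is precisely the stated identity.

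\paragraph{Main obstacle.} There is essentially no analytic difficulty here; the proof is a bookkeeping exercise that leverages the constant-$1$ first coordinate of $V$. The only subtlety is notational: one must be careful to check that the $e_1$ appearing at the end is the first standard basis vector in $\bbR^{d+1}$ (matching the dimension of $V$), which is why the statement carries the ``$(d{+}1)$'' annotation. Invertibility of $Q$ is inherited from the proof of \Cref{prp:weighted Horvitz-Thompson}, where strict convexity of the logistic log-likelihood at the population level was already established.
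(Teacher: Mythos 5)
Your argument for the first identity is correct and is in fact cleaner than the paper's. The paper computes $Q^{-1}$ explicitly via the block-inverse/Schur-complement formula (introducing $P = \mathbb{E}_T[\sigma(X)XX^\top]$ and the scalar Schur complement $S$) and then verifies by direct expansion that $\mathbb{E}_T[\sigma(X)V]^\top Q^{-1}$ collapses to $e_1^\top/(1-\alpha)$. You avoid inverting anything: since the first coordinate of $V$ is the constant $1$, the first column of $(1-\alpha)\,\mathbb{E}_T[\sigma(X)VV^\top]$ is $(1-\alpha)\,\mathbb{E}_T[\sigma(X)V]$, so $Qe_1 = (1-\alpha)\,\mathbb{E}_T[\sigma(X)V]$, and transposing (using $Q^\top = Q$) and right-multiplying by $Q^{-1}$ gives the claim. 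This is shorter, less error-prone, and makes the structural reason for the identity transparent.

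There is, however, a genuine gap: you treat $Q = (1-\alpha)\,\mathbb{E}_T[\sigma(X,\beta)VV^\top]$ as the \emph{definition} of $Q$, whereas in the proof of Proposition~\ref{prp:weighted Horvitz-Thompson} the matrix $Q$ is defined under the joint observational distribution as $Q = \mathbb{E}\bigl[\sigma(X,\beta_\infty)\bigl(1-\sigma(X,\beta_\infty)\bigr)VV^\top\bigr]$, and the second assertion of the lemma is precisely the (nontrivial) claim that these two expressions coincide. The paper establishes it by splitting the expectation over $\{S=1\}$ and $\{S=0\}$, using $r(X) = \frac{\alpha(1-\sigma(X))}{(1-\alpha)\sigma(X)}$ to write $\alpha\,\mathbb{E}_S[\sigma(1-\sigma)VV^\top] = (1-\alpha)\,\mathbb{E}_S[\sigma^2 r(X)VV^\top] = (1-\alpha)\,\mathbb{E}_T[\sigma^2 VV^\top]$ via the change of measure $\mathbb{E}_S[r(X)f(X)] = \mathbb{E}_T[f(X)]$, and then adding $(1-\alpha)\,\mathbb{E}_T[\sigma(1-\sigma)VV^\top]$ to obtain $(1-\alpha)\,\mathbb{E}_T[\sigma VV^\top]$. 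Without this step, your proof of the first identity rests on an unproved premise (and, strictly speaking, reads off the first column of the wrong matrix). The missing computation is routine, but it is exactly half of the lemma and must be supplied.
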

\begin{proof}
\begin{align*}
    Q &= \mathbb{E}\left[\sigma(X, \beta)\left(1-\sigma(X, \beta)\right)VV^\top\right] \\
    &= \bbP(S=1) \mathbb{E}_S\left[\sigma(X, \beta)\left(1-\sigma(X, \beta)\right)VV^\top\right] + \bbP(S=0) \mathbb{E}_T\left[\sigma(X, \beta)\left(1-\sigma(X, \beta)\right)VV^\top\right] \\
    &=(1-\alpha) \mathbb{E}_S\left[\sigma(X, \beta)^2r(X)VV^\top\right] + (1-\alpha)\mathbb{E}_T\left[\sigma(X, \beta)\left(1-\sigma(X, \beta)\right)VV^\top\right]\\
    &=(1-\alpha) \mathbb{E}_T\left[\sigma(X, \beta)^2VV^\top\right] + (1-\alpha)\mathbb{E}_T\left[\sigma(X, \beta)\left(1-\sigma(X, \beta)\right)VV^\top\right]\\
    &= (1-\alpha) \mathbb{E}_T\left[\sigma(X, \beta)VV^\top\right]
\end{align*}
Therefore using using the block inverse matrix formula:
\begin{align*}
    Q^{-1} &=\frac{1}{1-\alpha} \left(\begin{array}{cc}
    S^{-1} & -S^{-1}\mathbb{E}_T[\sigma(X) X]^\top P^{-1}\\
    -S^{-1}P^{-1}\mathbb{E}_T[\sigma(X) X] & P^{-1}+ P^{-1}\mathbb{E}_T[\sigma(X) X]\mathbb{E}_T[\sigma(X) X]^\top P^{-1}S^{-1}
\end{array}\right),
\end{align*}
where \(P = \mathbb{E}_T[\sigma(X) X X^\top]\) and \(S = \mathbb{E}_T[\sigma(X)]-\mathbb{E}_T[\sigma(X) X]^\top P^{-1}\mathbb{E}_T[\sigma(X) X]\).

Expanding \(\begin{bmatrix} \mathbb{E}_T[\sigma(X)] & \mathbb{E}_T[\sigma(X) X]^\top \end{bmatrix}(1-\alpha) Q^{-1}\), we get:
\begin{align*}
    \mathbb{E}_T[\sigma(X)]S^{-1} - S^{-1}\mathbb{E}_T[\sigma(X) X]^\top P^{-1}\mathbb{E}_T[\sigma(X) X] = S^{-1}(\underbrace{\mathbb{E}_T[\sigma(X)]-\mathbb{E}_T[\sigma(X) X]^\top P^{-1}\mathbb{E}_T[\sigma(X) X]}_S) = 1
\end{align*}
and 
\begin{align*}
    -\mathbb{E}_T[\sigma(X)]S^{-1}\mathbb{E}_T[\sigma(X) X]^\top P^{-1} +\mathbb{E}_T[\sigma(X) X]^\top P^{-1} +\mathbb{E}_T[\sigma(X) X]^\top P^{-1}\mathbb{E}_T[\sigma(X) X]\mathbb{E}_T[\sigma(X) X]^\top P^{-1}S^{-1} &= 0
\end{align*}
Hence:
\begin{align*}
\mathbb{E}_T[\sigma(X) V]^\top  Q^{-1} = \frac{u_{1}(d+1)^\top}{1-\alpha}
\end{align*}
\end{proof}

\begin{prp}[asymptotic normality of weighted Horvitz-Thompson estimator]
Let \( \sigma(x, \beta) = \left(1 + \exp(-x^\top \beta_1 - \beta_0)\right)^{-1} \) denote the logistic function, where \( \hat{\beta}_N \) is the maximum likelihood estimate (MLE) obtained from logistic regression of the selection indicator \(S\) on covariates \(X\). Define the estimated density ratio as:
\[
r(x, \hat{\beta}_N) = \frac{n}{N - n} \cdot \frac{1 - \sigma(x, \hat{\beta}_N)}{\sigma(x, \hat{\beta}_N)}, \qquad \text{with} \quad n = \sum_{i=1}^N S_i.
\]
Let \( \hat \tau_{\Phi,\text{wHT}} \) denote the weighted Horvitz-Thompson estimator constructed using the estimated ratio \( r(x, \hat{\beta}_N) \). Then, under \Cref{ass:internal validity} to \ref{ass:logistic}:
\[
\sqrt{N} \left(\hat \tau_{\Phi,\text{wHT}} - \tau_{\Phi}^{T} \right) \stackrel{d}{\rightarrow} \mathcal{N}\left(0, V_{\Phi,\text{wHT}} \right).
\]
\end{prp}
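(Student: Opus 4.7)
The strategy is essentially a one-line reduction to the preceding Proposition~\ref{prp:weighted Horvitz-Thompson} combined with the multivariate delta method. By definition (\Cref{dfn:R HT N}), the estimator factorizes as
\[
\wh \tau_{\Phi,\text{wHT}} = \Phi\!\left(\wh \psi_{1,\text{wHT}},\, \wh \psi_{0,\text{wHT}}\right),
\]
where $(\wh \psi_{1,\text{wHT}}, \wh \psi_{0,\text{wHT}})$ are precisely the two marginal estimators whose joint asymptotic behavior is characterized in Proposition~\ref{prp:weighted Horvitz-Thompson}. Thus the hard analytic work---deriving joint $\sqrt{N}$-asymptotic normality of the two marginal Horvitz--Thompson estimators and identifying the $2\times 2$ covariance matrix $\Sigma_{\text{wHT}}$ via the M-estimation framework---has already been carried out.

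The plan is then as follows. First, I would invoke Proposition~\ref{prp:weighted Horvitz-Thompson} to obtain
\[
\sqrt{N}\begin{pmatrix}\wh\psi_{1,\text{wHT}}-\psi_1\\ \wh\psi_{0,\text{wHT}}-\psi_0\end{pmatrix} \stackrel{d}{\to} \mathcal{N}(0,\Sigma_{\text{wHT}}).
\]
Second, I would observe that $\Phi$ is, for each of the examples collected in \Cref{appendix:First Moment Causal Measures}, continuously differentiable on an open neighborhood of $(\psi_1,\psi_0)$ (under the mild domain restrictions ensuring the measure is well-defined, e.g.\ $\psi_0>0$ for the RR, $\psi_a\in(0,1)$ for the OR, etc.), so one can directly apply the multivariate delta method. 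This yields
\[
\sqrt{N}\,\bigl(\wh\tau_{\Phi,\text{wHT}}-\tau_\Phi^{\mathrm T}\bigr) \stackrel{d}{\to} \mathcal{N}\!\bigl(0,\, V_{\Phi,\text{wHT}}\bigr), \qquad V_{\Phi,\text{wHT}} = \nabla\Phi(\psi_1,\psi_0)^{\top}\,\Sigma_{\text{wHT}}\,\nabla\Phi(\psi_1,\psi_0).
\]

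There is essentially no genuine obstacle here: everything reduces to the verified integrability and M-estimation conditions of Proposition~\ref{prp:weighted Horvitz-Thompson} plus smoothness of $\Phi$. The only mildly delicate part is \emph{bookkeeping}, namely writing out the explicit closed-form variance $V_{\Phi,\text{wHT}}$ for each measure of interest by plugging the expressions of $V_{1,\text{wHT}}$, $V_{0,\text{wHT}}$, $C_{\text{wHT}}$ from \Cref{prp:weighted Horvitz-Thompson} into the quadratic form $\nabla\Phi^{\top}\Sigma_{\text{wHT}}\nabla\Phi$, and specializing $\nabla\Phi$ (e.g.\ $(1,-1)$ for RD, $(1/\psi_0,-\psi_1/\psi_0^2)$ for RR, and the corresponding logit derivatives for OR). Since the paper defers the per-measure variance tables to \Cref{appendix:Logistic weighted Horvitz-Thompson/Neyman}, the proof itself is a two-step argument: \emph{(i)} cite Proposition~\ref{prp:weighted Horvitz-Thompson}; \emph{(ii)} apply the delta method to the smooth map $\Phi$.
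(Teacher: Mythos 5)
Your proposal is correct and follows exactly the paper's own argument: invoke Proposition~\ref{prp:weighted Horvitz-Thompson} for the joint asymptotic normality of $(\wh\psi_{1,\text{wHT}},\wh\psi_{0,\text{wHT}})$ with covariance $\Sigma_{\text{wHT}}$, then apply the multivariate delta method to the smooth map $\Phi$ to obtain $V_{\Phi,\text{wHT}} = \nabla\Phi^{\top}\Sigma_{\text{wHT}}\nabla\Phi$. Your added remark on the domain restrictions guaranteeing differentiability of $\Phi$ at $(\psi_1,\psi_0)$ is a small but welcome precision that the paper leaves implicit.
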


\begin{proof}
We begin by observing that the estimator of interest can be written as smooth transformations of its vector-valued estimator:
$$
\hat \tau_{\Phi,\text{wHT}} = \Phi(\hat \psi_{1, \text{wHT}}, \hat \psi_{0, \text{wHT}}).
$$

By Propositions \ref{prp:weighted Horvitz-Thompson}, the pair $(\hat \psi_{1, \text{wHT}}, \hat \psi_{0, \text{wHT}})$ is jointly asymptotically normal. Specifically,
$$
\sqrt{N} \begin{pmatrix}
\hat \psi_{1, \text{wHT}} - \psi_1 \\
\hat \psi_{0, \text{wHT}} - \psi_0
\end{pmatrix}
\stackrel{d}{\longrightarrow}
\mathcal{N}\left(0, \Sigma_{\text{wHT}}\right),
$$
where $\Sigma_{\text{wHT}}$ is the asymptotic covariance matrices, with entries determined by the variances and covariances of the components $\hat \psi_{1, \text{wHT}} $ and $\hat \psi_{0, \text{wHT}} $. Since $\Phi: \mathbb{R}^2 \to \mathbb{R}$ is assumed to be a smooth function, we can apply the delta method to each estimator. Let $\nabla \Phi$ denote the gradient of $\Phi$, evaluated at the true parameter vector $(\psi_1, \psi_0)$. Then:
$$
\sqrt{N}(\hat \tau_{\Phi,\text{wHT}} - \tau_\Phi^T) \stackrel{d}{\longrightarrow} \mathcal{N}(0, V_{\Phi,\text{wHT}}),
$$
where the asymptotic variance is given by the quadratic form:
$$
V_{\Phi,\text{wHT}} = \nabla \Phi^\top \Sigma_{\text{wHT}} \nabla \Phi.
$$
\end{proof}
\begin{prp}\label{prp:re-weighted Neyman}
Let $\psi_a = \mathbb{E}_T[Y(a)]$ denote the target population mean potential outcome under treatment \( a \in \{0,1\} \). Define the estimator
\[
\hat{\psi}_{a, \text{N}} = \frac{1}{n} \sum_{i=1}^N  S_i \cdot r(X_i, \hat \beta_N) \cdot \frac{\ind\{A_i = a\} Y_i}{\hat \pi_a}, \quad \text{with} \quad r(x, \hat \beta_N) = \frac{n}{N - n} \cdot \frac{1 - \sigma(x, \hat \beta_N)}{\sigma(x, \hat \beta_N)},
\]
where
\[
\hat \pi_a = \frac{1}{n} \sum_{S_i=1} \ind\{A_i = a\}, \quad n = \sum_{i=1}^N S_i,
\]
\(\hat \beta_N\) is the MLE from logistic regression of \( S \) on \( X \), and \( \sigma(x, \beta) = (1 + e^{-x^\top \beta})^{-1} \) is the logistic function. Then, under regularity conditions,
\[
\sqrt{N}
\begin{bmatrix}
\hat{\psi}_{1,\text{N}} - \psi_1 \\
\hat{\psi}_{0,\text{N}} - \psi_0
\end{bmatrix}
\stackrel{d}{\longrightarrow}
\mathcal{N}\left(0,
\Sigma_{\text{N}}
\right), \quad \text{with} \quad \Sigma_{\text{N}} =
\begin{bmatrix}
V_{1,\text{N}} & C_{\text{N}} \\
C_{\text{N}} & V_{0,\text{N}}
\end{bmatrix},
\]
where
\begin{align*}
V_{a, \text{N}} = \; &\frac{\mathbb{E}_T\left[ r(X)\,(Y^{(a)})^2 \right]}{\alpha\,\pi_a}
+ \mathbb{E}_T[Y^{(a)}]^2 \left( \frac{2}{\alpha} - \frac{1}{\alpha(1-\alpha)} - \frac{1}{\alpha\pi_a} \right) \\
&- \mathbb{E}_T[Y^{(a)} V]^\top Q^{-1} \mathbb{E}_T[Y^{(a)} V]
+ 2\, \mathbb{E}_T[Y^{(a)} V]^\top Q^{-1} \mathbb{E}_T[\sigma(X) V Y^{(a)}], \\
C_{\text{N}} =\; &\mathbb{E}_T[Y^{(a)}] \mathbb{E}_T[Y^{(1-a)}] \cdot \frac{1 - 2\alpha}{\alpha(1-\alpha)} \\
&- \mathbb{E}_T[Y^{(a)}V^\top] Q^{-1} \mathbb{E}_T[(1 - \sigma(X)) Y^{(1-a)} V] \\
&- \mathbb{E}_T[Y^{(1-a)}V^\top] Q^{-1} \mathbb{E}_T[(1 - \sigma(X)) Y^{(a)} V] \\
&+ \mathbb{E}_T[Y^{(a)}V^\top] Q^{-1} \mathbb{E}_T[Y^{(1-a)} V],
\end{align*}
with \( V = (1, X) \in \mathbb{R}^{p+1} \) and
\[
Q = (1-\alpha)\, \mathbb{E}_T[\sigma(X,\beta)\, V V^\top].
\]
\end{prp}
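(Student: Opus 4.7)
The plan is to stack $\hat\psi_{1,\mathrm{N}}$, $\hat\psi_{0,\mathrm{N}}$, the two estimated treatment probabilities $\hat\pi_0,\hat\pi_1$, the sample fraction $\hat\alpha$, and the logistic MLE $\hat\beta_N$ into a single M-estimator and then apply Theorem~7.2 of \cite{Stefanski2002Mestimation}, exactly as in the proof of \Cref{prp:weighted Horvitz-Thompson}. The only genuine novelty is that $\pi_a$ is no longer treated as a known constant but as an additional nuisance parameter that must be profiled out through the sandwich formula; this adds two rows and two columns to both the bread $A(\theta_\infty)$ and the meat $B(\theta_\infty)$ matrices and modifies the implicit gradient used in the delta method.

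Concretely, I would set $\theta=(\theta_0,\theta_1,\varpi_0,\varpi_1,\theta_\alpha,\beta)$ and define the stacked score
\begin{align*}
\lambda(Z,\theta) =
\begin{pmatrix}
\dfrac{1}{1-\theta_\alpha}\dfrac{1-\sigma(X,\beta)}{\sigma(X,\beta)}\dfrac{S\,\ind\{A=0\}Y}{\varpi_0} - \theta_0 \\[4pt]
\dfrac{1}{1-\theta_\alpha}\dfrac{1-\sigma(X,\beta)}{\sigma(X,\beta)}\dfrac{S\,\ind\{A=1\}Y}{\varpi_1} - \theta_1 \\[4pt]
S(\ind\{A=0\}-\varpi_0) \\[4pt]
S(\ind\{A=1\}-\varpi_1) \\[4pt]
S-\theta_\alpha \\[4pt]
V(S-\sigma(X,\beta))
\end{pmatrix},
\end{align*}
so that $\sum_i \lambda(Z_i,\hat\theta_N)=0$ returns exactly the estimator of interest. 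The population root $\theta_\infty=(\psi_0,\psi_1,\pi_0,\pi_1,\alpha,\beta_\infty)$ annihilates $\mathbb{E}[\lambda(Z,\theta)]$ by the identification argument used for the weighted Horvitz-Thompson and by the definition of $\hat\pi_a$. Existence, uniqueness, smoothness in $\theta$, and the integrability of second derivatives all carry over verbatim from the proof of \Cref{prp:weighted Horvitz-Thompson} (the new $\pi_a$-coordinates are linear, so they only add trivially integrable entries).

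The computational work is then to assemble $A(\theta_\infty)=\mathbb{E}[\partial_\theta \lambda\mid_{\theta_\infty}]$ and $B(\theta_\infty)=\mathbb{E}[\lambda\lambda^\top\mid_{\theta_\infty}]$. The bread matrix is block lower-triangular (the $\beta$-block and the $\alpha$-block decouple from $\theta_0,\theta_1,\varpi_0,\varpi_1$) and the $\varpi_a$-block is diagonal with entries $-\alpha$, so the inverse has the same form as in the HT case but with two extra rows
\[
\nu_a^\top = \bigl(\,\ldots,\;-\tfrac{\psi_a}{\pi_a},\;\ldots\,\bigr)\quad\text{in the $\varpi_a$ coordinate},
\]
which is the only new contribution to the sandwich expansion $V_{a,\mathrm{N}}=\nu_a^\top B(\theta_\infty)\nu_a$. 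For the meat matrix, the key new entries are $\mathrm{Var}(S(\ind\{A=a\}-\pi_a))=\alpha\pi_a(1-\pi_a)$, $\mathrm{Cov}(S(\ind\{A=a\}-\pi_a),S(\ind\{A=1-a\}-\pi_{1-a}))=-\alpha\pi_a\pi_{1-a}$, and the cross-covariances with the $\lambda_0,\lambda_1$ entries, which simplify using the randomization assumption and the identity $Y\ind\{A=a\}=Y^{(a)}\ind\{A=a\}$ together with $r(X)/\alpha$-reweighting to swap source for target expectations.

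The main obstacle will be the algebraic simplification of the quadratic form $\nu_a^\top B(\theta_\infty)\nu_a$ (and the analogous bilinear form for $C_\mathrm{N}$) into the stated expression. The coefficient
\[
\frac{2}{\alpha}-\frac{1}{\alpha(1-\alpha)}-\frac{1}{\alpha\pi_a}
\]
should emerge from collecting $(i)$ the term $-\psi_a^2/(1-\alpha)$ already present in the HT variance, $(ii)$ the new diagonal contribution $\psi_a^2 (1-\pi_a)/(\alpha\pi_a)$ coming from $\mathrm{Var}(\lambda_{\varpi_a})$ rescaled by $1/\pi_a^2$, and $(iii)$ two copies of a cross term $-2\psi_a^2/\alpha$ arising from the covariance between $\lambda_a$ and $\lambda_{\varpi_a}$ (one can check by Bernoulli identities that $\alpha^{-1}\mathbb{E}_T[r(X)(Y^{(a)})^2]$ absorbs the $1/(\alpha\pi_a)$ factor of the $\varpi_a$-gradient). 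The $Q^{-1}$-blocks are identical to those in \Cref{prp:weighted Horvitz-Thompson} and transfer mechanically, as do the cross-covariance terms defining $C_\mathrm{N}$. Once this bookkeeping is complete the conclusion is immediate from Theorem~7.2 of \cite{Stefanski2002Mestimation} applied to the stacked system.
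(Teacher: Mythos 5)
Your proof follows essentially the same route as the paper's: both stack the estimator together with $\hat\alpha$, the empirical treatment-arm frequencies, and the logistic MLE into a single M-estimator and invoke Theorem~7.2 of Stefanski and Boos, the only cosmetic difference being that you parametrize the arm-frequency nuisance by $\pi_a$ directly (score $S(\ind\{A=a\}-\varpi_a)$) while the paper uses the joint probability $\alpha\pi_a$ (score $S\ind\{A=a\}-\theta_{3+a}$), which yields the same sandwich. One bookkeeping caveat: the $\varpi_a$-entry of the inverted-bread row $\nu_a^\top$ is $+\psi_a/(\alpha\pi_a)$, not $-\psi_a/\pi_a$ (you quoted the entry of $A(\theta_\infty)$ before inversion), and your item-by-item accounting of the coefficient $\tfrac{2}{\alpha}-\tfrac{1}{\alpha(1-\alpha)}-\tfrac{1}{\alpha\pi_a}$ omits the $\alpha$--$\varpi_a$ and $\beta$-block cross-covariances, but these are details the full quadratic-form expansion resolves and do not affect the validity of the approach.
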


\begin{proof}
We follow the same initial setup and notation as in the proof of the Re-weighted Horvitz-Thomson estimator. The key difference is the empirical estimation of \(\pi_a\), introducing an additional estimating equation for \(\theta_3\). The remainder of the proof—existence, uniqueness, M-estimation structure, and regularity conditions—follows identically.  Define here:
$$
\begin{array}{c@{\hskip 2cm}c}
\displaystyle
\lambda(Z, \boldsymbol{\theta}) =
\begin{bmatrix}
\frac{\theta_{2}}{1-\theta_{2}}\, \frac{1 - \sigma(X,\beta)}{\sigma(X,\beta)}\, \frac{S\, \ind\{A = 0\}\, Y}{\theta_3} - \theta_0 \\
\frac{\theta_{2}}{1-\theta_{2}}\, \frac{1 - \sigma(X,\beta)}{\sigma(X,\beta)}\, \frac{S\, \ind\{A = 1\}\, Y}{\theta_4} - \theta_1 \\
S - \theta_2 \\
S \ind\{A = 0\} - \theta_3 \\
S \ind\{A = 1\} - \theta_4 \\
V (S - \sigma(X, \beta))
\end{bmatrix}
&
\displaystyle
\hat{\boldsymbol{\theta}}_N =
\begin{bmatrix}
\hat{\psi}_{0,\text{N}} \\
\hat{\psi}_{1,\text{N}} \\
\hat{\alpha} := \frac{1}{N} \sum_{i=1}^N S_i \\
\frac{1}{N} \sum_{i=1}^N S_i \ind\{A_i = 0\} \\
\frac{1}{N} \sum_{i=1}^N S_i \ind\{A_i = 1\} \\
\hat \beta_N
\end{bmatrix}
\end{array}
$$

We now rewrite the estimator using this notation. Recall
\[
r(X_j, \hat \beta_N) = \frac{1 - \sigma(X_j, \hat \beta_N)}{\sigma(X_j, \hat \beta_N)} \cdot \frac{\hat{\alpha}}{1 - \hat{\alpha}}.
\]
Then:
\[
\hat{\psi}_{a, \hat{\pi}} = \frac{1}{N \hat{\alpha}} \sum_{j=1}^N S_j r(X_j, \hat \beta_N) \cdot \frac{\ind\{A_j = a\} Y_j}{\hat \pi_a},
\]
and
\[
\hat \pi_a = \frac{1}{N \hat{\alpha}} \sum_{i=1}^N S_i \ind\{A_i = a\}.
\]
Substituting in gives:
\[
\hat{\psi}_{a, \hat{\pi}} = \frac{1}{N} \sum_{j=1}^N S_j \cdot \frac{1 - \sigma(X_j, \hat \beta_N)}{\sigma(X_j, \hat \beta_N)} \cdot \frac{\hat{\alpha}}{1 - \hat{\alpha}} \cdot \frac{\ind\{A_j = a\} Y_j}{\frac{1}{N} \sum_{i=1}^N S_i \ind\{A_i = a\}}.
\]
Let \( A(\theta_\infty) \) be the Jacobian of \( \lambda(Z, \theta) \) at the population limit \( \theta_\infty \), and \( B(\theta_\infty) \) the corresponding covariance matrix where 
$$\theta_\infty = [\mathbb{E}_T[Y^{(0)}], \mathbb{E}_T[Y^{(1)}], \alpha, \alpha \pi_0, \alpha \pi_1, \beta_\infty]^T$$

Let
\[
Q = \mathbb{E}_T[\sigma(X,\beta_\infty)(1 - \sigma(X,\beta_\infty)) V V^\top].
\]
The inverse Jacobian block \( A(\theta_\infty)^{-1} \) is block lower-triangular, with expressions for rows \( \nu_0^\top \), \( \nu_1^\top \) given by:
\[
\nu_a^\top =
\begin{bmatrix}
-\ind\{a = 0\} & -\ind\{a = 1\} & -\frac{\mathbb{E}_T[Y^{(a)}]}{(1 - \alpha)\alpha} & \frac{\mathbb{E}_T[Y^{(a)}]}{\alpha \pi_a} & 0 & \mathbb{E}_T[V Y^{(a)}]^\top Q^{-1}
\end{bmatrix}.
\]
We then compute the asymptotic variance as
\[
V_{a,\hat{\pi}} = \nu_a^\top B(\theta_\infty) \nu_a, \quad C_{\hat{\pi}} = \nu_0^\top B(\theta_\infty) \nu_1,
\]
where the expansion of \( \nu_a^\top B(\theta_\infty) \nu_b \) yields the desired closed-form expressions for \( V_{a, \text{N}} \) and \( C_{\text{N}} \) stated in the proposition.
\end{proof}

\subsection{Weighted and transported G-formula} \label{appendix:weighted and transported G-formula}
\begin{prp}\label{prp:weighted G-formula}
Let $\psi_a = \mathbb{E}_T[Y(a)]$ denote the target population mean potential outcome under treatment \( a \in \{0,1\} \). Define the oracle estimators
\[
\psi_{a, \mathrm{wG}}^* = \frac{1}{n} \sum_{i=1}^N  S_i \cdot r(X_i) \cdot \mu_{(a)}^S(X_i), \quad \text{and} \quad \psi_{a, \mathrm{tG}}^* = \frac{1}{N-n} \sum_{i=1}^N  (1-S_i) \cdot \mu_{(a)}^S(X_i)
\]
where \(r(X)\) denotes the density ratio between the target and source covariate distributions and \( n = \sum_{i=1}^N S_i \) is the number of units in the source sample. Then, under \Cref{ass:internal validity} to \ref{ass:strong transportability},
\[
\sqrt{N} \left( \psi_{a,\mathrm{tG}}^* - \psi_a \right) \xrightarrow{d} \mathcal{N}(0, V_{a, \mathrm{tG}}^*),
\]
Furthermore under \Cref{ass:logistic}:
\[
\sqrt{N} \left( \psi_{a, \mathrm{wG}}^* - \psi_a \right) \xrightarrow{d} \mathcal{N}(0, V_{a,\mathrm{wG}}^*),
\]
where the asymptotic variances are given by
\begin{align*}
   V_{a, \mathrm{wG}}^* =& \frac{1}{\alpha}\left( \mathbb{E}_T\left[ r(X) (\mu_{(a)}^S(X))^2 \right] - \mathbb{E}_T[Y(a)]^2\right)
\end{align*}
and
\[V_{a, \mathrm{tG}}^* = \frac{1}{1-\alpha} \left( \mathbb{E}_T\left[\left(\mu_{(a)}^S(X)\right)^2\right] - \mathbb{E}_T[Y^{(a)}]^2 \right).\]
\end{prp}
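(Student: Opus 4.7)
The plan is to view each estimator as a ratio of two i.i.d.\ sample means and to apply a bivariate central limit theorem followed by the delta method, mirroring the strategy used for the oracle Horvitz--Thompson estimator in Proposition~\ref{prp:oracle weighted Horvitz-Thompson}. The key preliminary identity, derived from the exchangeability-in-mean assumption (Assumption~\ref{ass:strong transportability}) and the tower property, is that $\mathbb{E}_T[\mu_{(a)}^S(X)] = \mathbb{E}_T[\mu_{(a)}^T(X)] = \psi_a$, which both guarantees consistency and pins down the linearization point for the delta method.

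For the transported G-formula, I would rewrite
\[
\psi_{a,\mathrm{tG}}^{*} \;=\; \frac{\tfrac{1}{N}\sum_{i=1}^N (1-S_i)\,\mu_{(a)}^S(X_i)}{\tfrac{1}{N}\sum_{i=1}^N (1-S_i)},
\]
and apply the multivariate CLT to the numerator and denominator jointly; the second moment of $(1-S)\mu_{(a)}^S(X)$ is finite by Assumption~\ref{ass:integrability} together with conditional Jensen. Direct computation yields $\mathbb{E}[(1-S)\mu_{(a)}^S(X)] = (1-\alpha)\psi_a$, $\mathbb{E}[(1-S)(\mu_{(a)}^S(X))^2] = (1-\alpha)\mathbb{E}_T[(\mu_{(a)}^S(X))^2]$, $\mathrm{Cov}((1-S)\mu_{(a)}^S(X),\,1-S) = \alpha(1-\alpha)\psi_a$, and $\mathrm{Var}(1-S)=\alpha(1-\alpha)$. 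Applying the delta method to $h(x,y) = x/y$ at $((1-\alpha)\psi_a,\,1-\alpha)$, whose gradient is $(1/(1-\alpha),\,-\psi_a/(1-\alpha))$, the cross term cancels against the $\mathrm{Var}(1-S)$ contribution, producing the announced $V_{a,\mathrm{tG}}^{*} = (\mathbb{E}_T[(\mu_{(a)}^S(X))^2] - \psi_a^2)/(1-\alpha)$.

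For the weighted G-formula, the same scheme applies to the ratio with numerator $S\,r(X)\,\mu_{(a)}^S(X)$ and denominator $S$. The density-ratio change-of-measure gives $\mathbb{E}[S\,r(X)\,\mu_{(a)}^S(X)] = \alpha\,\mathbb{E}_T[\mu_{(a)}^S(X)] = \alpha\psi_a$ and $\mathbb{E}[S\,r(X)^2(\mu_{(a)}^S(X))^2] = \alpha\,\mathbb{E}_T[r(X)(\mu_{(a)}^S(X))^2]$, both finite under Assumption~\ref{ass:integrability}, while boundedness of $r$ in a logistic specification (Assumption~\ref{ass:logistic}) combined with overlap secures the moment conditions of the CLT. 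Combined with $\mathrm{Cov}(S\,r(X)\mu_{(a)}^S(X),\,S) = \alpha(1-\alpha)\psi_a$ and $\mathrm{Var}(S)=\alpha(1-\alpha)$, the delta method at $(\alpha\psi_a,\alpha)$ with gradient $(1/\alpha,\,-\psi_a/\alpha)$ yields $V_{a,\mathrm{wG}}^{*} = (\mathbb{E}_T[r(X)(\mu_{(a)}^S(X))^2] - \psi_a^2)/\alpha$.

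The main step is essentially bookkeeping: expanding $\nabla h^{\top} \Sigma \nabla h$ and verifying the clean cancellation between the $\mathrm{Var}(S)$ (respectively $\mathrm{Var}(1-S)$) and covariance contributions, which follows the pattern already set by Proposition~\ref{prp:oracle weighted Horvitz-Thompson}. No $M$-estimation machinery is required here since both $r$ and $\mu_{(a)}^S$ are treated as oracles; the $M$-estimation argument from Proposition~\ref{prp:weighted Horvitz-Thompson} would be the natural template if one later substituted an estimated density ratio $\widehat r$ and a regression estimate $\widehat{\mu}_{(a)}^S$.
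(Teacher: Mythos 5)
Your proposal is correct and follows essentially the same route as the paper's proof: both write each oracle estimator as a ratio of two i.i.d.\ sample means, apply the bivariate CLT, and use the delta method on $h(x,y)=x/y$ at $((1-\alpha)\psi_a,1-\alpha)$ (resp.\ $(\alpha\psi_a,\alpha)$), with the same moment computations yielding $V_{a,\mathrm{tG}}^*$ and $V_{a,\mathrm{wG}}^*$. The paper likewise invokes no $M$-estimation machinery here, reserving it for the versions with estimated nuisances.
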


\begin{proof}
\textbf{Transported G-formula.} 
Define
\[
Z_i = (1-S_i) \cdot \mu_{(a)}^S(X),
\]
so that
\[
\psi_{a,\mathrm{tG}}^* = \frac{\frac{1}{N} \sum_{i=1}^N Z_i}{\frac{1}{N} \sum_{i=1}^N 1-S_i}.
\]

We first compute the expectations of \(Z_i\) and \(S_i\). By the definition of \(Z_i\) and using that \(S_i \sim \text{Bernoulli}(\alpha)\), 
\begin{align*}
\mathbb{E}[Z_i] &= \mathbb{E}\left[ (1-S) \cdot \mu_{(a)}^S(X) \right] \\
&= (1-\alpha) \, \mathbb{E}_T\left[ \mu_{(a)}^S(X) \right] \\
&= (1-\alpha) \psi_a,
\end{align*}
where we used the consistency assumption \(Y = Y^{(A)}\), the randomization of \(A\), and the density ratio property. Since the $(Z_i, S_i)$ for all $i=1, \hdots, N$ are i.i.d., we can apply the multivariate central limit theorem to
\[
\left( \frac{1}{N} \sum_{i=1}^N Z_i, \quad \frac{1}{N} \sum_{i=1}^N 1-S_i \right),
\]
to obtain 
\[
\sqrt{N} \left( 
\begin{pmatrix}
\frac{1}{N} \sum_{i=1}^N Z_i \\
\frac{1}{N} \sum_{i=1}^N 1-S_i
\end{pmatrix}
-
\begin{pmatrix}
\mathbb{E}[Z_i] \\
\mathbb{E}[1-S_i]
\end{pmatrix}
\right)
\xrightarrow{d} \mathcal{N}(0, \Sigma),
\]
where
\[
\Sigma = 
\begin{pmatrix}
\mathbb{V}[Z_i] & \mathrm{Cov}(Z_i, S_i) \\
\mathrm{Cov}(S_i, Z_i) & \mathbb{V}[S_i]
\end{pmatrix}.
\]

since \(\psi_{a,\mathrm{tG}}^*\) can be written as
\[
\psi_{a,\mathrm{tG}}^* = \frac{\frac{1}{N} \sum_{i=1}^N Z_i}{\frac{1}{N} \sum_{i=1}^N 1-S_i},
\]
we apply the Delta method to the map \(h: (x,y) \mapsto x/y\), whose gradient evaluated at \(((1-\alpha) \psi_a, 1-\alpha)\) is
\[
u = \nabla h ((1-\alpha) \psi_a, 1-\alpha) = \left( \frac{1}{1-\alpha},\ -\frac{\psi_a}{1-\alpha} \right).
\]
Thus,
\[
\sqrt{N}\left( \psi_{a,\mathrm{tG}}^* - \psi_a \right) \xrightarrow{d} \mathcal{N}(0, V_{a,\mathrm{tG}}^*),
\]
where
\[
V_{a,\mathrm{tG}}^* = u^\top \Sigma u.
\]

Expanding, we obtain
\[
V_{a,\mathrm{tG}}^* = \frac{1}{(1-\alpha)^2} \mathbb{V}[Z_i] + \frac{(\psi_a)^2}{(1-\alpha)^2} \mathbb{V}[S_i] - \frac{2 \psi_a}{(1-\alpha)^2} \mathrm{Cov}(Z_i, S_i).
\]

It remains to compute each term. Since \(S_i \sim \text{Bernoulli}(\alpha)\),
\[
\mathbb{V}[S_i] = \alpha(1-\alpha).
\]
By direct computation,
\[
\mathbb{V}[Z_i] = (1-\alpha) \, \mathbb{E}_T\left[\left(\mu_{(a)}^S(X)\right)^2\right] - ((1-\alpha) \psi_a)^2.
\]
Moreover, since \((1-S_i) \times Z_i = Z_i\), we have
\[
\mathrm{Cov}(Z_i, S_i) = (1-\alpha)\alpha \psi_a.
\]

Substituting into the expression for \(V_{a,\mathrm{tG}}^*\), we find
\[
V_{a,\mathrm{tG}}^* = \frac{1}{1-\alpha} \left( \mathbb{E}_T\left[\left(\mu_{(a)}^S(X)\right)^2\right] - (\psi_a)^2 \right).
\]

\textbf{weighted G-formula.} 
Define
\[
Z_i = S_i \cdot r(X_i) \cdot \mu_{(a)}^S(X_i),
\]
so that
\[
\psi_{a,\mathrm{wG}}^* = \frac{\frac{1}{N} \sum_{i=1}^N Z_i}{\frac{1}{N} \sum_{i=1}^N S_i} = \frac{ \bar{Z}_N}{ \bar{S}_N}.
\]

We first compute the expectations of \(Z_i\) and \(S_i\). Using the definition of \(Z_i\) and that \(S\) is binary,
\begin{align*}
\mathbb{E}[Z_i] &= \mathbb{E}\left[ S_i \cdot r(X) \cdot \mu_{(a)}^S(X) \right] \\
&= \alpha \, \mathbb{E}_S\left[ r(X) \cdot \mu_{(a)}^S(X) \right] \\
&= \alpha \, \mathbb{E}_T\left[ \mu_{(a)}^S(X) \right] = \alpha \psi_a,
\end{align*}
where we used the importance sampling identity \( \mathbb{E}_S[r(X) f(X)] = \mathbb{E}_T[f(X)] \). By the multivariate central limit theorem,
\[
\sqrt{N} \left( 
\begin{pmatrix}
\bar{Z}_N \\
\bar{S}_N
\end{pmatrix}
-
\begin{pmatrix}
\alpha \psi_a \\
\alpha
\end{pmatrix}
\right)
\xrightarrow{d} \mathcal{N}(0, \Sigma),
\]
where
\[
\Sigma = 
\begin{pmatrix}
\mathbb{V}[Z_i] & \mathrm{Cov}(Z_i, S_i) \\
\mathrm{Cov}(Z_i, S_i) & \mathbb{V}[S_i]
\end{pmatrix}.
\]

The gradient of the map \( h: (x, y) \mapsto x/y \) evaluated at \( (\alpha \psi_a, \alpha) \) is
\[
u = \nabla h  (\alpha \psi_a, \alpha) = \left( \frac{1}{\alpha},\ -\frac{\psi_a}{\alpha} \right).
\]
By the Delta method,
\[
\sqrt{N} \left( \psi_{a,\mathrm{wG}}^* - \psi_a \right) \xrightarrow{d} \mathcal{N}(0, V_{a,\mathrm{wG}}^*),
\]
where
\[
V_{a,\mathrm{wG}}^* = u^\top \Sigma u = \frac{1}{\alpha^2} \mathbb{V}[Z_i] + \frac{\psi_a^2}{\alpha^2} \mathbb{V}[S_i] - \frac{2 \psi_a}{\alpha^2} \mathrm{Cov}(Z_i, S_i).
\]

We compute each term:
\begin{align*}
\mathbb{V}[Z_i] &= \mathbb{E}[Z_i^2] - (\mathbb{E}[Z_i])^2 
= \alpha \, \mathbb{E}_T\left[ r(X) \left( \mu_{(a)}^S(X) \right)^2 \right] - \alpha^2 \psi_a^2, \\
\mathbb{V}[S_i] &= \alpha(1-\alpha), \\
\mathrm{Cov}(Z_i, S_i) &= \mathbb{E}[Z_i S_i] - \mathbb{E}[Z_i] \mathbb{E}[S_i] 
= \alpha \, \mathbb{E}_T\left[ \mu_{(a)}^S(X) \right] - \alpha^2 \psi_a 
= \alpha(1-\alpha) \psi_a.
\end{align*}

Substituting into the expression for \(V_{a,\mathrm{wG}}^*\), we get:
\begin{align*}
V_{a,\mathrm{wG}}^* &= \frac{1}{\alpha^2} \left( \alpha \, \mathbb{E}_T\left[ r(X) (\mu_{(a)}^S(X))^2 \right] - \alpha^2 \psi_a^2 \right) 
+ \frac{\psi_a^2}{\alpha^2} \alpha(1-\alpha) 
- \frac{2 \psi_a}{\alpha^2} \alpha(1-\alpha) \psi_a \\
&= \frac{1}{\alpha} \mathbb{E}_T\left[ r(X) (\mu_{(a)}^S(X))^2 \right] - \psi_a^2 
+ \frac{\psi_a^2(1-\alpha)}{\alpha} - \frac{2 \psi_a^2(1-\alpha)}{\alpha} \\
&= \frac{1}{\alpha} \mathbb{E}_T\left[ r(X) (\mu_{(a)}^S(X))^2 \right] - \frac{\psi_a^2}{\alpha}.
\end{align*}

\end{proof}

\begin{prp}
\label{prp:psi_gaus_ber}
Grant \Cref{ass:internal validity} to \ref{ass:strong_linear_model} defining \(\beta^{(a)} := [c^{(a)}, \gamma^{(a)}]\), $V := [1,X]$.  We rearrange the source \(Y_i\) and \(V_i\) so that the first \(n_1\) observations of correspond to \(A = 1\). We then define \(\mathbf{Y}_1 = (Y_1, \ldots, Y_{n_1})^\top\) and \(\mathbf{Y}_0 = (Y_{n_1+1}, \ldots, Y_n)^\top\), as well as \(\mathbf{V}_1 = (V_1, \ldots, V_{n_1})^\top\) and \(\mathbf{V}_0 = (V_{n_1+1}, \ldots, V_n)^\top\). Letting $\hat \alpha  = (\sum_{i=1}^n S_i)/N$ and for all $a \in \{0,1\}$,  
 \begin{align*}
   \Bar{V}^{(0)} & = \frac{1}{\sum_{i=1}^n \ind_{S_i = 0}} \sum_{i=1}^n \ind_{S_i=0} V_i \mand \hat \beta^{(a)} =  \left(\frac{1}{n_a} \mathbf{V}_a^{\top}\mathbf{V}_a\right)^{-1} \frac{1}{n_a} \mathbf{V}_a^{\top}\mathbf{Y}_a.
 \end{align*}
Defining \(\nu = \mathbb{E}_S[X]\) and \(\Sigma = \text{Var}(X|S=1)\), we have
\[
\sqrt{N} (\hat{\theta}_N - \theta_\infty)\stackrel{d}{\rightarrow} \mathcal{N}
\left(0, \Sigma\right),\] 
where
\[\hat{\theta}_N= \begin{pmatrix}
\Bar{V}_{(0)}\\
\hat \beta^{(0)}\\
\hat \beta^{(1)} \\
\end{pmatrix} ,  \quad \theta_\infty = \begin{pmatrix}
\mathbb{E}_T[V]\\
\beta^{(0)}\\
\beta^{(1)}
\end{pmatrix},  \quad \Sigma = \left(\begin{matrix}
    \frac{\Var\left[V|S=0\right]}{\left(1 - \alpha\right)} & 0 & 0 \\
    0 & \frac{\sigma^2M^{-1}}{\alpha(1-\pi)}& 0 \\
    0 & 0 & \frac{\sigma^2M^{-1}}{\alpha\pi} \\
    \end{matrix}\right),
\]
with \(M^{-1} =\begin{bmatrix}
1 + \nu^T \Sigma^{-1} \nu & - \nu^T \Sigma^{-1} \\
- \Sigma^{-1} \nu& \Sigma^{-1}
\end{bmatrix}.
\) 
\end{prp}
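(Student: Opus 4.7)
The plan is to exploit the fact that the three components of $\hat\theta_N$ are computed on three \emph{disjoint} subsamples of the $N$ i.i.d.\ observations: $\bar V_{(0)}$ uses the indices with $S_i=0$, while $\hat\beta^{(0)}$ and $\hat\beta^{(1)}$ use, respectively, the indices with $S_i=1,A_i=0$ and $S_i=1,A_i=1$. Under \Cref{ass:internal validity} the random vector $(S_i,A_i)$ induces a multinomial partition with asymptotic cell probabilities $(1-\alpha,\,\alpha(1-\pi),\,\alpha\pi)$, and by randomization $A\indep X\mid S=1$. This disjointness will give the block-diagonal structure of $\Sigma$ for free (no cross-covariance terms appear in any joint CLT), so the main work reduces to establishing each diagonal block marginally.

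First, for the target-mean block, I would write
$$
\bar V_{(0)} - \mathbb{E}_T[V]
= \frac{N}{N-n}\,\frac{1}{N}\sum_{i=1}^N (1-S_i)\bigl(V_i - \mathbb{E}_T[V]\bigr),
$$
apply the standard CLT to the i.i.d.\ summands on the right (with variance $(1-\alpha)\Var(V\mid S=0)$), and combine with $N/(N-n)\stackrel{\mathrm{a.s.}}{\to}(1-\alpha)^{-1}$ via Slutsky to obtain $\sqrt{N}(\bar V_{(0)}-\mathbb{E}_T[V])\stackrel{d}{\to}\mathcal{N}(0,\Var(V\mid S=0)/(1-\alpha))$.

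Second, for each OLS block I would condition on the random cell size $n_a$ (recall $n_1/N\to\alpha\pi$ and $n_0/N\to\alpha(1-\pi)$ a.s.) and use standard linear-model asymptotics on the subsample $\{S_i=1,A_i=a\}$. Writing $\hat\beta^{(a)}-\beta^{(a)} = (n_a^{-1}\mathbf{V}_a^\top\mathbf{V}_a)^{-1}\,n_a^{-1}\mathbf{V}_a^\top\boldsymbol\epsilon_a$, the LLN gives $n_a^{-1}\mathbf{V}_a^\top\mathbf{V}_a\to M:=\mathbb{E}[VV^\top\mid S=1]$ and the CLT gives $\sqrt{n_a}\,n_a^{-1}\mathbf{V}_a^\top\boldsymbol\epsilon_a\stackrel{d}{\to}\mathcal{N}(0,\sigma^2 M)$, so that $\sqrt{n_a}(\hat\beta^{(a)}-\beta^{(a)})\stackrel{d}{\to}\mathcal{N}(0,\sigma^2 M^{-1})$. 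Rescaling by $\sqrt{N/n_a}\to(\alpha\pi_a)^{-1/2}$ (with $\pi_1=\pi,\pi_0=1-\pi$) via Slutsky yields the claimed variances $\sigma^2 M^{-1}/(\alpha\pi)$ and $\sigma^2 M^{-1}/(\alpha(1-\pi))$.

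Third, to justify the joint statement and the block-diagonal form of $\Sigma$, I would invoke a Cram\'er--Wold/multivariate CLT argument on the stacked score
$$
\frac{1}{\sqrt{N}}\sum_{i=1}^N
\begin{pmatrix}
(1-S_i)(V_i-\mathbb{E}_T[V]) \\[2pt]
S_i\ind\{A_i=0\}\,V_i\epsilon_{0,i} \\[2pt]
S_i\ind\{A_i=1\}\,V_i\epsilon_{1,i}
\end{pmatrix},
$$
and observe that the three coordinates are products of mutually exclusive indicator variables, hence their cross-products vanish identically at the individual level; the covariance matrix of the summand is therefore block-diagonal, and passing through the deterministic linear/affine transformations above preserves this block structure. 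Finally, to match the stated expression for $M^{-1}$, I apply block matrix inversion to $M=\begin{pmatrix}1 & \nu^\top\\ \nu & \Sigma+\nu\nu^\top\end{pmatrix}$, whose Schur complement with respect to the top-left entry is precisely $\Sigma$.

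The main obstacle, though mild, is handling the \emph{random} sample sizes $n_a$ (and hence the random denominator $N-n$) cleanly; this is why the argument is organized around conditioning on $n_a$ (or equivalently on the empirical proportions) and then applying Slutsky's lemma with the a.s.\ limits $n_a/N\to\alpha\pi_a$ and $(N-n)/N\to 1-\alpha$. All other ingredients---the marginal CLTs and the closed form of $M^{-1}$---are standard.
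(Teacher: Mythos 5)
Your proposal is correct. It is, in substance, the same decomposition as the paper's: the paper casts $\hat\theta_N$ as an M-estimator whose estimating function $\lambda(Z,\theta)$ stacks exactly the three scores you write down, namely $(1-S)(V-\theta_0)$ and $S\,\ind\{A=a\}\bigl(V\epsilon^{(a)}-VV^\top(\theta_{a+1}-\beta^{(a)})\bigr)$, then computes the Jacobian $A(\theta_\infty)$ and the second-moment matrix $B(\theta_\infty)$ (both block-diagonal, for precisely the mutual-exclusivity-of-indicators reason you give) and invokes Theorem~7.2 of \citet{Stefanski2002Mestimation} to obtain the sandwich limit $A^{-1}B(A^{-1})^\top$. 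Your route unpacks that theorem by hand: the Slutsky steps for the random denominators $N-n$ and $n_a$, and the convergence $n_a^{-1}\mathbf{V}_a^\top\mathbf{V}_a\to M$, reproduce exactly the effect of multiplying the stacked score by $A(\theta_\infty)^{-1}$, while your Cram\'er--Wold argument is the computation of $B(\theta_\infty)$. The hand-rolled version is more elementary and sidesteps the regularity bookkeeping the paper carries out (integrability of second derivatives of $\lambda$, uniqueness of the root --- both trivial here since $\lambda$ is affine in $\theta$), whereas the M-estimation packaging keeps this proof uniform with the genuinely nonlinear sandwich arguments elsewhere in the appendix (e.g.\ the logistic weighted Horvitz--Thompson). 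Your closing step, inverting $M=\E_\mathrm{S}[VV^\top]$ by the Schur complement of its top-left entry (which is $\Sigma$), is exactly how the stated form of $M^{-1}$ arises; the paper asserts it without derivation. The only point worth tightening is the treatment of the OLS blocks: rather than conditioning on $n_a$, it is cleaner to argue unconditionally with $\frac1N\sum_i S_i\ind\{A_i=a\}V_iV_i^\top\to\alpha\pi_a M$ and $\frac{1}{\sqrt N}\sum_i S_i\ind\{A_i=a\}V_i\epsilon_{a,i}\Rightarrow\mathcal N(0,\alpha\pi_a\sigma^2M)$, which is what your stacked-score formulation already delivers.
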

\begin{proof}
Using M-estimation theory to prove asymptotic normality of the \(\theta_N\), we first define the following:
\[
\lambda(Z, \theta) =\left(\begin{array}{c}
    \lambda_1(Z, \theta)\\
    \lambda_2(Z, \theta)\\
    \lambda_3(Z, \theta)
    \end{array}\right) :=
\begin{pmatrix}
(1-S)(V-\theta_0) \\
S(1-A)\left(V\epsilon^{(0)} - VV^\top\left(\theta_1 -\beta^{(0)}\right)\right)\\
SA\left(V\epsilon^{(1)} - VV^\top\left(\theta_2 -\beta^{(1)}\right)\right)
\end{pmatrix}
\]
where \(\theta = (\theta_0, \theta_1, \theta_2, \theta_3)\).
We still have that \( \hat{\theta}_N \) is an M-estimator of type \(\lambda\) \citep[see][]{Stefanski2002Mestimation} since
\[\sum_{i=1}^N \lambda(Z_i, \hat{\theta}_N) = 0.\]

Note that 
\begin{align*}
    \mathbb{E}\left[\lambda_1(Z, \theta_\infty)\right] &=\mathbb{E}\left[(1-S)\left(V - \mathbb{E}_T[V]\right)\right] \\
    &=(1-\alpha)\mathbb{E}_T\left[\left(V - \mathbb{E}_T[V]\right)\right] \\
    &= 0.
\end{align*}
We also have
\begin{align*}
    \mathbb{E}\left[\lambda_2(Z, \theta_\infty)\right] 
    &=\mathbb{E}\left[S(1-A)V\epsilon^{(0)}\right] \\
    &=\alpha\mathbb{E}_S\left[(1-A)V\epsilon^{(0)}\right] \\
    &=\alpha(1-\pi)\mathbb{E}_S\left[V\epsilon^{(0)}\right] \\
    &=\alpha(1-\pi)\mathbb{E}_S\left[V\mathbb{E}_S\left[\epsilon^{(0)}|V\right]\right] \\
    &= 0.
\end{align*}
Similarly, we can show that $\mathbb{E}\left[\lambda_3(Z, \theta_\infty)\right] = 0$. Since \(\lambda(Z, \theta)\) is a linear function of \(\theta\), \(\theta_\infty\) is the only value of \(\theta\) such that \(\mathbb{E}\left[\lambda(Z, \theta)\right]=0\) Define
\begin{align*}
  A\left(\theta_\infty \right) = \mathbb{E}\left[\frac{\partial \lambda}{\partial \theta}\Big\vert_{\theta=\theta_\infty}\right] \quad \textrm{and} \quad B(\theta_\infty) =  \mathbb{E}\left[\lambda(Z, \theta_\infty) \lambda(Z, \theta_\infty)^T\right].
\end{align*}

Next, we check the conditions of Theorem~7.2 in \citet{Stefanski2002Mestimation}. First, we compute \(A\left(\theta_\infty \right) \) and \(B\left(\theta_\infty \right)\). Since
\begin{align*}
 \frac{\partial \lambda}{\partial \theta} (Z, \theta)=\left(\begin{array}{ccc}
-(1-S) & 0 & 0 \\
0 & -S(1-A)VV^\top & 0\\
0 & 0 & -SAVV^\top
\end{array}\right),  
\end{align*}
we obtain
\begin{align*} 
A\left(\theta_\infty \right)= \left(\begin{array}{ccc}
-(1-\alpha)  & 0 & 0\\
0 & -\alpha(1-\pi) M & 0 \\
0 & 0 & -\alpha\pi M 
\end{array}\right),
\end{align*}
where $M = \mathbb{E}_S\left[VV^\top\right]$, which leads to
\begin{align*}
A^{-1}\left(\theta_\infty \right)= \left(\begin{array}{cccc}
-\frac{1}{1-\alpha} & 0 & 0 \\
0 & -\frac{M^{-1}}{\alpha(1-\pi)} & 0\\
0 & 0 & -\frac{M^{-1}}{\alpha\pi}
\end{array}\right).
\end{align*}

Regarding \(B(\theta_\infty)\), since we have \(A(1-A) = 0\) and \(S(1-S) = 0\), elementary calculations show that:

\[\begin{array}{c}B(\theta_\infty)_{2,3} = B(\theta_\infty)_{3,2} = 0\end{array}\quad \text{and} \quad \begin{array}{cc}B(\theta_\infty)_{1,2} = B(\theta_\infty)_{2,1} = 0 \\ B(\theta_\infty)_{1,3} = B(\theta_\infty)_{3,1} = 0.\end{array}\]

Besides
\begin{align*}
B(\theta_\infty)_{1,1} &= \mathbb{E}\left[(1-S)^2(V-\mathbb{E}_T\left[V\right])(V-\mathbb{E}_T\left[V\right])^\top\right] \\
&=  (1-\alpha)\Var\left[V|S=0\right],
\end{align*}
We can also note that:
\begin{align*}
B(\theta_\infty)_{3,3} & = \mathbb{E}\left[S^2 A^2 VV^\top(\epsilon^{(1)})^2\right]\\ &= \alpha \pi \mathbb{E}_S\left[VV^\top(\epsilon^{(1)})^2|A=1\right] \\
&= \alpha \pi \sigma^2 M
\end{align*}
and similarly,
\[B(\theta_\infty)_{2,2} = \alpha(1-\pi)\sigma^2M.\]
Gathering all calculations, we have
\begin{align*}
B(\theta_\infty) = \left(\begin{array}{cccc}
    (1-\alpha)\Var\left[V|S=0\right] & 0 & 0\\
    0  & \alpha(1-\pi)\sigma^2M & 0\\
    0  & 0 & \alpha\pi\sigma^2M
    \end{array}\right),
\end{align*}

Based on the previous calculations, we have:
\begin{itemize}
    \item  \(\lambda(z,\theta)\) and its first two partial derivatives with respect to \(\theta\) exist for all \(z\) and for all \(\theta\) in the neighborhood of \(\theta_\infty\).
    \item For each \(\theta\) in the neighborhood of \(\theta_\infty\), we have for all \(i,j,k \in \{1, 3\}\):
    \[\left| \frac{\partial^2}{\partial \theta_i \partial \theta_j} \lambda_k(z, \theta) \right| \leq 1.\]
    \item \(A(\theta_\infty)\) exists and is nonsingular.
    \item \(B(\theta_\infty)\) exists and is finite.
\end{itemize}

Since we have: 
\[\sum_{i=1}^n \lambda(T_i, Z_i, \hat{\theta}_N) = 0 \quad \text{and} \quad \hat{\theta}_N \stackrel{p}{\rightarrow} \theta_\infty .\]

Then, the conditions of Theorem~7.2 in \citet{Stefanski2002Mestimation} are satisfied, we have:
\[\sqrt{n}\left(  \hat{\theta}_N - \theta_\infty \right) \stackrel{d}{\rightarrow} \mathcal{N}\left(0, A(\theta_\infty)^{-1}B(\theta_\infty)(A(\theta_\infty)^{-1})^{\top} \right),\]
where:
\begin{align*}
    A(\theta_\infty)^{-1}B(\theta_\infty)(A(\theta_\infty)^{-1})^{\top} \nonumber  = \left(\begin{matrix}
    \frac{\Var\left[V|S=0\right]}{\left(1 - \alpha\right)} & 0 & 0\\
    0 & \frac{\sigma^2M^{-1}}{\alpha(1-\pi)}& 0\\
    0 & 0 & \frac{\sigma^2M^{-1}}{\alpha\pi}
    \end{matrix}\right).
\end{align*}
\end{proof}
\begin{cor} For all \(a \in \{0,1\}\), let \(\hat \tau_{a,\mathrm{tG}}^{\text{OLS}} \) denote the transported G-formula estimator where linear regressions are used to estimate \(\mu_{(a)}^S\). Then, under \Cref{ass:internal validity} to \ref{ass:strong transportability} and \ref{ass:strong_linear_model}:
\[
\sqrt{N} \begin{pmatrix}
\hat \tau_{1,\mathrm{tG}}^{\text{OLS}}  - \psi_1 \\
\hat \tau_{0,\mathrm{tG}}^{\text{OLS}}  - \psi_0
\end{pmatrix}
\stackrel{d}{\rightarrow}
\mathcal{N}\left(0,
\Sigma_{\mathrm{tG}}^{\text{OLS}}
\right), 
\]
with 
$$
\Sigma_{\mathrm{tG}}^{\text{OLS}} = 
\begin{pmatrix}
(\beta^{(1)})^\top \frac{\Var[V|S=0]}{1 - \alpha} \beta^{(1)} + \mathbb{E}_T[V]^\top \frac{\sigma^2 M^{-1}}{\alpha \pi} \mathbb{E}_T[V] & 
(\beta^{(1)})^\top \frac{\Var[V|S=0]}{1 - \alpha} \beta^{(0)} \\
(\beta^{(1)})^\top \frac{\Var[V|S=0]}{1 - \alpha} \beta^{(0)} & 
(\beta^{(0)})^\top \frac{\Var[V|S=0]}{1 - \alpha} \beta^{(0)} + \mathbb{E}_T[V]^\top \frac{\sigma^2 M^{-1}}{\alpha(1 - \pi)} \mathbb{E}_T[V]
\end{pmatrix},
$$

\end{cor}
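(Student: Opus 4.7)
The plan is to recognize the transported G-formula estimator under the linear model as a bilinear function of the quantities whose joint limit is already provided by \Cref{prp:psi_gaus_ber}, and then apply the multivariate delta method.

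First, I would note that under Model \ref{ass:strong_linear_model} the OLS-based regression function is $\wh \mu_{(a)}^\mathrm{S}(X_i) = V_i^\top \wh \beta^{(a)}$, so the transported G-formula estimator rewrites as
\[
\wh \tau_{a,\mathrm{tG}}^{\text{OLS}} \;=\; \frac{1}{N-n} \sum_{S_i=0} V_i^\top \wh \beta^{(a)} \;=\; \bar V_{(0)}^\top \wh \beta^{(a)}.
\]
Hence the pair $(\wh \tau_{1,\mathrm{tG}}^{\text{OLS}}, \wh \tau_{0,\mathrm{tG}}^{\text{OLS}})$ is nothing else than the image of the vector $\wh \theta_N = (\bar V_{(0)}, \wh \beta^{(0)}, \wh \beta^{(1)})$ through the smooth map $g : (v, b^{(0)}, b^{(1)}) \mapsto (v^\top b^{(1)}, v^\top b^{(0)})$.

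Second, I would invoke \Cref{prp:psi_gaus_ber} to obtain the joint CLT $\sqrt{N}(\wh \theta_N - \theta_\infty) \stackrel{d}{\to} \mathcal{N}(0, \Sigma)$, where $\Sigma$ is the block-diagonal matrix given in that proposition and $\theta_\infty = (\mathbb{E}_T[V], \beta^{(0)}, \beta^{(1)})$. Since $g$ is infinitely differentiable and $\psi_a = \mathbb{E}_T[V]^\top \beta^{(a)}$, the delta method yields
\[
\sqrt{N} \begin{pmatrix} \wh \tau_{1,\mathrm{tG}}^{\text{OLS}} - \psi_1 \\ \wh \tau_{0,\mathrm{tG}}^{\text{OLS}} - \psi_0 \end{pmatrix} \stackrel{d}{\to} \mathcal{N}\bigl(0, J \Sigma J^\top\bigr),
\]
where the Jacobian of $g$ at $\theta_\infty$ reads (blockwise)
\[
J \;=\; \begin{pmatrix} (\beta^{(1)})^\top & 0^\top & \mathbb{E}_T[V]^\top \\ (\beta^{(0)})^\top & \mathbb{E}_T[V]^\top & 0^\top \end{pmatrix}.
\]

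Third, I would compute $J\Sigma J^\top$ by block multiplication. Thanks to the block-diagonal form of $\Sigma$ and the zero blocks in $J$, all cross terms vanish and one obtains exactly the stated entries: the diagonal $(a,a)$ entry equals $(\beta^{(a)})^\top \tfrac{\Var[V|S=0]}{1-\alpha} \beta^{(a)} + \mathbb{E}_T[V]^\top \tfrac{\sigma^2 M^{-1}}{\alpha \pi_a} \mathbb{E}_T[V]$, while the off-diagonal entry only retains the first block contribution $(\beta^{(1)})^\top \tfrac{\Var[V|S=0]}{1-\alpha} \beta^{(0)}$, since the second and third blocks contribute zero through either a $0$ row on one side or the other.

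There is no real obstacle here: the result is essentially a bookkeeping exercise combining the linearity of the OLS predictor with the delta method. The only minor point to watch is that the formulas for $\Sigma$ involve $M = \mathbb{E}_S[VV^\top]$ rather than $\mathbb{E}_T[VV^\top]$, so the quadratic form in the second term must be evaluated with $\mathbb{E}_T[V]$ (from the target mean $\bar V_{(0)}$) sandwiching $M^{-1}$ (from the OLS residual variance on the source); this mixed nature of the formula is precisely what captures the covariate-shift structure.
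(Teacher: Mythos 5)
Your proposal is correct and follows essentially the same route as the paper's own proof: both rewrite $\hat\tau_{a,\mathrm{tG}}^{\text{OLS}}=\bar V_{(0)}^\top\hat\beta^{(a)}$, invoke the joint asymptotic normality of $(\bar V_{(0)},\hat\beta^{(0)},\hat\beta^{(1)})$ from Proposition~\ref{prp:psi_gaus_ber}, and apply the multivariate delta method with the same Jacobian $J$ to obtain $\Sigma_{\mathrm{tG}}^{\text{OLS}}=J\Sigma J^\top$. Your closing remark on the mixed $\mathbb{E}_T[V]^\top M^{-1}\mathbb{E}_T[V]$ structure is a correct reading of where the covariate shift enters, and nothing further is needed.
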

\begin{proof}
Recall that for \(a \in \{0,1\}\), we have:
\[
\hat \tau_{a,\mathrm{tG}}^{\text{OLS}} = (\hat \beta^{(a)})^\top \Bar{V}_{(0)}
\]
From Proposition \ref{prp:psi_gaus_ber}, we know that:
$$
\sqrt{N} (\hat{\theta}_N - \theta_\infty) \xrightarrow{d} \mathcal{N}(0, \Sigma),
$$
with $\hat{\theta}_N = (\Bar{V}_{(0)}^\top, (\hat{\beta}^{(0)})^\top, (\hat{\beta}^{(1)})^\top)^\top \in \mathbb{R}^{3p+3}$, and where $\Sigma$ is the block-diagonal covariance matrix. By applying the delta method to the map
$$
g(\Bar{V}_{(0)}, \hat \beta^{(0)}, \hat \beta^{(1)}) = \begin{pmatrix}
 (\hat \beta^{(1)})^\top \Bar{V}_{(0)} \\
 (\hat\beta^{(0)})^\top \Bar{V}_{(0)}
\end{pmatrix},
$$
the asymptotic distribution of $\sqrt{N}(\hat \tau^{\text{OLS}}_{a, \mathrm{tG}} - \psi_a)$ is multivariate normal with covariance matrix:

$$
\Sigma_{\mathrm{tG}}^{\text{OLS}} = J \Sigma J^\top,
$$

where $J$ is the Jacobian of $g$ evaluated at $(\mathbb{E}_T[V],\beta^{(0)}, \beta^{(1)})$:

$$
J = \begin{pmatrix}
(\beta_{(1)})^\top & 0 & \mathbb{E}_T[V]^\top \\
(\beta_{(0)})^\top & \mathbb{E}_T[V]^\top & 0
\end{pmatrix}.
$$
we obtain:
\begin{align}\label{eq:V_TG}
\Sigma_{\mathrm{tG}}^{\text{OLS}} = 
\begin{pmatrix}
(\beta^{(1)})^\top \frac{\Var[V|S=0]}{1 - \alpha} \beta^{(1)} + \mathbb{E}_T[V]^\top \frac{\sigma^2 M^{-1}}{\alpha \pi} \mathbb{E}_T[V] & 
(\beta^{(1)})^\top \frac{\Var[V|S=0]}{1 - \alpha} \beta^{(0)} \\
(\beta^{(1)})^\top \frac{\Var[V|S=0]}{1 - \alpha} \beta^{(0)} & 
(\beta^{(0)})^\top \frac{\Var[V|S=0]}{1 - \alpha} \beta^{(0)} + \mathbb{E}_T[V]^\top \frac{\sigma^2 M^{-1}}{\alpha(1 - \pi)} \mathbb{E}_T[V]
\end{pmatrix},
\end{align}
\end{proof}
\begin{prp}\label{prp:weighted_M_esti}
For all \(a \in \{0,1\}\), let \(\hat \tau_{a,\mathrm{wG}}^{\text{OLS}} \) denote the weighted G-formula estimators where the density ratio is estimated using a logistic regression and linear regressions are used to estimate \(\mu_{(a)}^S\). Then, under \Cref{ass:internal validity} to \ref{ass:strong_linear_model}:
\[
\sqrt{N} \left( 
\begin{pmatrix}
\hat{\psi}_{0,\mathrm{wG}}^{\text{OLS}} \\
\hat{\psi}_{1,\mathrm{wG}}^{\text{OLS}}
\end{pmatrix}
-
\begin{pmatrix}
\psi_0 \\
\psi_1
\end{pmatrix}
\right)
\xrightarrow{d} \mathcal{N}(0, \Sigma_{\mathrm{wG}}^{\text{OLS}}).
\]
\end{prp}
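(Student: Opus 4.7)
The plan is to carry the analysis through exactly as in Proposition~\ref{prp:weighted Horvitz-Thompson} and Proposition~\ref{prp:psi_gaus_ber}, namely by stacking all the estimated quantities into a single $M$-estimator and invoking Theorem~7.2 of \cite{Stefanski2002Mestimation}. First, using the identity
\[
\wh \psi_{a,\mathrm{wG}}^{\text{OLS}} = \frac{1}{N(1-\wh \alpha)} \sum_{i=1}^N S_i \,\frac{1-\sigma(X_i,\wh\beta_N)}{\sigma(X_i,\wh\beta_N)}\, V_i^\top \wh \beta^{(a)},
\]
I would assemble the parameter vector
\[
\theta = \bigl(\psi_0,\psi_1,\alpha,\beta,\beta^{(0)},\beta^{(1)}\bigr)
\]
and the estimating function
\[
\lambda(Z,\theta) =
\begin{pmatrix}
\dfrac{S}{1-\alpha}\dfrac{1-\sigma(X,\beta)}{\sigma(X,\beta)}\, V^\top \beta^{(0)} - \psi_0 \\[4pt]
\dfrac{S}{1-\alpha}\dfrac{1-\sigma(X,\beta)}{\sigma(X,\beta)}\, V^\top \beta^{(1)} - \psi_1 \\[4pt]
S - \alpha \\[2pt]
V\bigl(S - \sigma(X,\beta)\bigr) \\[2pt]
S(1-A)\,V\bigl(Y - V^\top \beta^{(0)}\bigr) \\[2pt]
SA\,V\bigl(Y - V^\top \beta^{(1)}\bigr)
\end{pmatrix},
\]
so that $\sum_{i=1}^N \lambda(Z_i,\wh\theta_N) = 0$ by construction.

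Next I would verify that $\mathbb{E}[\lambda(Z,\theta_\infty)]=0$ at the point $\theta_\infty = (\psi_0^{\mathrm T},\psi_1^{\mathrm T},\alpha,\beta_\infty,\beta^{(0)},\beta^{(1)})$. The last four components vanish by the same arguments already used in Propositions~\ref{prp:weighted Horvitz-Thompson} and~\ref{prp:psi_gaus_ber}. For the first two components, conditioning on $X$ and using $\mathbb{E}[S\mid X]=\sigma(X,\beta_\infty)$ yields
\[
\mathbb{E}\!\left[\tfrac{S}{1-\alpha}\tfrac{1-\sigma(X,\beta_\infty)}{\sigma(X,\beta_\infty)} V^\top \beta^{(a)}\right] = \tfrac{1}{1-\alpha}\,\mathbb{E}\!\left[(1-S) V^\top \beta^{(a)}\right] = \mathbb{E}_{\mathrm T}[\mu_{(a)}^{\mathrm S}(X)],
\]
which equals $\psi_a^{\mathrm T}$ under \Cref{ass:strong transportability}. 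Uniqueness of the root follows as before: $\beta_\infty$ is the unique minimizer of the strictly convex logistic log-likelihood, $\beta^{(a)}$ is the unique minimizer of the OLS population criterion (positive definiteness of $\mathbb{E}[VV^\top]$ following from \Cref{ass:logistic}), and the remaining components depend linearly on $(\alpha,\psi_0,\psi_1)$.

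I would then check the regularity conditions: existence and continuity of the first two derivatives of $\lambda$ in a neighborhood of $\theta_\infty$, integrability of second-order partial derivatives, nonsingularity of $A(\theta_\infty) = \mathbb{E}[\partial_\theta \lambda(Z,\theta_\infty)]$, and finiteness of $B(\theta_\infty)=\mathbb{E}[\lambda \lambda^\top](Z,\theta_\infty)$. All second-derivative bounds are handled by the same Hölder/Cauchy--Schwarz arguments used in the proof of Proposition~\ref{prp:weighted Horvitz-Thompson}, leveraging sub-Gaussianity of $X$ (\Cref{ass:logistic}) together with the linear-model moment conditions (\Cref{ass:strong_linear_model}) and the logistic-tail control of $(1-\sigma)/\sigma$. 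Theorem~7.2 of \cite{Stefanski2002Mestimation} then gives
\[
\sqrt{N}(\wh\theta_N - \theta_\infty) \xrightarrow{d} \mathcal{N}\!\bigl(0,\, A(\theta_\infty)^{-1} B(\theta_\infty) A(\theta_\infty)^{-\top}\bigr),
\]
and projecting onto the first two coordinates yields the claimed convergence with covariance $\Sigma^{\mathrm{OLS}}_{\mathrm{wG}}$.

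The main obstacle is the explicit computation of this sandwich variance, which is required to compare it with $V^{\mathrm{OLS}}_{\mathrm{tG}}$ and $V_{\Phi,\text{HT}}$ in Proposition~\ref{prp:Oracle_Asymptotics}. The structural simplification is that $A(\theta_\infty)$ is block lower-triangular: the rows corresponding to $(\alpha,\beta,\beta^{(0)},\beta^{(1)})$ do not involve $(\psi_0,\psi_1)$, and the logistic and OLS blocks are mutually decoupled (since $\partial_\beta$ of the OLS score and $\partial_{\beta^{(a)}}$ of the logistic score both vanish in expectation). So $A(\theta_\infty)^{-1}$ can be computed by standard block-inverse formulas, mirroring the treatment in Proposition~\ref{prp:weighted Horvitz-Thompson}, and the rows $\nu_0^\top,\nu_1^\top$ associated with $\psi_0,\psi_1$ have an explicit form in terms of $Q^{-1}$, $M^{-1}$, and moments of $V$. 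The non-trivial cross-terms in $B(\theta_\infty)$ that couple the $\psi_a$ residual with the logistic score and with the OLS normal equations must then be evaluated using $\mathbb{E}[S\mid X]=\sigma$ and $\mathbb{E}[\epsilon_s^{(a)}\mid X]=0$; these are tedious but mechanical, and they produce the closed-form entries of $\Sigma^{\mathrm{OLS}}_{\mathrm{wG}}$ that underlie the variance ordering stated in Proposition~\ref{prp:Oracle_Asymptotics}.
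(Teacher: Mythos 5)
Your proposal follows essentially the same route as the paper's proof: the paper also stacks $(\psi_0,\psi_1,\alpha,\beta,\beta^{(0)},\beta^{(1)})$ into a single $M$-estimator with an estimating function identical to yours (its OLS components $S(1-A)\bigl(V\epsilon^{(0)}-VV^\top(\theta_4-\beta^{(0)})\bigr)$ are just your $S(1-A)V(Y-V^\top\beta^{(0)})$ rewritten under the linear model), invokes Theorem~7.2 of \cite{Stefanski2002Mestimation} by appealing to the structurally identical verifications in Propositions~\ref{prp:weighted Horvitz-Thompson} and~\ref{prp:psi_gaus_ber}, and extracts the top-left $2\times 2$ block of the sandwich covariance using the block-triangular Jacobian and Lemma~\ref{lem:inv_Q}. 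The argument is correct and matches the paper's in all essentials.
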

\begin{proof}
Let $Z = (S, X, S \times A, S \times Y)$ and define $\theta = (\theta_1, \theta_2, \theta_3, \beta, \theta_4, \theta_5)$. Consider the estimating function $\lambda(Z, \theta)$ defined as:
$$
\lambda(Z, \theta) =
\begin{pmatrix}
\frac{1-\sigma(X,\beta)}{\sigma(X,\beta)}\frac{S\theta_4^\top V}{1-\theta_{3}}-\theta_{1}, \\
\frac{1-\sigma(X,\beta)}{\sigma(X,\beta)}\frac{S\theta_5^\top V}{1-\theta_{3}}-\theta_{2}, \\
S - \theta_3, \\
V(S - \sigma(X, \beta)), \\
S(1 - A)\left(V \epsilon(0) - VV^\top (\theta_4 - \beta^{(0)})\right), \\
SA\left(V \epsilon(1) - VV^\top (\theta_5 - \beta^{(1)})\right)
\end{pmatrix}
$$
Define $\hat{\theta}_N = (\hat{\psi}_{0,\mathrm{wG}}^{\text{OLS}}, \hat{\psi}_{1,\mathrm{wG}}^{\text{OLS}}, \hat{\alpha}, \hat{\beta}_N, \hat{\beta}^{(0)}, \hat{\beta}^{(1)})$, where $\hat \beta_N$ is the MLE from the logistic regression of $S$ on $X$, and $r(X_i, \hat{\beta}_N) = \frac{1 - \sigma(X_i, \hat{\beta}_N)}{\sigma(X_i, \hat{\beta}_N)} \cdot \frac{\hat{\alpha}}{1 - \hat{\alpha}}$. Then, the estimators $\hat{\psi}_{a,\mathrm{wG}}^{\text{OLS}}$ take the form:
$$
\hat{\psi}_{a,\mathrm{wG}}^{\text{OLS}} = \frac{1}{N} \sum_{i=1}^{N} S_i \cdot \frac{1 - \sigma(X_i, \hat{\beta}_N)}{\sigma(X_i, \hat{\beta}_N)(1 - \hat{\alpha})} \cdot (\hat{\beta}^{(a)})^\top V_i.
$$
The estimator $\hat{\theta}_N$ solves the estimating equation:
$$
\sum_{i=1}^N \lambda(Z_i, \hat{\theta}_N) = 0.
$$

This setup is structurally identical to that of \cref{prp:weighted Horvitz-Thompson} and \ref{prp:psi_gaus_ber}, and the M-estimation theory in \citet{Stefanski2002Mestimation} applies directly. Specifically, the regularity conditions (e.g., smoothness of $\lambda$, identifiability, uniqueness of root) are satisfied.

As before, the asymptotic distribution of the M-estimator is:
$$
\sqrt{N}(\hat{\theta}_N - \theta_\infty) \xrightarrow{d} \mathcal{N}(0, A^{-1} B (A^{-1})^\top),
$$

where $A$ and $B$ are the Jacobian of the estimating function and its variance, respectively, evaluated at $\theta_\infty = (\psi_0, \psi_1, \alpha, \beta_\infty, \beta^{(0)}, \beta^{(1)})$. Focusing on the top-left $2 \times 2$ block of the sandwich covariance matrix—corresponding to $(\hat{\psi}_{0,\mathrm{wG}}^{\text{OLS}}, \hat{\psi}_{1,\mathrm{wG}}^{\text{OLS}})$—we denote this block by $\Sigma_{\mathrm{wG}}^{\text{OLS}}$, and conclude:
$$
\sqrt{N} \begin{pmatrix}
\hat{\psi}_{0,\mathrm{wG}}^{\text{OLS}} - \psi_0 \\
\hat{\psi}_{1,\mathrm{wG}}^{\text{OLS}} - \psi_1
\end{pmatrix} \xrightarrow{d} \mathcal{N}(0, \Sigma_{\mathrm{wG}}^{\text{OLS}}) \where \Sigma_{\mathrm{wG}}^{\text{OLS}} = \begin{pmatrix}
 V_{0,\mathrm{wG}}^{\text{OLS}} & C_{\mathrm{wG}}^{\text{OLS}} \\
 C_{\mathrm{wG}}^{\text{OLS}} & V_{1,\mathrm{wG}}^{\text{OLS}}
\end{pmatrix},
$$
where using \Cref{lem:inv_Q}, we simplify the expression to:
\begin{align}\label{eq:V_a_RG}
V_{a,\mathrm{wG}}^{\text{OLS}} =\ 
&\frac{(\beta^{(a)})^\top \Delta \beta^{(a)}}{\alpha}
- \frac{\psi_a^2}{1-\alpha} + 2 \frac{(\beta^{(a)})^\top\mathbb{E}_T[ VV^\top]\beta^{(a)}}{1-\alpha} - (\beta^{(a)})^\top\mathbb{E}_T[VV^\top] Q^{-1} \mathbb{E}_T[VV^\top]\beta^{(a)} \\
&+ \sigma^2 \cdot \frac{\mathbb{E}_T[V]^\top M^{-1} \mathbb{E}_T[V]}{\alpha\pi_a}.
\end{align}
For the covariance terms we get:
\begin{align}\label{eq:C_RG}
C_{\mathrm{wG}}^{\text{OLS}} 
&= \ \frac{(\beta^{(0)})^\top \Delta \beta^{(1)}}{\alpha}
- \frac{\psi_1\psi_0}{1-\alpha} +2 \frac{(\beta^{(1)})^\top\mathbb{E}_T[ VV^\top]\beta^{(0)}}{1-\alpha} - (\beta^{(0)})^\top\mathbb{E}_T[VV^\top] Q^{-1} \mathbb{E}_T[VV^\top]\beta^{(1)},
\end{align}
where \(\Delta = \mathbb{E}_T[r(X)VV^\top]\).
\end{proof}
\begin{lem}
Grant \Cref{ass:internal validity} to \Cref{ass:strong_linear_model},  then: 
\begin{itemize}
    \item the matrix \(\Sigma_{\mathrm{wG}}^{\text{OLS}} - \Sigma_{\mathrm{tG}}^{\text{OLS}}\) is semi-definite positive,
    \item the matrix  \(\Sigma_{\text{wHT}} - \Sigma_{\mathrm{wG}}^{\text{OLS}}\) is semi-definite positive.
\end{itemize}
Consequently,  $V_{\Phi,\mathrm{tG}}^{\text{OLS}} \leq V_{\Phi,\mathrm{wG}}^\text{OLS} \leq V_{\Phi,\text{wHT}}$.
\end{lem}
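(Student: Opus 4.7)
The scalar inequalities of the corollary follow immediately from the two matrix inequalities. By the delta method applied in the proofs of Propositions~\ref{prp:Oracle_Asymptotics}, \ref{prp:weighted Horvitz-Thompson}, and \ref{prp:weighted_M_esti}, each scalar asymptotic variance has the form $V_{\Phi,\text{method}} = (\nabla \Phi)^\top \Sigma_{\text{method}}\, \nabla \Phi$ where $\nabla \Phi$ is evaluated at $(\psi_1, \psi_0)$. Applying the same quadratic form to both sides of a PSD matrix inequality yields the corresponding scalar inequality, so the entire statement reduces to proving $\Sigma_{\mathrm{wG}}^{\text{OLS}} - \Sigma_{\mathrm{tG}}^{\text{OLS}} \succeq 0$ and $\Sigma_{\text{wHT}} - \Sigma_{\mathrm{wG}}^{\text{OLS}} \succeq 0$.

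My preferred strategy is an influence-function decomposition. Propositions~\ref{prp:weighted Horvitz-Thompson}, \ref{prp:psi_gaus_ber} and \ref{prp:weighted_M_esti} provide asymptotic linear expansions $\sqrt{N}(\hat\psi_a - \psi_a) = N^{-1/2}\sum_i \varphi_a(Z_i) + o_p(1)$ for each of the three estimators of $\psi_a$, with $\Sigma_{\text{method}} = \mathrm{Cov}(\varphi_1, \varphi_0)$. The plan is to exhibit additive decompositions
\[
\varphi_a^{\mathrm{wG,OLS}} = \varphi_a^{\mathrm{tG,OLS}} + \delta_a^{(1)}, \qquad \varphi_a^{\mathrm{wHT}} = \varphi_a^{\mathrm{wG,OLS}} + \delta_a^{(2)},
\]
and then verify that each remainder $\delta^{(j)}$, viewed as an $\mathbb{R}^2$-valued random vector, is $L^2(P_{\mathrm{obs}})$-orthogonal to the preceding influence-function vector. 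Given such orthogonality, the identities $\Sigma_{\mathrm{wG}}^{\mathrm{OLS}} - \Sigma_{\mathrm{tG}}^{\mathrm{OLS}} = \mathrm{Cov}(\delta^{(1)})$ and $\Sigma_{\mathrm{wHT}} - \Sigma_{\mathrm{wG}}^{\mathrm{OLS}} = \mathrm{Cov}(\delta^{(2)})$ follow automatically, and covariance matrices are PSD by construction.

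Under \Cref{ass:strong_linear_model}, the orthogonality for $\delta_a^{(2)}$ should be transparent: the Horvitz--Thompson influence function differs from the wG one only through a residual term proportional to $S\, r(X)\, \ind\{A=a\}\,(Y - V^\top\beta^{(a)})/(\alpha \pi_a)$, which has zero conditional expectation given $(X,S)$ and is therefore uncorrelated with every $\sigma(X,S,A)$-measurable component of $\varphi_a^{\mathrm{wG,OLS}}$. The main technical obstacle lies in $\delta_a^{(1)}$: the tG and wG expansions differ both in how the target expectation is formed (empirical target average versus density-ratio-reweighted source average) and in their coupling to the logistic-nuisance estimate $\hat\beta_N$ through the blocks of $A(\theta_\infty)^{-1}$ in the sandwich formula. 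The cross term in $\mathrm{Cov}(\varphi^{\mathrm{tG,OLS}}, \delta^{(1)})$ is expected to cancel by exploiting the population score identity $\mathbb{E}[V(S - \sigma(X,\beta_\infty))] = 0$ together with \Cref{lem:inv_Q}, which collapses the contribution of the $Q^{-1}$ block. Once this algebraic cancellation is carried through, $\mathrm{Cov}(\delta^{(1)})$ should reveal itself as a manifestly PSD combination involving the reweighted Gram structure $\Delta - \mathbb{E}_T[VV^\top]$ together with the residual-variance term $\sigma^2 M^{-1}$.
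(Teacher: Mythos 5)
Your strategy is genuinely different from the paper's, and as written it is not a complete proof. The paper argues by direct computation: it writes out every entry of the three $2\times2$ covariance matrices under the linear model, forms the entrywise differences, and then (i) for $\Sigma_{\text{wHT}}-\Sigma_{\mathrm{wG}}^{\text{OLS}}$ verifies that the trace and determinant of the difference are nonnegative (the residual term $\mathbb{E}_T[r(X)]-\mathbb{E}_T[V]^\top M^{-1}\mathbb{E}_T[V]$ is recognized as the squared $L^2(P_\mathrm{S})$-residual of projecting $r(X)$ onto the span of $V$, and the determinant bound follows from Cauchy--Schwarz in the inner product induced by $\Delta=\mathbb{E}_T[r(X)VV^\top]$), and (ii) for $\Sigma_{\mathrm{wG}}^{\text{OLS}}-\Sigma_{\mathrm{tG}}^{\text{OLS}}$ exhibits the difference as the Gram matrix of $\beta^{(0)},\beta^{(1)}$ under a matrix $H$ proved PSD by a Schur-complement/matrix Cauchy--Schwarz argument. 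Your projection route would, if completed, be more conceptual and would explain \emph{why} the ordering holds, but the central claims---that $\delta^{(1)}$ and $\delta^{(2)}$ are $L^2$-orthogonal to the preceding influence functions---are only asserted (``is expected to cancel'', ``should reveal itself''), not proved. For the tG-versus-wG comparison in particular, which you yourself flag as the main obstacle, nothing is verified, so the first bullet of the lemma is left entirely open.

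There is also a concrete error in the one step you do claim is ``transparent'': your description of $\delta_a^{(2)}$ omits terms. Under the linear model the gap between the wHT and wG expansions is
\begin{equation*}
\delta_a^{(2)} \;=\; \frac{S}{\alpha}\Bigl(\frac{\ind\{A=a\}}{\pi_a}-1\Bigr)\,r(X)\,V^\top\beta^{(a)} \;+\; \frac{S\,\ind\{A=a\}}{\alpha\pi_a}\,\epsilon^{(a)}\Bigl(r(X)-\mathbb{E}_T[V]^\top M^{-1}V\Bigr),
\end{equation*}
not merely the pure residual term $S\,r(X)\,\ind\{A=a\}(Y-V^\top\beta^{(a)})/(\alpha\pi_a)$ you state: the wHT estimator uses only the arm-$a$ units to estimate $\psi_a$ while the wG estimator averages the fitted regression over all source units, and the wG expansion already carries its own $\epsilon^{(a)}$-component through the OLS block. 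If $\delta^{(2)}$ were only the residual term, its cross-covariance $\mathrm{Cov}(\delta_1^{(2)},\delta_0^{(2)})$ would vanish (since $\ind\{A=1\}\ind\{A=0\}=0$ and the residuals are conditionally centered), contradicting the paper's explicit computation $C_{\text{wHT}}-C_{\mathrm{wG}}^{\text{OLS}}=-(\beta^{(0)})^\top\Delta\beta^{(1)}/\alpha\neq 0$; the first term above is exactly what produces that off-diagonal entry and the $\tfrac{1-\pi_a}{\alpha\pi_a}(\beta^{(a)})^\top\Delta\beta^{(a)}$ diagonal contribution. The orthogonality does in fact hold for the corrected $\delta_a^{(2)}$ (using $\mathbb{E}[\epsilon^{(a)}\mid X]=0$, the independence of $A$, and $\mathbb{E}_S[r(X)V]=\mathbb{E}_T[V]$), so your approach is salvageable for the second bullet, but as submitted both the decomposition and the orthogonality need to be redone, and the first bullet needs an actual argument.
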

\begin{proof}
\textbf{First statement} We first start with \(\Sigma_{\text{wHT}} - \Sigma_{\mathrm{wG}}^{\text{OLS}}\). Under \Cref{ass:strong_linear_model}
$$
Y^{(a)} = (\beta^{(a)})^\top V + \epsilon^{(a)}, \quad \text{where} \quad \mathbb{E}[\epsilon^{(a)} \mid X] = 0, \quad \text{Var}(\epsilon^{(a)} \mid X) = \sigma^2,
$$
we can express the variances $V_{a,\text{wHT}}$ and the covariance term $C_\text{wHT}$ defined in \ref{eq:V_a_pi} of the weighted Horvitz-thompson in terms of the model parameters and the distribution of $X$. Starting with $V_{a,\text{wHT}}$, we expand each expectation by substituting the linear form of $Y^{(a)}$. The first term of $V_{a,\text{wHT}}$ becomes
$$
\mathbb{E}_T[r(X)(Y^{(a)})^2] = \mathbb{E}_T\left[r(X)\left(((\beta^{(a)})^\top V)^2 + 2(\beta^{(a)})^\top V \epsilon^{(a)} + (\epsilon^{(a)})^2\right)\right].
$$
Taking expectations and applying the assumptions $\mathbb{E}[\epsilon^{(a)} \mid X] = 0$ and $\text{Var}(\epsilon^{(a)} \mid X) = \sigma^2$, this simplifies to
$$
\mathbb{E}_T[r(X)(Y^{(a)})^2] = \mathbb{E}_T[r(X)((\beta^{(a)})^\top V)^2] + \sigma^2 \mathbb{E}_T[r(X)].
$$
The second term, becomes $(\mathbb{E}_T[(\beta^{(a)})^\top X])^2$, since the error has mean zero. Moving to the third term, we use linearity to write
$$
\mathbb{E}_T[Y^{(a)}V^\top] = (\beta^{(a)})^\top \mathbb{E}_T[VV^\top],
$$
and thus the quadratic form becomes
$$
\mathbb{E}_T[Y^{(a)}V^\top]Q^{-1}\mathbb{E}_T[Y^{(a)}V] = (\beta^{(a)})^\top \mathbb{E}_T[VV^\top] Q^{-1} \mathbb{E}_T[VV^\top] \beta^{(a)}.
$$
For the fourth term, we compute
$$
\mathbb{E}_T[\sigma(X)Y^{(a)}V] = \mathbb{E}_T[\sigma(X)VV^\top] \beta^{(a)},
$$
which leads to
$$
2\,\mathbb{E}_T[Y^{(a)}V^\top]Q^{-1}\mathbb{E}_T[\sigma(X)Y^{(a)}V] = 
2\,(\beta^{(a)})^\top \mathbb{E}_T[VV^\top] Q^{-1} \mathbb{E}_T[\sigma(X)VV^\top] \beta^{(a)}.
$$
Noting that \(Q = (1-\alpha) \mathbb{E}_T\left[\sigma(X, \beta)VV^\top\right]\), this further simplifies to:
$$
2\,\mathbb{E}_T[Y^{(a)}V^\top]Q^{-1}\mathbb{E}_T[\sigma(X)Y^{(a)}V] = \frac{2\,(\beta^{(a)})^\top \mathbb{E}_T[VV^\top] \beta^{(a)}}{1-\alpha}.
$$
Combining all components, using linearity and the definition of $\Delta$, we have
$$
\begin{aligned}
V_{a,\text{wHT}} =\; & \frac{1}{\alpha \pi_a}\left( (\beta^{(a)})^\top \Delta \beta^{(a)} + \sigma^2 \mathbb{E}_T[r(X)] \right) - \frac{(\mathbb{E}_T[(\beta^{(a)})^\top V])^2}{1 - \alpha} \\
& - (\beta^{(a)})^\top \mathbb{E}_T[VV^\top] Q^{-1} \mathbb{E}_T[VV^\top] \beta^{(a)} + 2\,\frac{(\beta^{(a)})^\top \mathbb{E}_T[VV^\top] \beta^{(a)}}{1-\alpha}.
\end{aligned}
$$
Turning to the cross-covariance term $C_\text{wHT}$, we proceed similarly. Noting that
\begin{align*}
\mathbb{E}_T[Y^{(a)}] & = \mathbb{E}_T[(\beta^{(a)})^\top V], \\ 
\mathbb{E}_T[Y^{(a)}V^\top] & = (\beta^{(a)})^\top \mathbb{E}_T[VV^\top], \quad 
 \\ 
 \mathbb{E}_T[\sigma(X)Y^{(a)}V] & = \mathbb{E}_T[\sigma(X)VV^\top] \beta^{(a)},
\end{align*}
we substitute these into the original definition to obtain
$$
\begin{aligned}
C_\text{wHT} =\; 
& -\frac{\mathbb{E}_T[(\beta^{(1)})^\top V] \cdot \mathbb{E}_T[(\beta^{(0)})^\top V]}{1 - \alpha}  + 2\frac{(\beta^{(1)})^\top \mathbb{E}_T[VV^\top] \beta^{(0)}}{1-\alpha} \\
& - (\beta^{(1)})^\top \mathbb{E}_T[VV^\top] Q^{-1} \mathbb{E}_T[VV^\top] \beta^{(0)}.
\end{aligned}
$$
Now, we can compute the following quantities:
\begin{align*}
    V_{a,\text{wHT}} - V_{a,\mathrm{wG}}^{\text{OLS}} =& \frac{1-\pi_a}{\alpha\pi_a}(\beta^{(a)})^\top \Delta \beta^{(a)} + \frac{\sigma^2 }{\alpha\pi_a}\left(\mathbb{E}_T[r(X)] - \mathbb{E}_T[V]^\top M^{-1} \mathbb{E}_T[V]\right)
\end{align*}
and
\begin{align*}
 C_\text{wHT}-C_{\mathrm{wG}}^{\text{OLS}} =& -\frac{(\beta^{(0)})^\top \Delta \beta^{(1)}}{\alpha}.
\end{align*}
Since \(0<\pi_a<1\), we  immediately have that
\[
\frac{1-\pi_a}{\alpha\pi_a}(\beta^{(a)})^\top \Delta \beta^{(a)} \geq 0.
\]
Now, turning to the second term in the expression for \(V_{a,\text{wHT}} - V_{a,\mathrm{wG}}^{\text{OLS}}\), observe that
\[
    \mathbb{E}_T[r(X)] - \mathbb{E}_T[V]^\top M^{-1} \mathbb{E}_T[V]
\]
can be rewritten using the fact that the target distribution \(T\) is defined via reweighting from the source distribution \(S\), with \(r(X)\). This yields:
\[
\mathbb{E}_T[r(X)] - \mathbb{E}_T[V]^\top M^{-1} \mathbb{E}_T[V] = \mathbb{E}_S[r(X)^2] - \mathbb{E}_S[r(X)V]^\top \mathbb{E}_S[VV^\top]^{-1} \mathbb{E}_S[r(X)V].
\]
To interpret this expression, we recognize it as a variance-type quantity. In particular, we can rewrite it as:
\begin{align*}
& \mathbb{E}_S[r(X)^2] - 2 \mathbb{E}_S[r(X)V]^\top \mathbb{E}_S[VV^\top]^{-1} \mathbb{E}_S[r(X)V] \\
& \quad + \mathbb{E}_S[r(X)V]^\top \mathbb{E}_S[VV^\top]^{-1} \mathbb{E}_S[VV^\top] \mathbb{E}_S[VV^\top]^{-1} \mathbb{E}_S[r(X)V],
\end{align*}
which simplifies to:
\[
    \mathbb{E}_S\left[\left(r(X) - \mathbb{E}_S[r(X)V]^\top \mathbb{E}_S[VV^\top]^{-1} V\right)^2\right].
\]
This is the expected squared residual from projecting \(r(X)\) onto the linear span of \(V\), and is therefore nonnegative. Hence,
\[
    \mathbb{E}_T[r(X)] - \mathbb{E}_T[V]^\top M^{-1} \mathbb{E}_T[V] \geq 0.
\]
Combining this result with the earlier inequality, we conclude that
\[
    V_{a,\text{wHT}} - V_{a,\mathrm{wG}}^{\text{OLS}} \geq 0.
\]
Since this holds for both \(a = 0\) and \(a = 1\), and noting that
\[
    C_\text{wHT} - C_{\mathrm{wG}}^{\text{OLS}} = -\frac{(\beta^{(0)})^\top \Delta \beta^{(1)}}{\alpha},
\]
we now analyze the overall matrix difference. The trace satisfies:
\[
    \tr(\Sigma_{\text{wHT}} - \Sigma_{\mathrm{wG}}^{\text{OLS}}) = (V_{1,\text{wHT}} - V_{1,\mathrm{wG}}^{\text{OLS}}) + (V_{0,\text{wHT}} - V_{0,\mathrm{wG}}^{\text{OLS}}) \geq 0,
\]
and the determinant becomes:
\begin{align*}
    \det(\Sigma_{\text{wHT}} - \Sigma_{\mathrm{wG}}^{\text{OLS}}) 
    &= (V_{1,\text{wHT}} - V_{1,\mathrm{wG}}^{\text{OLS}})(V_{0,\text{wHT}} - V_{0,\mathrm{wG}}^{\text{OLS}}) - (C_\text{wHT} - C_{\mathrm{wG}}^{\text{OLS}})^2 \\
    &\geq \frac{1}{\alpha^2} \left( (\beta^{(1)})^\top \Delta \beta^{(1)} \cdot (\beta^{(0)})^\top \Delta \beta^{(0)} - ((\beta^{(1)})^\top \Delta \beta^{(0)})^2 \right) \geq 0,
\end{align*}
where the final inequality follows from Cauchy–Schwarz. Since both the trace and determinant of \(\Sigma_{\text{wHT}} - \Sigma_{\mathrm{wG}}^{\text{OLS}}\) are nonnegative, we conclude that the matrix \(\Sigma_{\text{wHT}} - \Sigma_{\mathrm{wG}}^{\text{OLS}}\) is positive semi-definite.

\textbf{Second statement} Now, we want to prove that \(\Sigma_{\mathrm{wG}}^{\text{OLS}} - \Sigma_{\mathrm{tG}}^{\text{OLS}} \) is semi-definite positive. Observe that using the expression defined in \ref{eq:V_a_pi}, \ref{eq:V_a_RG} and \ref{eq:V_TG} we have
$$
V_{a,\mathrm{wG}}^{\text{OLS}} - V_{a,\mathrm{tG}}^{\text{OLS}} = (\beta^{(a)})^\top H \beta^{(a)} \quad \text{and} \quad C_{\mathrm{wG}}^{\text{OLS}} - C_{\mathrm{tG}}^{\text{OLS}} = (\beta^{(1)})^\top H \beta^{(0)},
$$
where the matrix $H$ is defined as
$$
H := \frac{1}{1 - \alpha} \left( \mathbb{E}_T\left[ \frac{1 - \sigma(X, \beta)}{\sigma(X, \beta)} VV^\top \right] + \mathbb{E}_T[VV^\top] - \mathbb{E}_T[VV^\top] \left( \mathbb{E}_T[\sigma(X, \beta) VV^\top] \right)^{-1} \mathbb{E}_T[VV^\top] \right).
$$
To simplify notation, define the scalar function:
$$
w(X) := \frac{1 - \sigma(X, \beta)}{\sigma(X, \beta)} = \exp(X^\top \beta) > 0.
$$
Then,
$$
\mathbb{E}_T\left[ \frac{1 - \sigma(X, \beta)}{\sigma(X, \beta)} VV^\top \right] = \mathbb{E}_T[w(X) VV^\top] \succeq 0,
$$
since $w(X) > 0$ and $VV^\top \succeq 0$, and expectations of positive semidefinite matrices with positive weights preserve positive semidefiniteness.
Now consider the remaining terms in $H$:
$$
\mathbb{E}_T[VV^\top] - \mathbb{E}_T[VV^\top] \left( \mathbb{E}_T[\sigma(X, \beta) VV^\top] \right)^{-1} \mathbb{E}_T[VV^\top].
$$
This expression has the form $A - A B^{-1} A$, where both $A := \mathbb{E}_T[VV^\top]$ and $B := \mathbb{E}_T[\sigma(X, \beta) VV^\top]$ are positive semidefinite matrices. The matrix $A - A B^{-1} A$ is then positive semidefinite by the matrix version of the Cauchy–Schwarz inequality (or more generally, by the Schur complement argument).
Thus, each of the three components is symmetric and positive semidefinite, implying that $H$ itself is symmetric and positive semidefinite.
Now, by definition:
$$
\Sigma_{\mathrm{wG}}^{\text{OLS}} - \Sigma_{\mathrm{tG}}^{\text{OLS}} = \begin{bmatrix}
(\beta^{(0)})^\top H \beta^{(0)} & (\beta^{(1)})^\top H \beta^{(0)} \\
(\beta^{(1)})^\top H \beta^{(0)} & (\beta^{(1)})^\top H \beta^{(1)}
\end{bmatrix}.
$$
Since this matrix is a Gram matrix induced by $H \succeq 0$, it follows that $\Sigma_{\mathrm{wG}}^{\text{OLS}} - \Sigma_{\mathrm{tG}}^{\text{OLS}} \succeq 0$.
\end{proof}

\begin{prp}
Grant \Cref{ass:internal validity} to \ref{ass:strong_linear_model}, and let \(\hat \tau_{\Phi,\mathrm{tG}}^{\text{OLS}} \) denote the transported G-formula estimators where linear regressions are used to estimate \(\mu_{(a)}^S\). Then, \(\hat \tau_{\Phi,\mathrm{tG}}^{\text{OLS}} \) is asymptotically normal:
\[
\sqrt{N} \left(\hat \tau_{\Phi,\mathrm{tG}}^{\text{OLS}} - \tau_{\Phi}^{T} \right) \stackrel{d}{\rightarrow} \mathcal{N}\left(0, V_{\Phi,\mathrm{tG}}^{\text{OLS}} \right),
\]
where
\[
\begin{aligned}
V_{\Phi,\mathrm{tG}}^{\text{OLS}} =\;&
\frac{1}{1 - \alpha} \left\| \frac{\partial \Phi}{\partial \psi_1} \beta^{(1)} + \frac{\partial \Phi}{\partial \psi_0} \beta^{(0)}\right\|_{\text{Var}[V|S=0]} \\
&+ \frac{\sigma^2\mathbb{E}_T[V]^\top M^{-1} \mathbb{E}_T[V]}{\alpha} \left[\frac{1}{1-\pi}\left(\frac{\partial \Phi}{\partial \psi_0}\right)^2 
+ \frac{1}{\pi}\left(\frac{\partial \Phi}{\partial \psi_1}\right)^2\right].
\end{aligned}
\]

\end{prp}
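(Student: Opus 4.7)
Since $\hat\tau_{\Phi,\mathrm{tG}}^{\mathrm{OLS}} = \Phi(\hat\tau_{1,\mathrm{tG}}^{\mathrm{OLS}}, \hat\tau_{0,\mathrm{tG}}^{\mathrm{OLS}})$ is a smooth transformation of the pair of OLS-based potential-outcome estimators, the argument is a direct application of the multivariate delta method on top of the preceding corollary. That corollary already delivers
\[
\sqrt{N}\,\bigl((\hat\tau_{1,\mathrm{tG}}^{\mathrm{OLS}}, \hat\tau_{0,\mathrm{tG}}^{\mathrm{OLS}})^\top - (\psi_1, \psi_0)^\top\bigr) \stackrel{d}{\rightarrow} \mathcal{N}(0, \Sigma_{\mathrm{tG}}^{\mathrm{OLS}}),
\]
with $\Sigma_{\mathrm{tG}}^{\mathrm{OLS}}$ displayed in closed form. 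Under the differentiability of $\Phi$ at $(\psi_1,\psi_0)$---which holds for every effect measure listed in Definition~\ref{dfn:First Moment Causal Measures} on the interior of its natural domain---the delta method yields
\[
\sqrt{N}\bigl(\hat\tau_{\Phi,\mathrm{tG}}^{\mathrm{OLS}} - \tau_\Phi^{\mathrm{T}}\bigr) \stackrel{d}{\rightarrow} \mathcal{N}\!\left(0,\ \nabla\Phi^\top\, \Sigma_{\mathrm{tG}}^{\mathrm{OLS}}\, \nabla\Phi\right),
\]
where $\nabla\Phi = (\partial_1\Phi(\psi_1,\psi_0), \partial_0\Phi(\psi_1,\psi_0))^\top$.

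The remaining work is purely algebraic: expand the quadratic form $\nabla\Phi^\top\, \Sigma_{\mathrm{tG}}^{\mathrm{OLS}}\, \nabla\Phi$ and regroup according to the two sources of fluctuation. The off-diagonal coefficient of $\Sigma_{\mathrm{tG}}^{\mathrm{OLS}}$, together with the first summand of each diagonal entry, all share the common factor $\Var[V\mid S=0]/(1-\alpha)$ (coming from the variability of $\bar V_{(0)}$); their combination assembles into the perfect square
\[
\frac{1}{1-\alpha}\,\bigl(\partial_1\Phi\,\beta^{(1)} + \partial_0\Phi\,\beta^{(0)}\bigr)^\top \Var[V\mid S=0]\, \bigl(\partial_1\Phi\,\beta^{(1)} + \partial_0\Phi\,\beta^{(0)}\bigr),
\]
which is exactly the first line of the stated variance under the weighted-norm notation $\|\cdot\|_{\Var[V\mid S=0]}$. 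The second summand of each diagonal entry is $\sigma^2\, \mathbb{E}_T[V]^\top M^{-1} \mathbb{E}_T[V] / (\alpha\pi_a)$, arising from the OLS variability of $\hat\beta^{(a)}$; because $\hat\beta^{(0)}$ and $\hat\beta^{(1)}$ are asymptotically uncorrelated (the block of $B(\theta_\infty)$ coupling them vanishes since $A(1-A) = 0$), these terms do not mix across $a$ and sum cleanly to
\[
\frac{\sigma^2\, \mathbb{E}_T[V]^\top M^{-1}\mathbb{E}_T[V]}{\alpha}\left[\frac{(\partial_0\Phi)^2}{1-\pi} + \frac{(\partial_1\Phi)^2}{\pi}\right],
\]
producing the second line of the claimed expression.

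There is no genuine obstacle: once the corollary is in hand, the proof reduces to bookkeeping. The only subtlety worth flagging is the implicit smoothness requirement on $\Phi$, which for measures like OR or NNT demands $\psi_0,\psi_1 \in (0,1)$ and $\psi_1 \neq \psi_0$ respectively; these conditions are already tacit in the well-posedness of $\tau_\Phi^{\mathrm{T}}$ and thus introduce no additional hypothesis.
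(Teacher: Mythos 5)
Your argument is correct and matches the paper's: both are delta-method computations for $\Phi$ built on the joint asymptotic normality of $(\bar V_{(0)},\hat\beta^{(0)},\hat\beta^{(1)})$ from the preceding results, followed by the same regrouping of the quadratic form into the $\Var[V\mid S=0]/(1-\alpha)$ ``perfect square'' term and the two uncorrelated OLS terms in $\sigma^2\,\mathbb{E}_T[V]^\top M^{-1}\mathbb{E}_T[V]/(\alpha\pi_a)$. The only cosmetic difference is that you invoke the corollary's bivariate limit directly, whereas the paper re-derives the linearization from the $\hat\theta_N$-level proposition via a Taylor expansion of $\Phi$; the conclusion and bookkeeping are identical.
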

\begin{proof}
We begin by analyzing the transported OLS estimator, defined as
\[
\hat \tau_{\Phi,\mathrm{tG}}^{\text{OLS}} = \Phi\left((\hat \beta^{(1)})^\top \Bar{V}_{(0)},\; (\hat \beta^{(0)})^\top \Bar{V}_{(0)}\right),
\]
where \(\Bar{V}_{(0)}\) is the empirical mean of covariates in the target population. Under \Cref{ass:strong_linear_model}, the corresponding population estimand is
\[
\tau_{\Phi}^{T} = \Phi\left((\beta^{(1)})^\top \mathbb{E}_T[V],\; (\beta^{(0)})^\top \mathbb{E}_T[V]\right),
\]
where \(\mathbb{E}_T[V]\) is the expectation of covariates under the target distribution.

To study the asymptotic behavior of the estimator, we perform a first-order Taylor expansion of \(\Phi\) around the point \((\psi_1^*, \psi_0^*) := \left((\beta^{(1)})^\top \mathbb{E}_T[V],\; (\beta^{(0)})^\top \mathbb{E}_T[V]\right)\). This yields:
\[
\begin{aligned}
\hat \tau_{\Phi,\mathrm{tG}}^{\text{OLS}} - \tau_{\Phi}^{T} 
&= \Phi\left((\hat \beta^{(1)})^\top \Bar{V}_{(0)},\; (\hat \beta^{(0)})^\top \Bar{V}_{(0)}\right) - \Phi\left((\beta^{(1)})^\top \mathbb{E}_T[V],\; (\beta^{(0)})^\top \mathbb{E}_T[V]\right) \\
&= \frac{\partial \Phi}{\partial \psi_1} \Big|_{(\psi_1^*, \psi_0^*)} \left((\hat \beta^{(1)})^\top \Bar{V}_{(0)} - (\beta^{(1)})^\top \mathbb{E}_T[V]\right) \\
&\quad + \frac{\partial \Phi}{\partial \psi_0} \Big|_{(\psi_1^*, \psi_0^*)} \left((\hat \beta^{(0)})^\top \Bar{V}_{(0)} - (\beta^{(0)})^\top \mathbb{E}_T[V]\right) \\
&\quad + o_p\left( \left\| 
\begin{pmatrix}
(\hat \beta^{(1)})^\top \Bar{V}_{(0)} - (\beta^{(1)})^\top \mathbb{E}_T[V] \\
(\hat \beta^{(0)})^\top \Bar{V}_{(0)} - (\beta^{(0)})^\top \mathbb{E}_T[V]
\end{pmatrix} \right\| \right).
\end{aligned}
\]

To further decompose the linear terms, note that for each \(a \in \{0,1\}\), we can write:
\[
(\hat \beta^{(a)})^\top \Bar{V}_{(0)} - (\beta^{(a)})^\top \mathbb{E}_T[V] 
= (\hat \beta^{(a)} - \beta^{(a)})^\top \mathbb{E}_T[V] + (\hat \beta^{(a)})^\top \left(\Bar{V}_{(0)} - \mathbb{E}_T[V]\right).
\]

Combining these expressions, we obtain:
\[
\begin{aligned}
\hat \tau_{\Phi,\mathrm{tG}}^{\text{OLS}} - \tau_{\Phi}^{T}
&= \frac{\partial \Phi}{\partial \psi_1} \Big|_{(\psi_1^*, \psi_0^*)} \left[ (\hat \beta^{(1)} - \beta^{(1)})^\top \mathbb{E}_T[V] + (\hat \beta^{(1)})^\top \left(\Bar{V}_{(0)} - \mathbb{E}_T[V]\right) \right] \\
&+ \frac{\partial \Phi}{\partial \psi_0} \Big|_{(\psi_1^*, \psi_0^*)} \left[ (\hat \beta^{(0)} - \beta^{(0)})^\top \mathbb{E}_T[V] + (\hat \beta^{(0)})^\top \left(\Bar{V}_{(0)} - \mathbb{E}_T[V]\right) \right] + o_p(N^{-1/2}).
\end{aligned}
\]

By the multivariate Central Limit Theorem, the Law of Large Numbers, and Slutsky’s theorem—along with \ref{prp:psi_gaus_ber}—we conclude:
\[
\sqrt{N}\left(\hat \tau_{\Phi,\mathrm{tG}}^{\text{OLS}} - \tau_{\Phi}^{T}\right) \stackrel{d}{\rightarrow} \mathcal{N}\left(0,\; \alpha_{\infty}^\top \Sigma\, \alpha_{\infty} \right),
\]
where $\Sigma$ is defined in \ref{eq:V_TG} and the influence vector \(\alpha_{\infty}\) is given by
\[
\alpha_{\infty} = \frac{\partial \Phi}{\partial \psi_1} \Big|_{(\psi_1^*, \psi_0^*)} \alpha_{1,\infty} + \frac{\partial \Phi}{\partial \psi_0} \Big|_{(\psi_1^*, \psi_0^*)} \alpha_{0,\infty},
\]
and each \(\alpha_{a,\infty} \in \mathbb{R}^p\) is defined as
$$
\alpha_{a,\infty}^\top = \left( (\beta^{(a)})^\top,\, \mathbb{E}_T[V]^\top \cdot \ind_{\{a=0\}},\, \mathbb{E}_T[V]^\top \cdot \ind_{\{a=1\}} \right)
\in \mathbb{R}^{3p+3}.
$$
Combining this with the block-diagonal form of \(\Sigma\), we obtain the explicit asymptotic variance:
\[
\begin{aligned}
\alpha_{\infty}^\top \Sigma \alpha_{\infty} =\;&
\frac{1}{1 - \alpha} \left\| \frac{\partial \Phi}{\partial \psi_1} \beta^{(1)} + \frac{\partial \Phi}{\partial \psi_0} \beta^{(0)}\right\|_{\text{Var}[V|S=0]} \\
&+ \frac{\sigma^2\mathbb{E}_T[V]^\top M^{-1} \mathbb{E}_T[V]}{\alpha} \left[\frac{1}{1-\pi}\left(\frac{\partial \Phi}{\partial \psi_0}\right)^2 
+ \frac{1}{\pi}\left(\frac{\partial \Phi}{\partial \psi_1}\right)^2\right].
\end{aligned}
\]
\end{proof}

\begin{prp}
Let \(\hat \tau_{\Phi,\mathrm{wG}}^{\text{OLS}} \) denote the weighted G-formula estimators where the density ratio is estimated using a logistic regression and linear regressions are used to estimate \(\mu_{(a)}^S\). Then, under \Cref{ass:internal validity} to \ref{ass:strong_linear_model},  \(\hat \tau_{\Phi,\mathrm{wG}}^{\text{OLS}} \) is asymptotically normal:
\[
\sqrt{N} \left(\hat \tau_{\Phi,\mathrm{wG}}^{\text{OLS}} - \tau_{\Phi}^{T} \right) \stackrel{d}{\rightarrow} \mathcal{N}\left(0, V_{\Phi,\mathrm{wG}}^{\text{OLS}} \right),
\]
where
\[
\begin{aligned}
V_{\Phi,\mathrm{wG}}^{\text{OLS}} =& à\; \left( \frac{\partial \Phi}{\partial \psi_1} \right)^2 V_{1,\mathrm{wG}}^{\text{OLS}} + \left( \frac{\partial \Phi}{\partial \psi_0} \right)^2 V_{0,\mathrm{wG}}^{\text{OLS}}+ 2 \left( \frac{\partial \Phi}{\partial \psi_1} \right)\left( \frac{\partial \Phi}{\partial \psi_0} \right) C_{\mathrm{wG}}^{\text{OLS}}.
\end{aligned}
\]
\end{prp}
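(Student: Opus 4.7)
The plan is to apply the multivariate delta method to the joint asymptotic normality of $(\hat{\psi}_{1,\mathrm{wG}}^{\text{OLS}}, \hat{\psi}_{0,\mathrm{wG}}^{\text{OLS}})$ that has already been established in \Cref{prp:weighted_M_esti}. Since by definition $\hat \tau_{\Phi,\mathrm{wG}}^{\text{OLS}} = \Phi(\hat \psi_{1,\mathrm{wG}}^{\text{OLS}}, \hat \psi_{0,\mathrm{wG}}^{\text{OLS}})$ and $\tau_\Phi^{\mathrm T} = \Phi(\psi_1, \psi_0)$, the asymptotic distribution of the composed estimator follows immediately from propagating the joint CLT through the smooth map $\Phi$.

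Concretely, I would first quote \Cref{prp:weighted_M_esti} to obtain
\[
\sqrt{N} \begin{pmatrix} \hat{\psi}_{1,\mathrm{wG}}^{\text{OLS}} - \psi_1 \\ \hat{\psi}_{0,\mathrm{wG}}^{\text{OLS}} - \psi_0 \end{pmatrix} \xrightarrow{d} \mathcal{N}\!\left(0, \Sigma_{\mathrm{wG}}^{\text{OLS}}\right),
\]
where the entries of $\Sigma_{\mathrm{wG}}^{\text{OLS}}$ are $V_{1,\mathrm{wG}}^{\text{OLS}}$, $V_{0,\mathrm{wG}}^{\text{OLS}}$ and $C_{\mathrm{wG}}^{\text{OLS}}$. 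Next, noting that every first moment population causal measure listed in \Cref{appendix:First Moment Causal Measures} has an effect measure $\Phi$ that is continuously differentiable on the interior of $D_\Phi$, and that the target values $(\psi_1, \psi_0)$ lie in this interior under the positivity assumptions implicit in \Cref{ass:Overlap} and \Cref{ass:integrability}, I would apply the delta method to the map $\Phi:\mathbb{R}^2\to\mathbb{R}$ with gradient $\nabla\Phi = (\partial_1\Phi,\partial_0\Phi)^\top$ evaluated at $(\psi_1,\psi_0)$. This yields
\[
\sqrt{N}\left( \hat \tau_{\Phi,\mathrm{wG}}^{\text{OLS}} - \tau_{\Phi}^{T} \right) \xrightarrow{d} \mathcal{N}\!\left(0,\; \nabla \Phi^\top \Sigma_{\mathrm{wG}}^{\text{OLS}} \nabla \Phi\right).
\]

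Expanding the quadratic form $\nabla \Phi^\top \Sigma_{\mathrm{wG}}^{\text{OLS}} \nabla \Phi$ component by component gives exactly
\[
\left( \tfrac{\partial \Phi}{\partial \psi_1} \right)^2 V_{1,\mathrm{wG}}^{\text{OLS}} + \left( \tfrac{\partial \Phi}{\partial \psi_0} \right)^2 V_{0,\mathrm{wG}}^{\text{OLS}} + 2 \left( \tfrac{\partial \Phi}{\partial \psi_1} \right)\!\left( \tfrac{\partial \Phi}{\partial \psi_0} \right) C_{\mathrm{wG}}^{\text{OLS}},
\]
which is the announced $V_{\Phi,\mathrm{wG}}^{\text{OLS}}$. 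The argument is essentially identical in structure to the one used earlier in the appendix to derive the asymptotic variance of $\hat \tau_{\Phi,\text{wHT}}$ from the joint normality of the two potential-outcome-mean estimators.

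There is no real obstacle here: all heavy lifting (the M-estimation argument handling the simultaneous convergence of the OLS coefficients $(\hat\beta^{(0)},\hat\beta^{(1)})$, the logistic MLE $\hat\beta_N$, and $\hat\alpha$, together with the verification of Stefanski--Boos regularity conditions) is already contained in \Cref{prp:weighted_M_esti}. The only point requiring a minor comment is the differentiability of $\Phi$ at $(\psi_1,\psi_0)$, which has to be checked case by case for each of the causal measures considered but is automatic under standard positivity conditions (e.g.\ $\psi_0 \neq 0$ for the RR, $\psi_0,\psi_1\in(0,1)$ for the OR and log-OR, $\psi_1 \neq \psi_0$ for the NNT).
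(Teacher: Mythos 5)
Your proposal is correct and follows exactly the paper's own argument: invoke \Cref{prp:weighted_M_esti} for the joint asymptotic normality of $(\hat{\psi}_{1,\mathrm{wG}}^{\text{OLS}}, \hat{\psi}_{0,\mathrm{wG}}^{\text{OLS}})$ and apply the bivariate delta method through $\Phi$, expanding $\nabla\Phi^\top \Sigma_{\mathrm{wG}}^{\text{OLS}} \nabla\Phi$. Your added remark on checking differentiability of $\Phi$ at $(\psi_1,\psi_0)$ for each measure is a sensible precision that the paper leaves implicit.
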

\begin{proof}
Using \ref{prp:weighted_M_esti}, we apply the delta method to \(\hat\tau_{\Phi,\mathrm{wG}}^{\text{OLS}} = \Phi(\hat{\psi}_{1,\mathrm{wG}}^{\text{OLS}}, \hat{\psi}_{0,\mathrm{wG}}^{\text{OLS}})\) of the two-dimensional asymptotically normal vector. By the delta method:
\[
\sqrt{N} \left( \hat\tau_{\Phi,\mathrm{wG}}^{\text{OLS}} - \Phi(\psi_1, \psi_0) \right) 
\xrightarrow{d} \mathcal{N} \left( 0, \nabla \Phi(\psi_1, \psi_0)^\top \, \Sigma_{\mathrm{wG}}^{\text{OLS}} \, \nabla \Phi(\psi_1, \psi_0) \right),
\]
where \( \nabla \Phi(\psi_1, \psi_0) = \left( \frac{\partial \Phi}{\partial \psi_1}, \frac{\partial \Phi}{\partial \psi_0} \right)^\top \) is the gradient of \( \Phi \) evaluated at the population means. Therefore we get:
\[
\begin{aligned}
V_{\Phi,\mathrm{wG}}^{\text{OLS}}
= &\; \left( \frac{\partial \Phi}{\partial \psi_1} \right)^2 V_{1,\mathrm{wG}}^{\text{OLS}} + \left( \frac{\partial \Phi}{\partial \psi_0} \right)^2 V_{0,\mathrm{wG}}^{\text{OLS}}+ 2 \left( \frac{\partial \Phi}{\partial \psi_1} \right)\left( \frac{\partial \Phi}{\partial \psi_0} \right) C_{\mathrm{wG}}^{\text{OLS}}.
\end{aligned}
\]
\end{proof}
\subsection{Semiparametric Efficient Estimators under Exchangeability of conditional outcome}\label{appendix:Semiparametric Efficient Estimators strong}

\begin{proof}[Proof of Proposition \ref{prp:IF_psi_a_corps_du_texte}]
In this proof, we use the usual machinery of influence function computation, as described for instance in \cite{kennedy2024semiparametric}. In particular, we will for the sake of the computation, assume that $X$ is a categorical variable taking value in a countable space. We first recall that if 
\begin{equation} \label{eq:simpleif}
\psi(P_{\mathrm{obs}}) := \bbE_{P_{\mathrm{obs}}}[h(Z)|\cA]
\end{equation}
for some measurable function $h$ and some event of positive mass $\cA$, then the influence function of $\psi$ at $P_{\mathrm{obs}}$ is given by
\begin{equation}
\label{eq:if}
\IF(\psi)(Z) := \frac{\ind\{\cA\}}{P_{\mathrm{obs}}(\cA)} (h(Z)-\psi).
\end{equation}
We now rewrite $\psi_a$ using functional of the form \eqref{eq:simpleif}:
\begin{align*}  
\psi_a &:= \bbE_P[Y^{(a)}|S=0] \\
& =  \bbE_{P_{\mathrm{obs}}}[\bbE_P[Y^{(a)}|X,S=0]|S=0] \\
&= \bbE_{P_{\mathrm{obs}}}[\bbE_P[Y^{(a)}|X,S=1]|S=0] \\
&= \bbE_{P_{\mathrm{obs}}}[\bbE_{P_{\mathrm{obs}}}[Y |X,S=1, A=a]|S=0] \\
&= \sum_{x \in \cX} \bbE_{P_{\mathrm{obs}}}[\ind\{X=x\}|S=0] \times \bbE_{P_{\mathrm{obs}}}[Y |X=x,S=1, A=a].
\end{align*} 
Using \eqref{eq:if}, and usual properties of the influence functions \cite[Sec 3.4.3]{kennedy2024semiparametric}, we find
\begin{align*}
\IF(\psi_a)(Z) &=  \sum_{x\in\cX} \frac{1-S}{1-\alpha}(\ind\{X=x\}-\bbE_{P_{\mathrm{obs}}}[\ind\{X=x\}|S=0]) \times \bbE_{P_{\mathrm{obs}}}[Y |X=x,S=1,A=a] \\
&+ \sum_{x\in\cX} \bbE_{P_{\mathrm{obs}}}[\ind\{X=x\}|S=0] \times  \frac{\ind\{X=x,S=1, A=a\}}{P_{\mathrm{obs}}(X=x,S=1, A=a)}( Y -\bbE_{P_{\mathrm{obs}}}[ Y |X=x,S=1, A=a]).
\end{align*} 
The first sum simply rewrites
\begin{align*} 
\frac{1-S}{1-\alpha} (\mu_{(a)}(X) - \Psi_a(P)),
\end{align*} 
and for the second, notice that, using that $A$ and $S$ are independent:
$$
\frac{\bbE_{P_{\mathrm{obs}}}[\ind\{X=x\}|S=0]}{P(X=x,S=1,A=a)} = \frac{r(X)}{\alpha P(A=a)},
$$
so that the sum rewrites
\begin{align*} 
\frac{\ind\{S=1,A=a\}}{\alpha P(A=a)} r(X)(Y-\mu_{(a)}(X)).
\end{align*} 
In the end, we indeed find
$$
\IF(\psi_a)(Z) = \frac{1-S}{1-\alpha} (\mu_{(a)}(X) -\psi_a)+\frac{S\ind\{A=a\}}{\alpha P(A=a)} r(X)(Y-\mu_{(a)}(X)).
$$
\end{proof}


\begin{proof}[Proof of Proposition \ref{prp:weak_double_robust}] By conditioning with respect to the randomness of $\wh \mu_{(a)}$ and $\wh r$, we can treat these functions as deterministic. By using the law of large number, we see that $m/N$ goes to $1-\alpha$ and that $\wh\psi_{(a)}$ converges towards
$$
\frac{1}{1-\alpha} \bbE[(1-S)\wh \mu_{(a)}(X)] + \frac{1}{ \alpha}\bbE\left[S\wh r(X)(Y^a-\wh \mu_{(a)}(X)\right].
$$
If $\wh \mu_{(a)} = \mu_{(a)}$, then the second term in the above formula cancels and we are left with
$$
\frac{1}{1-\alpha} \bbE[(1-S)\wh \mu_{(a)}(X)] =  \bbE[\wh \mu_{(a)}(X)|S=0] = \psi_a.
$$
If $\wh r = r$, then the second term yields
$$
\bbE\left[r(X)(Y-\wh \mu_{(a)}(X)|S=1\right] = \bbE\left[Y^a-\wh \mu_{(a)}(X)|S=0\right] = \psi_a-\bbE\left[\wh \mu_{(a)}(X)|S=0\right], 
$$
which also yields the results. Using continuity of $\Phi$ allows to conclude.
\end{proof}

\subsection{Computations of one-step estimators and estimating equation estimators for the OR.}
\label{sec:appendix_os-ee_computations1}

Note that $\wh \psi_a^{\mathrm{EE}}$ is solution to $\sum \vp_a(Z_i,\wh\eta,\wh \psi_a^{\mathrm{EE}})=0$. Then, given the expression of $\vp_a$ in \Cref{prp:IF_psi_a_corps_du_texte}, it holds that for any other estimator $\wh \psi_a$:
$$
\sum_{i=1}^N \vp_a(Z_i,\wh\eta,\wh \psi_a) = - \frac{m}{1-\alpha}(\wh \psi_a-\wh \psi_a^{\mathrm{EE}}).
$$
We will use this observation in the subsequent computations.
\begin{enumerate}
\item[(RD)]
For the risk difference, the estimating equation approach yields:
$$
\wh\tau_{\mathrm{RD}}^{\mathrm{EE}} = \wh\psi_1^{\mathrm{EE}}-\wh\psi_0^{\mathrm{EE}}.
$$
The one step approach, based on initial estimators $\wh \psi_1$ and $\wh \psi_0$ yields an estimate of the form:
\begin{align*}
\wh\tau_{\mathrm{RD}}^{\mathrm{OS}} &= \wh\psi_1-\wh\psi_0 + \frac{m}{N(1-\alpha)} (\wh\psi_1^{\mathrm{EE}}-\wh\psi_1) - \frac{m}{N(1-\alpha)} (\wh\psi_0^{\mathrm{EE}}-\wh\psi_0) \\
&= \frac{m}{N(1-\alpha)} \wh\tau^{\mathrm{EE}}_{\mathrm{RD}} + \left(1-\frac{m}{N(1-\alpha)}\right) \wh \tau_{\mathrm{RD}}
\end{align*}
In particular, we see that starting from estimators $\wh \psi_a$ of the form $\sum_{S_i=0} \wh \mu_{(a)}(X_i)$ yields a final estimator that has the same structure as the estimating equation estimator, up to scaling factors depending on $\alpha,N$ and $m$ that are asymptotically close to $1$.
\item[(RR)]
For the risk ratio, the estimating equation approach yields:
$$
\wh\tau_{\mathrm{RR}}^{\mathrm{EE}} = \frac{\wh\psi_1^{\mathrm{EE}}}{\wh\psi_0^{\mathrm{EE}}}.
$$
In contrast, the one step approach, based on initial estimators $\wh \psi_1$ and $\wh \psi_0$ yields an estimate of the form:
\begin{align*}
\wh\tau_{\mathrm{RR}}^{\mathrm{OS}} = &\frac{\wh\psi_1}{\wh\psi_0} + \frac{1}{\wh\psi_0}  \frac{m}{N(1-\alpha)} (\wh\psi_1^{\mathrm{EE}}-\wh\psi_1) - \frac{\wh\psi_1}{\wh\psi_0^2} \frac{m}{N(1-\alpha)} (\wh\psi_0^{\mathrm{EE}}-\wh\psi_0).
\end{align*}
In particular, we see that in general, $\wh\tau_{\mathrm{RR}}^{\mathrm{OS}} \neq \wh\tau_{\mathrm{OR}}^{\mathrm{EE}}$, unless of course we initially picked $\wh \psi_a^{\mathrm{EE}}$ to apply the one-step correction.
 
 \item[(OR)] For the odds ratio, the estimating equation approach yields:
$$
\wh\tau_{\mathrm{OR}}^{\mathrm{EE}} = \frac{\wh\psi_1^{\mathrm{EE}}}{1-\wh\psi_1^{\mathrm{EE}}} \frac{1-\wh\psi_0^{\mathrm{EE}}}{\wh\psi_0^{\mathrm{EE}}}.
$$
In contrast, the one step approach, based on initial estimators $\wh \psi_1$ and $\wh \psi_0$ yields an estimate of the form:
\begin{align*}
\wh\tau_{\mathrm{OR}}^{\mathrm{OS}} = &\frac{\wh\psi_1}{1-\wh\psi_1} \frac{1-\wh\psi_0}{\wh\psi_0} + \frac{1}{(1-\wh\psi_1)^2} \frac{1-\wh\psi_0}{\wh\psi_0}  \frac{m}{N(1-\alpha)} (\wh\psi_1-\wh\psi_1^{\mathrm{EE}}) \\
&- \frac{\wh\psi_1}{1-\wh\psi_1} \frac{1}{\wh\psi_0^2} \frac{m}{N(1-\alpha)} (\wh\psi_0-\wh\psi_0^{\mathrm{EE}}).
\end{align*}
In particular, we see that in general, $\wh\tau_{\mathrm{OR}}^{\mathrm{OS}} \neq \wh\tau_{\mathrm{OR}}^{\mathrm{EE}}$, unless of course we initially picked $\wh \psi_a^{\mathrm{EE}}$ to apply the one-step correction.
\end{enumerate}

\section{Transporting/Reweighting a causal effect under exchangeability of CATE}\label{appendix:weak transportability Identification}
\subsection{Identification under exchangeability of conditional outcome}
\paragraph{Risk Difference (RD)}
$$
\tau_{\mathrm{RD}}^{\mathrm{T}} = \mathbb{E}_{\mathrm{T}} \left[ \tau_{\mathrm{RD}}^{\mathrm{S}}(X) \right]
$$
\paragraph{Risk Ratio (RR)}
$$
\tau_{\mathrm{RR}}^{\mathrm{T}} = \frac{
\mathbb{E}_{\mathrm{T}} \left[ \tau_{\mathrm{RR}}^{\mathrm{S}}(X) \cdot \mu_{(0)}^{\mathrm{T}}(X) \right]
}{
\mathbb{E}_{\mathrm{T}} \left[ Y^{(0)} \right]
}
$$
\paragraph{Odds Ratio (OR)}
$$
\tau_{\mathrm{OR}}^{\mathrm{T}} =
\left( \frac{ \mathbb{E}_{\mathrm{T}} \left[ Y^{(0)} \right] }{ 1 - \mathbb{E}_{\mathrm{T}} \left[ Y^{(0)} \right] } \right)^{-1}
\cdot
\frac{
\mathbb{E}_{\mathrm{T}} \left[ 
\frac{ \tau_{\mathrm{OR}}^{\mathrm{S}}(X) \cdot \mu_{(0)}^{\mathrm{T}}(X) }{
1 + \tau_{\mathrm{OR}}^{\mathrm{S}}(X) \cdot \mu_{(0)}^{\mathrm{T}}(X) - \mu_{(0)}^{\mathrm{T}}(X)
}
\right]
}{
1 -
\mathbb{E}_{\mathrm{T}} \left[ 
\frac{ \tau_{\mathrm{OR}}^{\mathrm{S}}(X) \cdot \mu_{(0)}^{\mathrm{T}}(X) }{
1 + \tau_{\mathrm{OR}}^{\mathrm{S}}(X) \cdot \mu_{(0)}^{\mathrm{T}}(X) - \mu_{(0)}^{\mathrm{T}}(X)
}
\right]
}
$$
\subsection{Semiparametric efficient estimators under exchangeability of treatment effect}
\begin{proof}[Proof of Proposition \ref{prp:IF_tau_phi}] We use the same tricks as in the proof of Proposition \ref{prp:IF_psi_a_corps_du_texte}. We first notice that
$$
\psi_1^{\mathrm T} = \sum_{x \in \cX} P_{\mathrm{obs}}(X=x|S=0) \Gamma(\tau_\Phi(x),\mu_{(0)}^{\mathrm{T}}(x)),
$$
so that, with a slight abuse of notation
\begin{align*}
\IF(\psi_1^{\mathrm T}) &= \sum_{x \in \cX} \frac{1-S}{1-\alpha}(\ind\{X=x\}- P_{\mathrm{obs}}(X=x|S=0)) \Gamma(\tau_\Phi(x),\mu_{(0)}^{\mathrm{T}}(x)) \\
&+ \sum_{x \in \cX} P_{\mathrm{obs}}(X=x|S=0) \IF(\tau_\Phi(x)) \partial_1 \Gamma(\tau_\Phi(x),\mu_{(0)}^{\mathrm{T}}(x)).
\end{align*}
Using that $\tau_\Phi(x) = \Phi(\bbE[Y|A=1,S=1,X=x],\bbE[Y|A=0,S=1,X=x])$, we further find that
\begin{align*}
   \IF(\tau_\Phi(x)) &= \frac{\ind\{A=1,S=1,X=x\}}{P(A=1,S=1,X=x)}(Y-\mu_{(1)}^{\mathrm{S}}(x)) \partial_1 \Phi(\mu_{(1)}^{\mathrm{S}}(x), \mu_{(0)}^{\mathrm{S}}(x)) \\
   &+\frac{\ind\{A=0,S=1,X=x\}}{P(A=0,S=1,X=x)}(Y-\mu_{(0)}^{\mathrm{S}}(x)) \partial_0 \Phi(\mu_{(1)}^{\mathrm{S}}(x), \mu_{(0)}^{\mathrm{S}}(x))
\end{align*}
Again, we know that $$P(A=a,X=x,S=1) = \alpha \pi P(X=x|S=1) = \alpha \pi r(X) P(X=x|S=0).$$ Furthermore, since $\Gamma(\Phi(a,b),b)= a$ for all $a, b$, differentiating with respect to $a$ or $b$ yields $\partial_0 \Phi(a,b) \partial_1 \Gamma(\Phi(a,b),b)) + \partial_0 \Gamma(\Phi(a,b),b) = 0$ and $\partial_1 \Phi(a,b) \partial_1 \Gamma(\Phi(a,b),b) = 1$. Patching all of this together yields the result.
\end{proof}

\subsection{Computations of one-step estimators and estimating equation estimators under exchangeability of CATE}
\label{sec:appendix_os-ee_computations2}

\begin{ex}[Application to the usual causal measures.] We give the expression of $\wh \psi_1^{\mathrm{EE}}$ for the most usual causal measures. 
\begin{enumerate}
\item[(RD)]
For the risk difference, we find
\begin{align*}
\wh\psi_{1}^{\mathrm{EE}} &= \frac1m \sum_{S_i=0} \mu_{(0)}^{\mathrm T}(X_i) + \wh \tau_{\Phi}(X_i) \\
&+ \frac{1-\alpha}{\alpha m} \sum_{S_i = 1} \wh r(X_i)\left(\frac{A_i}{\pi}(Y_i-\wh\tau_\Phi(X_i)-\wh \mu_{(0)}^{\mathrm{S}}(X_i)) - \frac{1-A_i}{1-\pi}(Y_i-\wh\mu_{(0)}^{\mathrm{S}}(X_i))\right).
\end{align*}

\item[(RR)]
For the risk ratio, we find:
\begin{align*}
\wh\psi_{1}^{\mathrm{EE}} &= \frac1m \sum_{S_i=0} \mu_{(0)}^{\mathrm T}(X_i) \wh \tau_{\Phi}(X_i) \\
&+ \frac{1-\alpha}{\alpha m} \sum_{S_i = 1} \wh r(X_i)\left(\frac{A_i}{\pi}(Y_i-\wh \mu_{(0)}^{\mathrm{S}}(X_i)\wh\tau_\Phi(X_i)) - \frac{1-A_i}{1-\pi}(Y_i-\wh\mu_{(0)}^{\mathrm{S}}(X_i))\wh\tau_\Phi(X_i)\right).
\end{align*}
    \item[(OR)]
For the odds ratio, we find:
\begin{align*}
\wh\psi_{1}^{\mathrm{EE}} &= \frac1m \sum_{S_i=0} \frac{\mu_{(0)}^{\mathrm T}(X_i) \wh \tau_{\Phi}(X_i)}{1-\mu_{(0)}^{\mathrm T}(X_i) + \mu_{(0)}^{\mathrm T}(X_i) \wh \tau_{\Phi}(X_i)} \\
&+ \frac{1-\alpha}{\alpha m} \sum_{S_i = 1} \wh r(X_i)\biggl(\frac{A_i}{\pi}\biggl(Y_i-\frac{\mu_{(0)}^{\mathrm S}(X_i) \wh \tau_{\Phi}(X_i)}{1-\mu_{(0)}^{\mathrm S}(X_i)+\mu_{(0)}^{\mathrm S}(X_i) \wh \tau_{\Phi}(X_i)}\biggr) \\
&\qquad\qquad\qquad- \frac{1-A_i}{1-\pi}(Y_i-\wh\mu_{(0)}^{\mathrm{S}}(X_i))\frac{\wh\tau_\Phi(X_i)}{(1-\wh\mu_{(0)}^{\mathrm{S}}(X_i)+\wh\mu_{(0)}^{\mathrm{S}}(X_i)\wh\tau_\Phi(X_i))^2}\biggr).
\end{align*}
\end{enumerate}
\end{ex}

\section{Simulation}\label{appendix:simulation}
For the simulations we have implemented all estimators in Python using Scikit-Learn for our regression and classification models. All our experiments were run on a 8GB M1 Mac.

\textbf{Linear setting under \Cref{ass:strong transportability}:} we evaluate estimators under a linear response surface:
 \[\mu_s^{(a)}(V) = \beta_a^{\top} V \with \beta_1 =  (0.5,\ 1.2,\ 1.1,\ 3.3,\ -0.6)  \mand \beta_0 = (-0.2,\ -0.6,\ 0.6,\ 1.7,\ 0.3).\]
Since $\beta_0$ and $\beta_1$ remain unchanged across the source and target domains, \Cref{ass:strong transportability} is satisfied, results are depicted in \Cref{fig:linear_strong}. As expected from the linear generative process, all estimators perform well across all measures, with the transported and weighted G-formula exhibiting particularly low variance in this setting—outperforming the influence function–based estimators. 
\begin{figure}[h!]
  \centering
  \includegraphics[width=\textwidth,height=4cm]{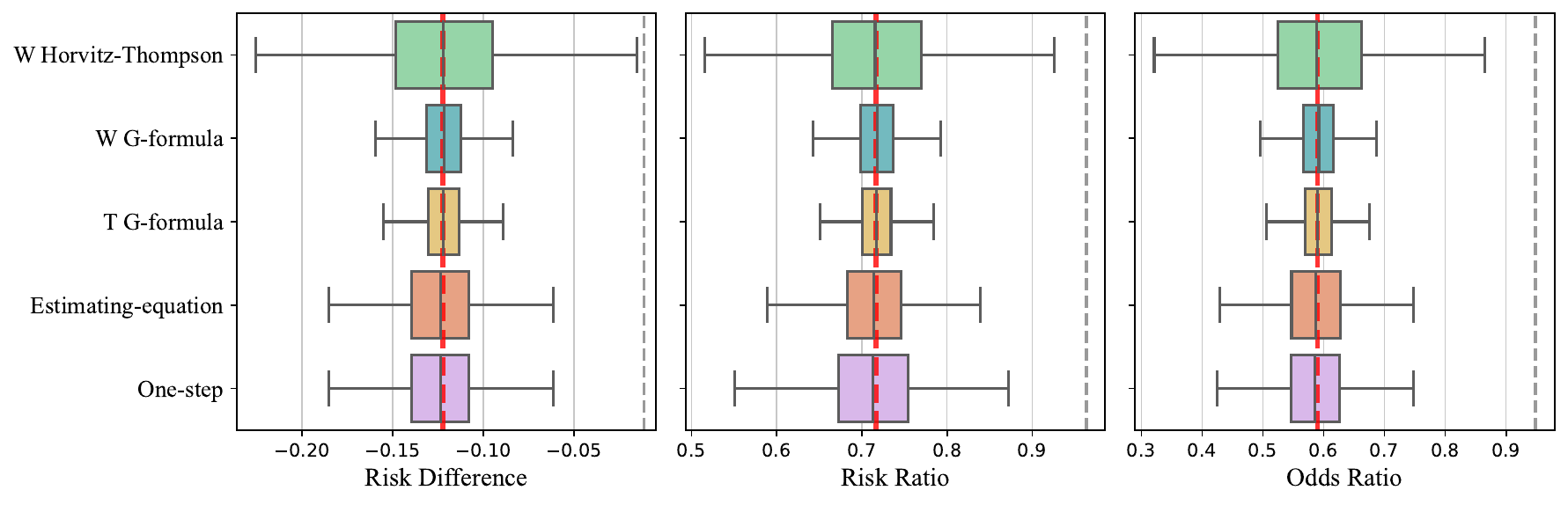}
  \caption{Comparison of estimators across different causal measures under a linear outcome model with a sample size of $N = 50{,}000$ and 3,000 repetitions.}
  \label{fig:linear_strong}
\end{figure}

\end{document}